\numberwithin{equation}{section}
\newtheorem{theorem}{Theorem}[section]
\newtheorem{lemma}[theorem]{Lemma}
\newtheorem{corollary}[theorem]{Corollary}
\theoremstyle{definition}
\newtheorem{definition}{Definition}[section]
\newcommand{\E}{\mathbb{E}}
\newcommand{\Z}{ {\mathbb Z} }
\newcommand{\N} {{\mathbb N}}
\newcommand{\C}{\mathbb{C}}
\newcommand{\Co}{\mathcal{C}}
\newcommand{\R}{\mathbb{R}}
\DeclareMathOperator{\supp}{supp}
\DeclareMathOperator{\dist}{dist}
\DeclareMathOperator{\sgn}{sign}
\DeclareMathOperator{\diam}{diam}
\newcommand{\bx}{\mathbf{x}}
\newcommand{\by}{\mathbf{y}}
\newcommand{\bv}{\mathbf{v}}
\newcommand{\bw}{\mathbf{w}}
\newcommand{\balpha}{\boldsymbol{\alpha}}
\def\E#1{\mathbb{E} \left[ #1 \right]}  
\def\g[#1,#2,#3]{\langle #1|\frac{1}{#2} |#3\rangle} 
\DeclareMathOperator{\indfct}{\rm 1}
\def\be{\begin{equation} }
\def\ee{\end{equation} }
\begin{document}
\date{\small September 18, 2008 \\ 
                     Rev. Dec. 22, 2008}

\title{Localization Bounds for Multiparticle Systems}   

\author{Michael Aizenman and Simone Warzel\footnote{Present address: Zentrum Mathematik, TU M\"unchen}\\ \small Departments of
Mathematics and Physics,\\ \small Princeton University, Princeton NJ
08544, USA.}%


\maketitle

\begin{abstract}

We consider the spectral and dynamical properties of quantum systems of $n$ particles on the lattice $\Z^d$, of arbitrary dimension, with a Hamiltonian which  in addition to the kinetic term  includes  a random potential with iid values at the lattice sites and  a finite-range interaction.    Two basic parameters of the model are  the strength of the disorder and the strength of the interparticle interaction.  
It is established here that for all $n$ there are regimes of high disorder, and/or weak enough interactions, for which the  system exhibits spectral and dynamical localization.    
The localization is expressed through bounds on the transition amplitudes,  which are uniform in time and decay exponentially in the Hausdorff distance in the configuration space. The results are derived  through the analysis of fractional moments of the $n$-particle Green function, and related bounds on the eigenfunction correlators.
\\[5ex]
\noindent
{\bf Keywords:} Random operators, 
multiparticle systems, dynamical localization, eigenfunction correlators, Anderson localization, spectral averaging.\\[1ex]
(2000 Mathematics Subject Classifiction: 47B80, 60K40)\\
\vspace*{1cm}
\end{abstract}

\newpage

\tableofcontents

\newpage

\newpage
\section{Introduction}  

\subsection{On localization in the presence of interactions}

In the context of non-interacting particle systems, or equivalently one-particle theory,  Anderson localization is a well studied phenomenon, which for various regimes of the parameter space can be established even at the level of rigorous mathematical analysis~(e.g.~\cite{CaLa90,PF,St00,ASFH,GK2,Ki07s} and references therein).  The picture is far less complete when it comes to systems of interacting particles subject to a random external potential, which generally may be expected to produce localization.   
 
Particularly perplexing is the situation where there are $n$ fermions in a region of volume $|\Lambda|$, with $|\Lambda| \to \infty$ and  $n/|\Lambda| \to \rho >0$.  It was proposed, through reasoning presented in ~\cite{BAA06}, that if the interactions are week and the mean particle separation is significantly below the localization length of the non-interacting system, then the interaction would not affect by much the
dynamical properties of the system.  

   In particular, such reasoning has lead to the suggestion that if the system is started in a configuration for which the density of particles  in one part  of the region is higher that in another then the uneven situation will persists indefinitely, assuming the Hamiltonian is time independent.   While that would be in violation of the  equipartition principle, it would be in line with the dynamical behavior of the the non-interacting system  in the regime of complete Anderson localization for the one-particle Hamiltonian (\cite{A2,ASFH,GK1}).   

Rigorous methods  are still  far from allowing one to decide whether complete localization will persist in the presence of interactions, as claimed in ~\cite{BAA06}.  
Furthermore,  the analysis of even  a fixed number of particles with short range interactions, and $|\Lambda|\to \infty$, has presented difficulties.   An important step was recently made by Chulaevsky and Suhov~\cite{ChSu1,ChSu3}  who proved the existence of spectral localization for systems of two interacting particles which are subject to highly disordered external potential.   The authors expect that the analysis of the $n=2$ case, which is based on the multiscale approach of~\cite{FrSp83,vDK89}, could be extended to any finite $n$.    In this work we approach the question using  somewhat different tools, and address also the issue of dynamical localization.   
We establish the existence of localization regimes for any finite $N$, with decay rates which are uniform in the volume.  
Curiously,  as is  indicated in the figure below,  
the bounds which are established here carry a qualitatively somewhat stronger implication for $n=2$ than for higher values of $n$.  

\subsection{Statement of the main result} 

Our goal here is to present a basic proof of localization for an arbitrary number of particles moving on a lattice of arbitrary dimension, which for convenience is taken to be $\Z^d$, in  regimes of high disorder or sufficiently weak interactions.

There are a number of ways to formulate a quantum system of particles on a lattice, which here is taken to be $\Z^d$.  We find the following convenient, but the method discussed here can be also be adapted to other formulations. 

The Hilbert space of $ n $ particles on  $\Z^d$ is the direct product ${\mathcal H}^{(n)} =  \ell^2(\mathbb{Z}^d)^n$. We take the  Hamiltonian to be an  operator of the form:  
\be \label{Ham} 
H^{(n)}(\omega) \ = \ \sum_{j=1}^n   [-\Delta_j  + \lambda \, V(x_j; \omega) ] \ +\   {\mathcal U}(\bx;\balpha)  \, , 
\ee 
with: $ \Delta $  the discrete Laplacian (second difference operator) in $\Z^d$, $V(x,\omega)$  a random potential (described below), and ${\mathcal U}$ a finite range interaction   which is given in terms of functions of the occupation numbers
\be 
{\mathcal U}(\bx;\balpha)  \ =\ \sum_{k=2}^{p}  \alpha_k   \sum_{
\substack{ 
A\subset \Z^d:  |A|=k \\ 
{\rm diam}A \le \ell_U
}  }   U_A( {\mathcal N}_A(\bx))  
\ee 
where 
\be 
{\mathcal N}_A(\bx) \ = \ \{N_{u}(\bx)\}_{u\in A} \, , 
\ee 
with $N_{u}(\bx) = \sum_{j=1}^n \delta_{x_j, u}$  the number of particles the configuration $ \bx \in (\Z^d)^n $ 
has at $u\in \Z^d$.  It is to be understood that $U_A (\bx  )=0$ unless $\prod_{u\in A} N_u(\bx) \neq 0$.   
    
The family of Hamiltonians   is parametrized by  $ \lambda \in \mathbb{R}_+ $, which controls the strength of the disorder, and  $ \balpha := (\alpha_2 , \dots , \alpha_p) \in \mathbb{R}^{p-1} $ which is the strength of the interaction.  Obviously,  the value of $\alpha_k$  is of relevance for $H^{(n)}(\omega)$ only for $ n \geq k $.   
It will be assumed throughout the paper that: 
\begin{itemize}
	\item[{\bf A1}] The random potential is given in terms of a collection of iid random variables, $ \left\{V(x; \omega)\right\}_{x \in \mathbb{Z}^d} $, with 
 \be \mathbb{E}\left[ \, \exp( t\,  |V(0)| )\, \right] < \infty \,  \qquad \mbox{ for all $t\in \R$}, 
\ee
whose probability distribution is of bounded density, i.e.,  
\be 
  \mathbb{P}(V(x) \in d\xi ) = \varrho(\xi) \, d\xi \    \qquad \mbox{with} \quad \varrho \in L^\infty  \, , 
  \ee
satisfying:
\be \label{s_reg}
 \varrho(v) \  \leq \  K \int_{|v'|\le E_0} \varrho(v- u)\,  du\,  , \qquad \mbox{ for all $u\in \R$}, 
 \ee
at some  $E_o<\infty $ and $K<\infty$.  
\item [{\bf A2}]  
	The interaction terms are bounded, with $ |U_A (\boldsymbol{n}  )|\, \le \ 1$ for all $A\subset \Z^d$ and all $\boldsymbol{n} \in (\Z^d)^{|A|} $, and translational invariant, i.e., $ U_A = U_{A'} $ if $ A' $ is a translate of $ A$.
\end{itemize} 	

The above assumptions could be relaxed.  In particular, 
translation invariance can be replaced by suitable translation invariant  bounds, and, as in the case of one particle localization theory,  the absolute continuity of the measure  and \eqref{s_reg} can be replaced by a local power-law concentration bound such as the following condition: 
\be \label{s_reg2}
 \mathbb{P}( V(x) \in [v+\varepsilon,v-\varepsilon   ] )  \  \leq \  \varepsilon^\tau   K_\tau \,   \mathbb{P}( V(x)\in [v+E_0,v-E_0   ] ) \ .
 \ee
for some  $\tau \in [0,1)$ and all $0<\varepsilon \le 1$.  
  Under such reduced assumption,   for which the case $\tau =1$ corresponds to \eqref{s_reg}, the fractional moment bounds presented below would be limited to $0 < s < \tau$ (and  minor adjustment will be required in the argument, cf. \cite{ASFH}), but that would not adversely affect the main results.    \\

Our main result is naturally  stated in terms of the eigenfunction correlator  which is introduced in Section~\ref{sec:corr}.  However, the statement can also be presented as follows. 
\begin{theorem} \label{thm:main}  
Under the assumptions {\bf A1} - {\bf A2} , for each $p$ and $ n \in \N$ there is an open set in the parameter space, $\Gamma^{(p)}_n \subset \R_+\times \R^{p-1}$ such that:
\begin{enumerate} 
\item For all $(\lambda, {\bf \balpha}) \in \Gamma^{(p)}_n$ and up to $ n $ particles, i.e., $ k \in \{ 1, \dots , n \} $, each operator $H^{(k)}(\omega)$ has almost surely only pure point spectrum, with the corresponding eigenfuntions being exponentially localized in the sense of  $\rm{dist}_{\mathcal H}(\bx, \bx_0)$, as is explained below.  
\item 
Furthermore, for all $(\lambda, {\bf \balpha}) \in \Gamma^{(p)}_n$, all $ k \in \{ 1, \dots , n \} $, and all $\bx,\, \by \in (\Z^d)^k$:
\begin{equation} \label{loc_f}
\E{\sup_{  \|f\|_\infty \le 1}\left|\langle \delta^{(k)}_\bx \, ,  \, f(H^{(k)}) \, \delta^{(k)} _\by \rangle \right|}   \ \le \  A \, 
e^{-  \rm{dist}_{\mathcal H}(\bx,\by) / \xi} \,  ,
\end{equation} 
where 
\begin{equation} \label{eq:distH}
\dist_{\mathcal H}(\bx, \by) 
\ := \ \max\left\{ \max_{1\le i \le k} \dist(\{x_i\},\, Y), \   \max_{1\le i \le k} \dist(\{y_i\},\, X)
\right\} \, ,  
\end{equation} 
with $\delta_{\bx}^{(k)}\, \big[\delta_{\by}^{(k)}\big]$  the  $k$-particle position eigenstates of corresponding to $\bx \, [\by]$,  and the constants $A$ and $\xi$ depending on $ n, p$  but not on $k \, (\le n)$. 
\item The localization region 
$\Gamma^{(p)}_n $ includes regimes of strong disorder and of weak interactions, i.e.,
   \begin{enumerate} 
   \item      for each $\balpha \in \R^{p-1}$ there is $\lambda(\balpha )$ such that   $\Gamma^{(p)}_n  \supset (\lambda(\balpha), \infty) \times \{\balpha \}$.
  \item for any $\lambda \in \Gamma^{(1)}_1$, i.e. one for which the one-particle Hamiltonian exhibits complete localization, there are  $\balpha(\lambda)_j > 0$
 , $j=\{1,...,p\}$, 
  such that $\Gamma^{(p)}_n$ includes all $(\lambda, \balpha') $ for which 
  $|\balpha'| \le  |\balpha (\lambda)|$ componentwise.  
    \end{enumerate} 
  \end{enumerate} 
\end{theorem}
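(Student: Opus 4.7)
The strategy is to deduce the dynamical bound~\eqref{loc_f}, and simultaneously pure-point spectrum with exponentially localized eigenfunctions, from exponential decay of fractional moments of the $k$-particle Green function
\be
G^{(k)}(\bx, \by; z) \ := \ \langle \delta^{(k)}_\bx, (H^{(k)} - z)^{-1} \delta^{(k)}_\by\rangle,
\ee
by way of the eigenfunction correlator formalism of Section~\ref{sec:corr}. This passage parallels the one-particle theory of~\cite{ASFH}, and granted decay in $\dist_{\mathcal H}$ both parts~1 and~2 of the theorem, and thence part~3, follow routinely.

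The decay itself would be produced in two steps. First, I would establish a uniform a priori bound $\E{|G^{(k)}(\bx,\by;z)|^s} \le C_s/\lambda^s$ for $s \in (0,1)$, uniform in volume and in $z \in \C \setminus \R$, via Aizenman--Molchanov-type spectral averaging: assumption~{\bf A1}, specifically the regularity condition~\eqref{s_reg}, permits integration over a single $V(u;\omega)$ while holding the rest of the disorder fixed, and the kinetic term together with the $V$-independent bounded interaction $\mathcal U$ acts only as an inert rank-one modification in this averaging. Second, I would iterate the geometric resolvent equation in configuration space; a single step of expansion around the boundary of a box in $(\Z^d)^n$ produces a decay factor controlled by the a priori bound times a combinatorial surface volume, which can be made strictly sub-unit in either of the two target regimes.

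For the high-disorder regime 3(a), sufficiently large $\lambda$ directly beats the combinatorial factor and the iteration closes uniformly in $k \le n$, with no essential difference from the one-particle case. The weak-interaction regime 3(b) would proceed by induction on the number of particles $k$. At $\balpha = 0$ the operator $H_0^{(k)} = \sum_{j=1}^k h_j$ factorizes, its eigenfunctions are products of one-particle eigenfunctions, and the localization assumed at $\lambda \in \Gamma^{(1)}_1$ immediately yields the desired eigenfunction-correlator decay for $H_0^{(k)}$. Turning on $\mathcal U$ introduces bounded, finite-range perturbations that can be expanded via the second resolvent identity against $(H_0^{(k)}-z)^{-1}$, and for $|\balpha|$ small enough (relative to the decay rate inherited from step $k-1$) the series converges in fractional-moment norm, giving the inductive step $k \mapsto k+1$.

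The main obstacle is geometric. Because $\dist_{\mathcal H}(\bx,\by)$ is the Hausdorff distance between the clouds $X = \{x_i\}$ and $Y = \{y_j\}$, its largeness guarantees only that \emph{some} particle of one configuration is far from \emph{every} particle of the other, not that the corresponding coordinates are pairwise separated. The geometric resolvent step must therefore first identify an isolated particle whose distance to the opposite cloud exceeds $\ell_U$, then use the finite range of $\mathcal U$ to split off a single-coordinate block that can be analyzed by the one-particle theory, leaving a residual $(k-1)$-particle block to which the inductive hypothesis is applied. Organizing this decomposition so that the output decay rate $\xi$ and prefactor $A$ are uniform in $k \le n$, rather than degrading at each inductive level, is the delicate point of the proof.
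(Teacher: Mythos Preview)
Your overall architecture---induction on the particle number, an a~priori fractional-moment bound, and passage to dynamical and spectral localization via eigenfunction correlators---matches the paper.  There is, however, a genuine gap in the geometric decomposition you propose, and a missing ingredient.

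\textbf{The decomposition does not decouple the interaction.}  You propose to locate a particle $x_i$ whose distance to the \emph{opposite} cloud $Y$ exceeds $\ell_U$, and then ``split off a single-coordinate block.''  But the interaction $\mathcal U(\bx;\balpha)$ couples $x_i$ to the other particles of its \emph{own} configuration $\bx$, not to $\by$.  Large $\dist_{\mathcal H}(\bx,\by)$ guarantees only that some $x_i$ is far from $Y$; it says nothing about the separation of $x_i$ from the rest of $X$, so the finite range of $\mathcal U$ gives you no decoupling here.  The relevant quantity is the \emph{splitting width} $\ell(\bx)=\max_{J\dot\cup K}\dist(X_J,X_K)$ of the configuration itself.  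The paper's inductive step (Theorem~\ref{thm:apriori}) works this way: if $\ell(\bx)$ is large, split $\bx$ into two subclusters $\bx_J,\bx_K$, remove the inter-cluster interaction (a bounded, short-range perturbation supported far from $\bx$), and apply the $(n-1)$-particle hypothesis to each piece via Theorem~\ref{thm:noninteracting}.  This yields decay in $\min\{\dist_{\mathcal H}(\bx,\by),\,\max(\ell(\bx),\ell(\by))\}$.

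\textbf{Clustered configurations require a separate rescaling argument.}  When both $\bx$ and $\by$ have small diameter (so the previous step gives nothing), they behave like fat one-particle sites, but you still cannot simply ``iterate the geometric resolvent'' as in the one-particle case: the potential $\sum_j V(x_j)$ at a configuration shares random variables with arbitrarily distant configurations, so successive steps are not independent.  The paper restricts attention to a finite-volume quantity $B_s^{(n)}(L)$ summing over \emph{clustered} configurations only, proves a rescaling inequality $B_s^{(n)}(2L)\lesssim \lambda^{-s}B_s^{(n)}(L)^2+\text{(error)}$ (Theorem~\ref{thm:renorm}), and bootstraps from initial smallness (Lemma~\ref{lem:recS}).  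The factorization in that step (Lemma~\ref{lem:ren}) works precisely because clustered $\bx$ and $\by$ live in \emph{disjoint} regions of $\Z^d$, so the decoupling sites $u',v'$ used in the conditioning lie outside both boxes.  Your plan omits this entire component; without it the induction does not close, in either the strong-disorder or the weak-interaction regime.

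A secondary issue: for the non-interacting baseline $H_0^{(k)}$ you assert that localization of $H^{(1)}$ ``immediately'' yields fractional-moment decay for $G_0^{(k)}$.  This is not automatic, because $G_0^{(k)}$ is an energy convolution of singular factors; the paper detours through the eigenfunction correlator product formula~\eqref{eq:Qfactor} and then back to fractional moments (Theorem~\ref{thm:noninteracting}).
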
 

The bound \eqref{loc_f}, applied to  $f(H)=e^{-itH}$, implies  dynamical localization, and through that also the spectral assertion which is made in Theorem~\ref{thm:main}.  The latter is explained more explicitly in Appendix~\ref{app:spec}.   

One may note that the  distance between configuration, $\dist_{\mathcal H}(\bx,\by)$, which appears above  corresponds to the Hausdorff distance between the sets $ X=\bigcup_{i=1}^{n} \{x_i\}$ and $Y=\bigcup_{i=1}^{n} \{y_i\}$, seen as subsets of $\Z^d$  with its Euclidean metric.  
The  exponential bound presented above deserves a number of further comments.   

\subsection{Remarks on the rate of exponential decay}

For systems of non-interacting particles, i.e., the case $\balpha ={\bf 0}$,  the one-particle localization theory allows 
to prove that  in regimes of localization  a bound like 
 \eqref{loc_f} holds with a stronger decay rate.  For the stronger bound (which can be established for regimes of strong enough disorder, or extremal energies, and in one dimension the full range of energies)   
the relevant distance is not   $\rm{dist}_H(\bx,\by) $ but:
\begin{equation} 
\rm{dist}(\bx,\by) \ := \  \sum_{j=1}^n |x_j-y_j| \, . 
\end{equation}   

One could ask whether exponential decay in  $\rm{dist}(\bx,\by)$  persists also in the presence of interactions.    
Upon reflection, the general answer to this should be negative:
The  $n$ particle  Hamiltonian \eqref{Ham} clearly commutes with elements of the permutation group $S_n$.   In the non-interacting case (i.e.,  $\balpha ={\bf 0}$)  its spectrum is degenerate, 
$H^{(n)}(\omega)$ being the sum of $n$ commuting unitarily equivalent operators, each affecting only one particle. 
However, since generically the interactions couple the different permutation-related degenerate eigenstates of the non-interacting systems, it is natural to expect that for  $\balpha \not\equiv{\bf 0}$, 
the operator $H^{(n)}(\omega)$  will have no  eigenstates in which the probability amplitude is essentially restricted to the vicinity of a particular $n$ particle configuration.  
Instead, localization may still be manifested in the existence of  eigenstates which decay in the sense of the symmetrized configuration distance 
\begin{equation} 
\rm{dist}_S(\bx,\by) \ := \ \min_{\pi \in S_n}  \sum_{j=1}^n |x_j-y_{\pi j}| \, , 
\end{equation} 
with $S_n$ is the permutation group of the  $n$ elements $\{1, ...,n\}$.  

The dynamical version of this eigenfunction picture is that for very   large $t$   a  state  of the form $e^{-itH^{(n)}} \delta^{(n)}_{\bx}$, which has evolved from the initially localized state at $\bx$,  would have non-negligible amplitude not only in the vicinity of $\bx$ but also in the vicinity of the permuted configurations $\pi \bx $.     
 
The above considerations are of course superfluous in case one is interested only in the fully symmetric or antisymmetric sector, where the initial states cannot be localized in the stronger sense and where only decay rates which are symmetric under permutations are of relevance.  
However,  the decay  rate $\exp(-\dist_{\mathcal{H}}(\bx,\by)/\xi)$ is still qualitatively weaker than  $\exp(-\rm{dist}_S(\bx,\by)/\xi')$.  In particular, for configurations which include some tight subclusters  with multiple occupancy our bounds do not rule out the possibility, which we do not expect to be realized, of the excess charges being able to hop freely between the different subclusters, as depicted in Fig. 1.   Nevertheless, the bounds allow to conclude the main features of localization.   Some further explicit comparison between different distances are given in  Appendix~\ref{app:dist}. \\

\begin{figure}[h]\label{fig1}
\begin{center}
\includegraphics[width=.8\textwidth] {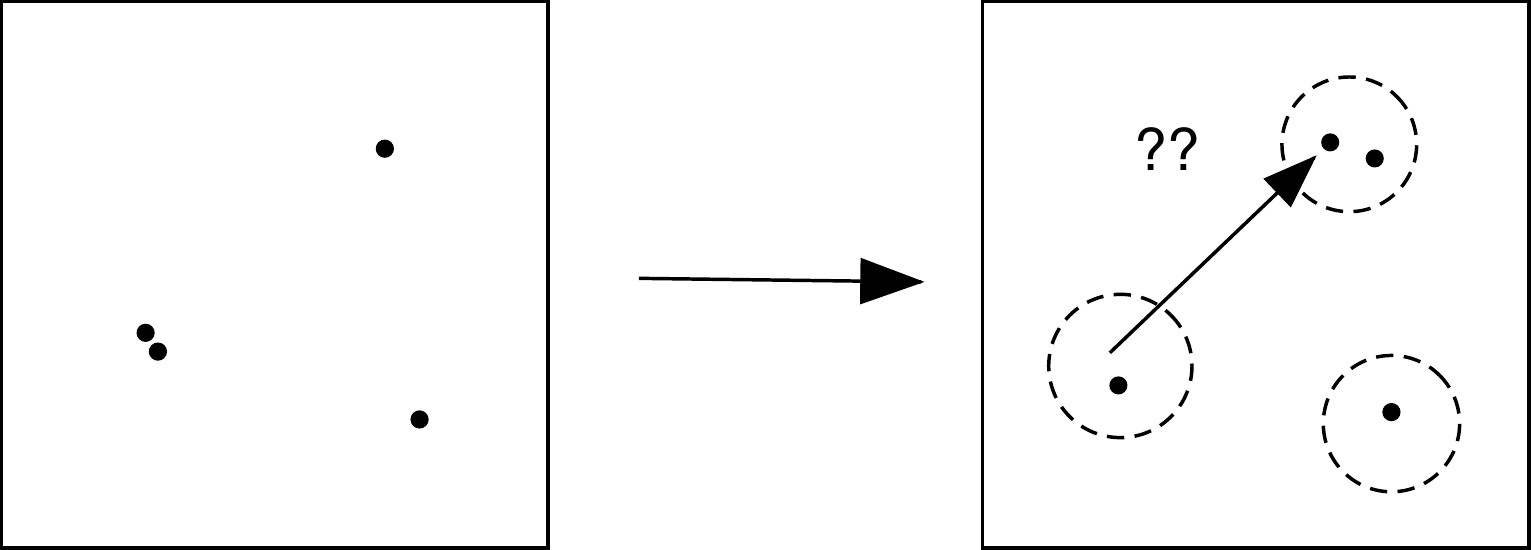}
\caption{A schematic depiction of our localization bounds: Starting from the configuration depicted on the left, at any later time except for events of very small probability  the collection of locations which are near an occupied site  does not change by much.  However,  the Hausdorff metric bounds allow for the possibility that if the initial configuration had two or more particle in sufficient proximity [within the localization distance]  then such `excess' may transfer among the occupied regions.}  
\end{center}
\end{figure}

\subsection{Comments on the proof} 

It would be natural to ask why is there a need for a separate proof of localization for the $n$-particle system, since the configurations of the system can also be viewed as describing a single particle with $nd$-dimensional position vector, $\bx = (x_1, ..., x_n)$.  Regarded from this perspective, the Hamiltonian \eqref{Ham} may at first appear to have the usual structure for which localization is well understood, consisting of the usual  kinetic term, a potential function $U(\bx)$, and a random potential $ \sum_j  V (x_j; \omega)$.  
The answer is that the values which the random potential $V(\bx; \omega) $ assumes at different positions  in the $nd$-dimensional space are not independent.  Instead, they are correlated over arbitrary distances, and the number of its independent degrees of freedom ($L^d$), for the systems in a box of linear size $L$,  scales as only a fractional power of the number of its configurations ($L^{nd}$).   From this perspective,   the randomness is much more limited than what is found in the well understood one-particle situation.    

The  proof of Theorem~\ref{thm:main} is organized as induction on $n$, and  is guided by the following picture:  once localization is established for less than $n$-particles, one may expect that throughout most of the volume the time evolution of $n$ particles does not disperse, except possibly when the particles are all close to each other and move as some $n$-particle cloud.   This hypothetical mode, with the $n$ particles particles forming a quasi-particle, is ruled out using one-particle techniques, which are modified to show that such a  possibility does not occur at weak enough interactions.   

Technically, our proof makes use of the Green function fractional-moment techniques, and in particular the  finite-volume criteria of \cite{ASFH}.   However, in addition to adapting a number of ``off the shelf'' one-particle arguments we need to show that exponential decay of the fractional moment of the Green function for lower numbers of particles implies:  {\em i.\/}  exponential decay for systems composed of non-interacting subsystems, and  {\em ii.\/}  for the interactive system  - exponential decay  in a distance defined relative to the set of clustered $n$-particle configurations.   These terms are explained more explicitly in the following sections. 

In Sections~\ref{sec:BGF}--\ref{sec:ILSS} we present a number of relations which  are utilized in the derivation of Theorem~\ref{thm:main}.   These are then strung to a proof in Section~\ref{sec:proof}.

\section{Finiteness of the Green function's fractional moments}\label{sec:BGF}

The  proof of localization proceeds through exponentially decaying estimates for  the Green function $ G_\Omega(\bx,\by;z) $ of finite volume versions of   $H^{(n)}(\omega)$ evaluated at energies $ z $ within the spectrum of the infinite volume operator. 
Our first step is to establish that for $s<1$ each $ |G_\Omega(\bx,\by;z)|^s $  is of finite   conditional expectation value, regardless of $z\in \C$,  when averaged over one or two potential variables -- provided each of the configurations  ($\bx$ and $\by$)  includes at least one  averaged site.   
The basic strategy is familiar from the theory of one-particle localization.  However the proofs need to be revisited here since   we are now dealing with random potentials whose values for different configurations are no longer independent, and in certain ways are highly correlated.    

It may be noted that  the celebrated Wegner estimate is not being explicitly used in the Fractional Moment Analysis, its role being taken by the  finiteness of the Green function's fractional moments.   However, in view of the Wegner's estimate's  intrinsic interest and conceptual appeal and we comment on it in Appendix~\ref{Sect:Wegner}.   \\

We shall use the following notation: 
The $ n $-particle Green function associated with some region $ \Omega \subseteq \mathbb{Z}^d $ and $ z \in \mathbb{C}^+ $, is 
\begin{equation}
	G_\Omega(\bx,\by;z) \equiv G_\Omega^{(n)}(\bx,\by;z) := \left\langle \delta^{(n)}_\bx , \left( H^{(n)}_\Omega - z \right)^{-1} \delta^{(n)}_\by \right\rangle \, . 
\end{equation}
where 
$ H^{(n)}_\Omega(\omega) $ is  the (finite-volume) operator  obtained by resticting \eqref{Ham} to the Hilbert space 
$ \mathcal{H}^{(n)}_\Omega :=  \ell^2(\Omega)^n $  (with the default choice of  Dirichlet boundary conditions).
The vectors $ \delta^{(n)}_\bx , \delta^{(n)}_\by \in  \mathcal{H}^{(n)}_\Omega $ correspond to localized states, i.e. $ \langle \delta^{(n)}_\bx , \psi \rangle = \psi(\bx) $, and are parametrized by 
configurations $ \bx = (x_1, \dots, x_n ) $ of $ n $-particles.  When clear from the context we shall drop the superscript $ (n) $ at our convenience.

The set of  configurations with all particles in $ \Omega $ is denoted by $ \mathcal{C}^{(n)}(\Omega) := \Omega^n $.
Also:
\begin{itemize}
\item
For a given set $ S \subset \Omega $, we denote by $ \mathcal{C}^{(n)}(\Omega;S) $ the set of $ n $-particle configurations which have  at least one particle in $ S $. 
In case $ S = \{x\} $ the set will also be denoted as $ \mathcal{C}^{(n)}(\Omega;x) $.

\item  We denote by $ \mathcal{C}^{(n)}_r(\Omega) := \{ \bx \in \Omega^n \, | \, \diam( \bx ) \leq r \} $ 
 the set of configurations with diameter less or equal to $ r $, 
the diameter of a configuration being defined as $ \diam(\bx) \ := \ \max_{j,k}\,   |x_j - x_k|  $. 
\end{itemize}

\begin{theorem}\label{prop:frac}
For any $ s \in (0,1) $ there exists $ C_s < \infty $ such that for   
any $ \Omega  \subseteq \mathbb{Z}^{d}$, any two (not necessarily distinct) sites $ u_1,u_2 \in \Omega $ and any pair of configurations  
$ \bx \in\mathcal{C}^{(n)}(\Omega;u_1) $ and $ \by \in  \mathcal{C}^{(n)}(\Omega;u_2) $, the following bound holds
\begin{equation}\label{eq:frac}
	\mathbb{E}\left( \left| G^{(n)}_\Omega(\bx,\by;z)\right|^s \, \Big| \, \left\{ V(v) \right\}_{v\not\in \{u_1,u_2\}} \right) \leq C_s \, \frac{  (K E_0)^\# }{(|\lambda| E_0)^s}  
\end{equation}
for all $ z \in \mathbb{C}^+ $ and $ \lambda \neq 0 $, with $\#=2$ in case $u_1 \neq u_2$, and $\#=1$ otherwise. 
\end{theorem}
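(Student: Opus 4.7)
My plan is to adapt the Aizenman--Molchanov fractional moment strategy from one-particle Anderson localization to the $n$-particle setting, invoking the standard spectral averaging lemma for bounded dissipative operators: for $M$ with $\Im M \leq 0$, unit vectors $\phi, \phi'$, and $s \in (0,1)$, $\int_I |\langle \phi, (M + t)^{-1} \phi' \rangle|^s \, dt \leq C_s |I|^{1-s}$ uniformly in the interval $I \subset \mathbb{R}$. Combining this bound on intervals of length $2 E_0$ with hypothesis \eqref{s_reg} in the form $\int f(v) \varrho(v)\,dv \leq K \int \varrho(w) \int_{|u| \leq E_0} f(w+u)\,du\,dw$ and the change of variable $t = \lambda v$ will yield the basic one-variable bound $C_s K E_0 / (|\lambda| E_0)^s$ for averaging over a single potential variable. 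The key structural observation I want to exploit is that although the $V(u)$-dependence of $H^{(n)}_\Omega$ enters through the occupation operator $N_u$ rather than a rank-one projection, on the subspace of configurations having at least one particle at $u$ the operator $N_u$ has spectrum in $\{1,\ldots,n\}$ and is therefore invertible with $N_u \geq I$; this makes the problem reducible, up to a conjugation by $N_u^{1/2}$, to the one-parameter rank-one-like setting covered by the lemma above.

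For the single-site case $u_1 = u_2 = u$, I set $v = V(u)$, let $\mathcal{R} := \range(N_u) \subset \mathcal{H}^{(n)}_\Omega$, and write $H^{(n)}_\Omega - z$ in block form relative to $\mathcal{R} \oplus \mathcal{R}^\perp$: the only $v$-dependent part is the block-diagonal $\lambda v N_u$, because the off-diagonal coupling is purely kinetic and the remaining potential and interaction terms depend only on $\{V(v')\}_{v' \neq u}$. The Schur complement formula then yields $G^{(n)}_\Omega(\bx, \by; z) = \langle \delta_\bx, (\tilde M + \lambda v N)^{-1} \delta_\by \rangle$, where $N = N_u|_\mathcal{R}$ is invertible with $N \geq I$ and $\tilde M$ is a bounded dissipative operator on $\mathcal{R}$ ($\Im \tilde M \leq 0$ since $\Im (H-z)^{-1} \geq 0$ for $\Im z > 0$) independent of $v$. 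Conjugation by $N^{1/2}$ rewrites this as $G = (N_u(\bx) N_u(\by))^{-1/2} \langle\delta_\bx, (\hat M + \lambda v)^{-1} \delta_\by\rangle$ with $\hat M = N^{-1/2} \tilde M N^{-1/2}$ still dissipative and $\delta_\bx, \delta_\by$ unit vectors, to which the one-variable bound above applies to produce the $\# = 1$ estimate.

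For $u_1 \neq u_2$ I would iterate the single-site argument, first averaging over $V(u_1)$ at fixed $V(u_2)$ and then over $V(u_2)$. A joint Schur complement on the subspace $\mathcal R$ spanned by configurations with at least one particle at $u_1$ or $u_2$ expresses $G$ as $\langle \delta_\bx, (\tilde M + \lambda V(u_1) N_1 + \lambda V(u_2) N_2)^{-1}\delta_\by\rangle$ on $\mathcal R$, with $\tilde M$ dissipative and $N_1 + N_2 \geq I$. If $N_{u_1}(\by) \geq 1$, both $\delta_\bx, \delta_\by$ lie in $\range(N_1|_\mathcal R)$ and the previous paragraph applies verbatim with $T := \tilde M + \lambda V(u_2) N_2$ in place of $\tilde M$ (still dissipative), giving a $V(u_1)$-integrated bound of order $K E_0/(|\lambda| E_0)^s$ uniformly in $V(u_2)$; a symmetric step in $V(u_2)$ yields the second $K E_0$ factor. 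The delicate case is $N_{u_1}(\by) = 0$, which I handle by a further Schur complement within $\mathcal R$ using the splitting $\range(N_1|_\mathcal R) \oplus \ker(N_1|_\mathcal R)$: this expresses $G$ as an off-diagonal matrix element whose transfer vector expands in the configuration basis into a finite linear combination of Green function elements of the $n$-particle Hamiltonian restricted to $(\Omega \setminus \{u_1\})^n$. The subadditivity $|\sum_j c_j|^s \leq \sum_j |c_j|^s$ for $s \in (0,1)$ then permits term-by-term application of the $V(u_1)$-averaging bound, and the resulting $V(u_2)$-dependent coefficients are themselves Green function matrix elements of the depleted system with both arguments having a particle at $u_2$, to which the single-site case of the present theorem applies (at site $u_2$, on region $\Omega \setminus \{u_1\}$) to yield the second $K E_0$ factor. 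The main obstacle is to make sure that this off-diagonal step does not introduce spurious factors diverging as $\Im z \to 0$; this is avoided precisely because each expansion coefficient is itself a Green function matrix element amenable to the same fractional moment estimate, rather than an uncontrolled resolvent norm.
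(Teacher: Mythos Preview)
Your single-site argument ($u_1=u_2=u$) is correct and is essentially equivalent to the paper's: both reduce to a one-parameter weak-$L^1$ bound after observing that $\delta_\bx,\delta_\by$ are eigenvectors of $N_u$ with eigenvalue $\ge 1$. The paper phrases this via the weak-$L^1$ estimate and layer-cake integration rather than Schur complement plus conjugation by $N^{1/2}$, but the content is the same.

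The gap is in your ``delicate case'' $u_1\neq u_2$, $N_{u_1}(\by)=0$. The transfer vector you obtain from the off-diagonal Schur formula is \emph{not} a finite linear combination in the configuration basis. Concretely, by the quotient property of Schur complements your nested reduction is equivalent to Schur-complementing directly onto $\range N_{u_1}$ in the full Hilbert space, where the complement is $\ell^2((\Omega\setminus\{u_1\})^n)$ and the coupling $B$ is the kinetic hopping. The off-diagonal formula reads
\[
G_\Omega(\bx,\by;z)=-\sum_{\bw\in\range N_{u_1}}\langle\delta_\bx,S_1^{-1}\delta_\bw\rangle\sum_{\bw'}B_{\bw,\bw'}\,G_{\Omega\setminus\{u_1\}}(\bw',\by;z).
\]
For each $\bw$ there are only $O(dn)$ neighbours $\bw'$, but $\bw$ ranges over all configurations with exactly one particle at $u_1$ and the remaining $n-1$ particles anywhere in $\Omega$: that is $n|\Omega|^{n-1}$ terms. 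After applying $|{\sum}|^s\le\sum|\cdot|^s$ and the one-variable bound to each $\langle\delta_\bx,S_1^{-1}\delta_\bw\rangle$, every term contributes a constant of order $KE_0/(|\lambda|E_0)^s$, so the sum is of order $|\Omega|^{n-1}$, destroying the required uniformity in $\Omega$. Moreover the surviving factors $G_{\Omega\setminus\{u_1\}}(\bw',\by)$ do \emph{not} in general have a particle of $\bw'$ at $u_2$, so your proposed second averaging step does not apply to them either.

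The paper avoids iteration entirely by invoking a genuinely two-variable weak-$L^1$ estimate (their equation~(2.9), from \cite{AENSS}): for $N,M\ge 0$ self-adjoint and $K$ maximally dissipative,
\[
\big|\{(\xi,\eta)\in[-1,1]^2:\ |\langle\phi,\sqrt{N}(\xi N+\eta M-K)^{-1}\sqrt{M}\,\psi\rangle|>t\}\big|\le C/t.
\]
Applied with $N=N_{u_1}$, $M=N_{u_2}$ and the double-sampling bound, this handles both averaging variables simultaneously and yields the uniform estimate in one stroke. This two-parameter estimate is the missing ingredient in your iteration; without it (or an equivalent decoupling device), the delicate case does not close.
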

For the proof, let us note that in its dependence on the single-site random variables $V(x)$ the Hamiltonian is of the form
\begin{equation}\label{eq:HN}
	H_\Omega(\omega)=  A + \lambda \sum_{u \in \Omega } V(u; \omega) \, N_u \, ,
\end{equation}
where $ N_u $ is the number operator, $ \left(N_u \psi \right)(\bx) := \sum_{k=1}^n \delta_{x_k,u} \, \psi(\bx)  $, which 
counts the number of particles on the site $ u \in \mathbb{Z}^d $. 

In analyzing averages over the potential  variables we shall employ the following \emph {double sampling bound}, which is the dual form of the regularity assumption {\bf A1}, Eq.~\eqref{s_reg}.  
Under that assumption, for  any non-negative function $h$ of one of the single potential parameters $V\equiv V(u)$, for some $u\in \Z^d$:
\be \label{2_av}
\int_\R h(V) \,\rho(V) \, d V \ \le \ K E_0 \int_\R \int_{|V'|\le E_0 } h(V + V') \  \frac{dV'}{E_0} \ \rho(V)\,  d V \,  \, .   
\ee

\begin{proof}[Proof of Theorem~\ref{prop:frac}]
We shall estimate the conditional expectation in \eqref{eq:frac} with the help of the double sampling bound \eqref{2_av} , applied to the pair of variables $V(u_j)$  (or a single one in case they coincide).  
According to that, it suffices to estimate just the integral over the variable(s) $V'(u_j)$ 
of the fractional moment of 
\be \label{wideG}
 {G}'_\Omega(\bx,\by;z)  \ := \  \langle \delta_{\bx}\, , \big(\, H +   \sum_{w \in \{u_1,u_2\}} V'(w) N_w 
	- z \big)^{-1} \delta_{\by} \rangle \, ,
\ee

At this point, a useful tool is the following weak $L^1 $-estimate which forms a straightforward extension of \cite[Prop.~3.2]{AENSS}: 
For any pair of normalized vectors $ \phi , \psi $ in some Hilbert space, any pair of self adjoint operators with $ N, M \ge 0 $, and a maximally dissipative operator $ K $: 
\begin{equation}\label{eq:AENSS}
	\int_{[-1,1]^2}  \indfct \left[  \left| \left\langle \phi , 
	\sqrt{N} \left( \xi \, N+ \eta \, M   -  K \right)^{-1} 
	\sqrt{M}  
	\psi \right\rangle \right| > t \right]   \, d\xi \, d\eta 
		\leq \frac{C}{t}  \, ,
\end{equation}
for all $ t > 0 $ with some (universal) constant $ C < \infty $, where $ \indfct[\dots] $ denotes the indicator function.  A similar bound holds for the one-variable version of ~\eqref{eq:AENSS}.

Applying \eqref{eq:AENSS}   to the expression in \eqref{wideG}, and noting that $ \delta_\bx $ and $ \delta_\by $ are eigenvectors of $ N_{u_1} $ and $ N_{u_2}$ with eigenvalues greater or equal to one, one gets:
\begin{eqnarray}
W(t) &:=& E_0^{-2}\left| \left \{ \big( V'(u_1),V'(u_2)\big) \in [-E_0,E_0]^2   \ :  
\left| 
     {G}'_\Omega(\bx,\by;z)  
      \right|  \ge t \right \}  \right|  \  \nonumber \\ 
	  &\le &  \min \{ 4, \frac{C }{|\lambda|  E_0 \  t}  \}
\end{eqnarray}
where $W(t)$ is  introduced just for the next formula.   To estimate the corresponding integral of  the kernel's fractional moment, one may use the  Stieltjes integral expression:   
\begin{eqnarray} 
	\int_{[-E_0,E_0]^2}     
	      \left| {G}'_\Omega(\bx,\by;z)  \right|^s  \frac{dV'(u_1)}{E_0} \frac{ dV'(u_2)}{E_0}	
	& = &  \int_0^\infty 
	W(t)  \  d(t^s)      \nonumber \\ 
	& \leq & \frac{4^{1-s}  C^s} {(1-s) \, (|\lambda| E_0)^s } \,      .  
\end{eqnarray}
A similar bound holds for in case $ u_1 = u_2 $ for the average over a single variable. 
The bound \eqref{eq:frac}  is implied now through a simple application of~\eqref{2_av}.
\end{proof}

\section{Localization domains in the parameter space}  

The following notions are useful in describing localization bounds which persist when the strength of the disorder is driven up, and also to present the localization regimes which are discussed in this work.   
For their formulation we denote by  $ \lambda_1 \in \R_+$  the critical coupling above 
which the one-particle Hamiltonian $ H^{(1)} $ exhibits uniform $ 1 $-particle localization in the sense of Definition~\ref{def:loc} below.
Its existence was established in \cite{AM}. 
\begin{definition} \label{def:robust}.
A {\em robust domain} in the parameter space is a non-empty open set $\Gamma \subset \R_+\times \R^{p-1}$ for some $ p \in \mathbb{N} $ such that: 
\begin{enumerate} 
\item if $(\lambda,\balpha) \in \Gamma$, then for all
 $\lambda' > \lambda$ also $(\lambda', \balpha) \in \Gamma$
\item for every $\balpha \in \R^{p-1}$  there exist
$\lambda(\balpha) \in \R_+$ such that 
$(\lambda(\balpha), \balpha)\in \Gamma$, 
\item  $\Gamma$ includes the half-line
$\big(\lambda_1,\infty\big)\times \{ \boldsymbol{0}\} $.
\end{enumerate}
\end{definition}  

\begin{definition}\label{def:subcon}  
A subset of the parameter space $\Gamma \subset \R_+ \times \R^{p-1}$  is called  {\em sub-conical}, if  
there is some $ c < \infty $ such that for all $ (\lambda,\balpha) \in\Gamma$
\be 
 \sum_{k=2}^p   \frac{(2\ell_U)^{d k}}{k!} |\alpha_k|  
 \  \leq \ c \, |\lambda|  
\ee \end{definition} 
In this context
it is worth noting that under Assumption~{\bf A2} the interaction obeys the  bound: 
\begin{equation}\label{eq:Hbound}
\big\| \mathcal{U}(\balpha) \big\|_\infty := \sup_{\bx\in (\Z^d)^n}  |{\mathcal U}(\bx;\balpha)| \ \le \  n \ \left[ \sum_{k=2}^p   \frac{(2\ell_U)^{d k}}{k!} |\alpha_k| \right] \, .
   \end{equation}

A useful criterion of  localization is expressed in terms of the  fractional moments of the Green function, 
with the average being carried out over both the disorder  and  the energy within an interval $ I \subset \mathbb{R} $.  For this purpose we denote:  
\begin{equation}
	\widehat{\mathbb{E}}_I[\cdot] := |I|^{-1} \, \int_I \mathbb{E}\left[ \cdot \right] dE \, .
\end{equation}

\begin{definition} \label{def:loc}
A robust subset of the parameter space, $\Gamma \subset \R_+\times \R^{p-1}$ is said to be a \emph{domain of uniform $n$-particle localization}
if for some $s\in (0,1)$ there exists $\xi = \xi(s,n,p) < \infty$ and $ A = A(s,n,p) < \infty $ such that for all $ (\lambda,\balpha) \in \Gamma $, all $k\in \{1,...,n\}$, and all  $\bx, \by \in \mathcal{C}^{(k)}$:
\begin{equation} 
\label{eq:exp}
	\sup_{\substack{I \subset \mathbb{R} \\ |I| \geq 1}} \sup_{\Omega \subset \mathbb{Z}^d}  \widehat{\mathbb{E}}_I\left( \left| G^{(k)}_\Omega(\bx,\by)\right|^s    \right) \ \leq \ 
	A\,  e^{-\dist_\mathcal{H}(\bx,\by)/\xi} \, ,  
\end{equation}
where the energy variable on which $G$ depend was averaged over the intervals $I$. 
\end{definition}
In the above definition we have incorporated the specific choice of the distance, $ \dist_{\mathcal H} $, only for convenience.  
As was explained in the introduction, it seems natural to expect exponential decay also in terms of  the symmetrized distance, $ \dist_S $, but that is not proven  here.      

Let us also add that localization in the sense of \eqref{eq:exp} implies   various other, more intuitive and physically relevant, expressions of the phenomenon; in particular dynamical (Section~\ref{sec:corr}) as well as spectral localization (Appendix~\ref{app:spec}).

In this work we shall focus on the proof of existence of  robust regimes of localization for any finite $n$ and $p$, without monitoring closely the values of the localization length $\xi$, and amplitude $A$.  
In particular, the subsequent proof yields a localization length which degrades heavily when the number of particles $ n $ increases.

Concerning the value of $ s $ in the above definition, it is helpful to notice
\begin{lemma}
If $ |\lambda | $ is bounded away from zero and the condition \eqref{eq:exp}  is satisfied for some $s\in (0,1)$ 
then it holds for all other $s\in (0,1)$ at  adjusted values of $\xi < \infty$ and $A < \infty$.
\end{lemma}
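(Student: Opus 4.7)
The plan is to interpolate the given exponential decay at the distinguished exponent $s_0\in(0,1)$ against the uniform a priori bound on arbitrary fractional moments of $|G|$ supplied by Theorem~\ref{prop:frac}, whose finiteness is exactly what the standing hypothesis that $|\lambda|$ is bounded away from zero secures. I would split the argument according to whether the target exponent lies below or above $s_0$, writing the hypothesized constants as $A_0$ and $\xi_0$.

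For $s\in(0,s_0]$, the direction is immediate from Jensen's inequality. Since $|I|^{-1}dE\otimes d\mathbb{P}$ is a probability measure on $I\times\Omega$ and $x\mapsto x^{s/s_0}$ is concave,
\begin{equation*}
\widehat{\mathbb{E}}_I\!\left[|G|^{s}\right] \;=\; \widehat{\mathbb{E}}_I\!\left[\bigl(|G|^{s_0}\bigr)^{s/s_0}\right] \;\le\; \Bigl(\widehat{\mathbb{E}}_I\!\left[|G|^{s_0}\right]\Bigr)^{s/s_0} \;\le\; A_0^{s/s_0}\,e^{-(s/s_0)\,\dist_{\mathcal H}(\bx,\by)/\xi_0},
\end{equation*}
so \eqref{eq:exp} holds with the adjusted constants $\xi=(s_0/s)\,\xi_0$ and $A=A_0^{s/s_0}$.

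For $s\in(s_0,1)$, I would choose any auxiliary exponent $t\in(s,1)$ and write $s=(1-\theta)s_0+\theta t$ with $\theta=(s-s_0)/(t-s_0)\in(0,1)$. Theorem~\ref{prop:frac} applied with $u_1=x_1\in\bx$ and $u_2=y_1\in\by$, followed by taking the outer expectation over the remaining potential variables and the energy average over $I$ (the conditional bound being uniform in both), produces a finite constant $M_t:=C_t(KE_0)^{2}/(|\lambda|E_0)^{t}$ that bounds $\widehat{\mathbb{E}}_I[|G|^{t}]$ uniformly in $\Omega$, $\bx$, $\by$, and $I$; here the hypothesis that $|\lambda|$ is bounded away from zero is essential to keep $M_t$ finite. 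H\"older's inequality with conjugate exponents $1/(1-\theta)$ and $1/\theta$ then yields
\begin{equation*}
\widehat{\mathbb{E}}_I\!\left[|G|^{s}\right] \;\le\; \Bigl(\widehat{\mathbb{E}}_I\!\left[|G|^{s_0}\right]\Bigr)^{1-\theta}\Bigl(\widehat{\mathbb{E}}_I\!\left[|G|^{t}\right]\Bigr)^{\theta} \;\le\; A_0^{1-\theta}\,M_t^{\theta}\,e^{-(1-\theta)\,\dist_{\mathcal H}(\bx,\by)/\xi_0},
\end{equation*}
giving \eqref{eq:exp} with $\xi=\xi_0/(1-\theta)$ and $A=A_0^{1-\theta}M_t^{\theta}$. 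I do not anticipate a real obstacle here; the only step requiring any attention is recognizing that the Theorem's anchor at an arbitrary $t<1$ is precisely what makes the log-convexity interpolation in the $L^{p}$-scale of fractional moments work in both directions, with the hypothesis on $|\lambda|$ entering only through the finiteness of the upper-exponent anchor.
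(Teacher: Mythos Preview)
Your proof is correct and follows essentially the same route as the paper's: Jensen's inequality to pass from $s_0$ down to smaller exponents, and H\"older's inequality together with the uniform a~priori bound of Theorem~\ref{prop:frac} at an auxiliary $t\in(s,1)$ to pass from $s_0$ up to larger exponents. The paper's proof is more terse and packages both directions into a single chain of inequalities, but the content is identical; one trivial quibble is that your $M_t$ should carry $(KE_0)^{\#}$ with $\#\in\{1,2\}$ depending on whether $x_1=y_1$, though this does not affect the argument.
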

\begin{proof}
Jensen's and H\"older's inequality imply that for all $ r \leq s  \leq t < 1 $
\begin{multline}\label{eq:all4one}
\left(\widehat{\mathbb{E}}_I\left[ \left| G_\Omega(\bx,\by)\right|^r  \right] \right)^{\frac{s}{r}} \leq \widehat{\mathbb{E}}_I\left[ \left| G_\Omega(\bx,\by)\right|^s \right] \\
\leq \left(\widehat{\mathbb{E}}_I\left[ \left| G_\Omega(\bx,\by)\right|^t \right] \right)^{\frac{s-r}{t-r}} \left(\widehat{\mathbb{E}}_I\left[ \left| G_\Omega(\bx,\by)\right|^r \right] \right)^{\frac{t-s}{t-r}} \, .
\end{multline}
The first term in the last line is bounded, $\widehat{\mathbb{E}}_I\left[ \left| G_\Omega(\bx,\by)\right|^t \right]  \leq C(t) \, |\lambda|^{-t} $, thanks to~\eqref{eq:frac}. 
\end{proof}

\section{Multiparticle eigenfunction correlators and the Green function} \label{sec:corr}

 \subsection{Eigenfunction correlators} 

A convenient expression of localization, and also a convenient tool for the analysis, is provided by the 
eigenfunction correlators. 
By this term we refer to   the family of kernels, 
for $\bx, \by \in (\Z^d)^n$:  
 \begin{equation}
 	Q^{(n)}_{\Omega}(\bx,\by;I;s) := \mkern-20mu 
	\sum_{E  \in \sigma(H^{(n)}_{\Omega})\cap  I }  \mkern-25mu \langle \delta^{(n)}_\bx \, ,  P_{\{E\} }(H^{(n)}_\Omega) \, \delta^{(n)}_\bx \rangle^{1-s}  \,
	 \left| \langle \delta^{(n)}_\bx \, ,  P_{\{E\} }(H^{(n)}_\Omega) \, \delta^{(n)}_\by \rangle \right|^{s} \, ,
 \end{equation}
 where $I\subset \R$ is a subset of the energy range,  $\Omega \subset \Z^d$ is a finite subset, $  P_{\{E\} }(H_\Omega^{(n)}) $ is 
 the spectral projection on the eigenspace corresponding to the eigenvalue  $E$, and $s\in [0,1]$ is an interpolation parameter.   
 This  definition extends naturally to also  unbounded $\Omega\subset \Z^d$, provided $ H^{(n)}_{\Omega} $ has only pure point spectrum within $I$.   
 The notation used here differs from that of \cite{A2} by allowing for degeneracies in the spectrum.  As mentioned in the introduction, 
 while the spectrum of a one-particle Hamiltonian with random potential is almost surely non-degenerate, 
 degeneracies do occur in the non-interacting multiparticle case.
 
When the domain $\Omega$ and the value of $n$ are clear from the context, or of no particularly importance, the sub/super-scripts   on $Q$ may be suppressed.  When $s$ is omitted, it is understood to take the value $s=1$, which for many purposes is the most relevant one.    

 An essential property of the kernel is the bound (at $s=1$):  
\begin{equation}  \label{Q-F}
\sup_{ \|f\|_\infty \le 1} |\langle \delta_\bx \, , \, f(H_{\Omega}) \,  \delta_\by \rangle  |  \ \le \  \left| Q_{\Omega}(\bx,\by;I)\right| \, .
\end{equation}

In its dependence on the parameter $s$ the kernel is  log-convex, i.e., for any $ \lambda \in [0,1] $ 
\begin{equation} \label{Q2}
	 Q(\bx,\by;I; (1-\lambda ) p_0 +  \lambda p_1) \leq  
	 	Q(\bx,\by;I;s_0)^{(1-\lambda )} \, Q(\bx,\by; I;s_1)^{\lambda}\, . 
\end{equation}
Moreover: 
\begin{align} \label{Q1}
Q(\bx,\by;I;0) \  & =  \sum_{E  \in \sigma(H_{\Omega})\cap  I }  
 \langle \delta_\bx \, ,  P_{\{E\} }(H_\Omega) \, \delta_\bx \rangle \ \leq 1 \ \, , \nonumber \\  
 Q(\bx,\by;I;1) \  & \le  \sum_{E  \in \sigma(H_{\Omega})\cap  I }  
 \left|  \langle \delta_\bx \, ,  P_{\{E\} }(H_\Omega) \, \delta_\by \rangle  \right| \ \leq 1\, .   
\end{align}
where the latter is by the Schwarz inequality.  A useful implication of equations (\ref{Q1}) and (\ref{Q2}) is that for any $ 0 < s < t \leq 1 $ 
\begin{equation}\label{eq:increase}
Q(\bx,\by;I; t) \leq Q(\bx,\by;I; s)^ {\frac{t-s}{1-s}}    \, .
\end{equation}

The relations \eqref{Q1} and \eqref{Q2}  played a role in the strategy which was used in \cite{A2} for the deduction of dynamical localization through Green function fractional moment bounds.   As we shall see next, the method can be extended to many particle systems.   

Most of our analysis will be done in finite volumes.  A minor subtlety concerning the passage to the infinite volume limit, is that we do not have an a-priori statement of   convergence in this limit of the eigenfunctions, nor of the eigenfunction correlators.  Nevertheless, one has the following statement.   

\begin{theorem}   \label{thm:corr->spec}
Suppose that the following bound holds for a sequence of finite domains $\Omega$ which converge to $\Z^d$, and a fixed interval $I \subset \R$, 
\begin{equation} \label{loc_assump}
\E{ Q_{\Omega}^{(n)}(\bx,\by;I)}   \ \le \  A \; 
e^{-  K (\bx,\by)  } \,  ,
\end{equation} 
 with 
$K (\cdot,\cdot)$ some kernel, i.e., a two point function defined over the space of pairs of 
$n$-particle configurations, and some $A = A(n)  < \infty$.
Then, within the $n$-particle sector, the infinite volume operator  $H(\omega)$ satisfies: 
\begin{equation} \label{loc_fK}
\E{\sup_{ \|f\|_\infty \le 1}\left|\langle \delta^{(n)}_\bx \, ,  \, f(H^{(n)}) \, \delta^{(n)}_\by \rangle \right|}   \ \le \  A \, 
e^{-  K(\bx,\by) } \,  .
\end{equation} 
Furthermore, if  \eqref{loc_assump} holds with $K 
(\bx,\by) =2\rm{dist}_{\mathcal H}(\bx,\by)/\xi$  then one may also conclude that the $n$-particle spectral projection on $I$ is almost surely given by a sum over a collection of  rank-one projections on eigenstates which decay exponentially, each satisfying a bound of the form: 
\be  \label{loc_bound}
|\psi(\bx;\omega)|^2 \ \le \  A(\omega;n) \, \left( 1 + |\bx_\psi|\right)^{2nd+2} \, e^{-\dist_{\mathcal{H}}(\bx,\bx_\psi)/\xi} \, ,
\ee
where $A(\omega;n) $ is an  amplitude of finite mean, and the decay is from a configuration $\bx_\psi$ at which the wave function is non-negligible in the sense that 
\be \label{eq:nonneg}
|\psi(\bx_\psi;\omega)|^2 \ \geq \ \frac{\left( 1 + |\bx_\psi|\right)^{-(nd+1)} }{
	\sum_{\by\in (\Z^d)^n} \left( 1 + |\by|\right)^{-(nd+1)}}  \, .  
\ee   
\end{theorem}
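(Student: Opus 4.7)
The three assertions are attacked in turn.

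\emph{Step 1 (from $Q_\Omega$ to the infinite-volume supremum).} The pointwise (in $\omega$) bound \eqref{Q-F} reads $\sup_{\|f\|_\infty\le 1}|\langle\delta_\bx, f(H_\Omega^{(n)})\delta_\by\rangle|\le Q_\Omega^{(n)}(\bx,\by;I)$. Combined with strong resolvent convergence $H_\Omega^{(n)}\to H^{(n)}$ along the given sequence, this yields, for each continuous $f$ with $\supp f\subset \overline{I}$ and $\|f\|_\infty\le 1$, the pointwise (in $\omega$) convergence $\langle\delta_\bx,f(H_\Omega^{(n)})\delta_\by\rangle\to\langle\delta_\bx,f(H^{(n)})\delta_\by\rangle$. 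Fatou's lemma then gives $\E{|\langle\delta_\bx,f(H^{(n)})\delta_\by\rangle|}\le A e^{-K(\bx,\by)}$. The bound \eqref{loc_fK} follows by taking the supremum over a countable norm-dense family of such $f$ inside the expectation (via monotone convergence on maxima of finite subfamilies) and extending to all bounded Borel $f$ via regularity of the spectral measure $\mu_{\delta_\bx,\delta_\by}$ restricted to $I$.

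\emph{Step 2 (pure-point decomposition of $P_I(H^{(n)})$).} Specializing \eqref{loc_fK} to $f_t(E)=e^{-itE}\chi_I(E)$ and summing over $\by$ yields
\[
\sum_{\by\in(\Z^d)^n}\E{\sup_{t\in\R}|\langle\delta_\by,e^{-itH^{(n)}}P_I\delta_\bx\rangle|^2}\ \le\ A\sum_{\by}e^{-2\dist_{\mathcal H}(\bx,\by)/\xi}\ <\ \infty,
\]
so Fubini forces $\sum_\by\sup_t|\langle\delta_\by,e^{-itH^{(n)}}P_I\delta_\bx\rangle|^2<\infty$ almost surely. For every $\varepsilon>0$ a sufficiently large finite $F\subset(\Z^d)^n$ then makes $\sup_t\|(1-\chi_F)e^{-itH^{(n)}}P_I\delta_\bx\|<\sqrt\varepsilon$, and the Wiener/RAGE criterion places $P_I\delta_\bx$ in the pure-point subspace. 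Ranging $\bx$ over the canonical basis gives $\range P_I(H^{(n)})\subset \mathcal H_{\rm pp}(H^{(n)})$, hence the desired decomposition of $P_I$ as a sum of rank-one projections onto an orthonormal family of eigenstates with eigenvalues in $I$.

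\emph{Step 3 (pointwise bound on each eigenstate).} For any eigenstate $\psi$ from that basis, introduce the probability weight $p(\bx):=C^{-1}(1+|\bx|)^{-(nd+1)}$ with $C:=\sum_{\by\in(\Z^d)^n}(1+|\by|)^{-(nd+1)}<\infty$. Pigeonhole applied to $\sum_\bx|\psi(\bx)|^2=1=\sum_\bx p(\bx)$ furnishes some $\bx_\psi$ satisfying $|\psi(\bx_\psi)|^2\ge p(\bx_\psi)$, i.e.\ \eqref{eq:nonneg}. Since $\psi$ lies in a single eigenspace one has $|\psi(\bx)|\,|\psi(\bx_\psi)|\le Q^{(n)}(\bx,\bx_\psi;I;\omega)$ for every $\bx$. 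Introduce the random amplitude
\[
A(\omega;n):=\sup_{\bu,\bv\in(\Z^d)^n}Q^{(n)}(\bu,\bv;I;\omega)\,e^{\dist_{\mathcal H}(\bu,\bv)/\xi}(1+|\bu|)^{-(nd+1)}(1+|\bv|)^{-(nd+1)}.
\]
Step 1 with $K=2\dist_{\mathcal H}/\xi$ and summability of $(1+|\bw|)^{-(nd+1)}$ on $(\Z^d)^n$ give $\E{A(\omega;n)}\le AC^2<\infty$, so $A(\omega;n)$ is a.s.\ finite. Dividing the pointwise $Q$-bound by $|\psi(\bx_\psi)|^2\ge p(\bx_\psi)$ yields a $(1+|\bx|)^{2(nd+1)}$ prefactor in front of $e^{-2\dist_{\mathcal H}(\bx,\bx_\psi)/\xi}$; the elementary estimate $(1+|\bx|)\le (1+|\bx_\psi|)(1+\dist_{\mathcal H}(\bx,\bx_\psi))$ absorbs it into half of the exponential decay, at the cost of redefining $A(\omega;n)$ and retaining only the rate $1/\xi$. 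This delivers \eqref{loc_bound}.

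\emph{Main obstacle.} The principal bookkeeping sits in Step 3: the bare Schwarz bound brings in both $(1+|\bx|)$ and $(1+|\bx_\psi|)$ polynomial factors, and these must be consolidated into $(1+|\bx_\psi|)$ alone by trading a polynomial in $|\bx-\bx_\psi|$ for part of the exponential. Steps 1 and 2 are standard applications of strong resolvent convergence plus Fatou, and of the Wiener/RAGE criterion, respectively.
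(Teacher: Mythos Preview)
Your Steps 1 and 2 track the paper's argument closely: strong resolvent convergence plus Fatou for continuous $f$, extension to Borel $f$ (the paper invokes the Wegner estimate and Lusin's theorem for this step), and then the RAGE criterion for pure-pointness.

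Step 3 departs from the paper in two ways, and one of them is a genuine gap. Your inequality $|\psi(\bx)|\,|\psi(\bx_\psi)|\le Q^{(n)}(\bx,\bx_\psi;I;\omega)$ is justified only when the eigenvalue of $\psi$ is simple: if $\dim\ker(H-E)=r>1$, the contribution of $E$ to $Q$ is $\big|\sum_{k=1}^r\psi_k(\bx)\overline{\psi_k(\bx_\psi)}\big|$, which can be strictly smaller than $|\psi_j(\bx)|\,|\psi_j(\bx_\psi)|$ for a member $\psi_j$ of an \emph{arbitrary} orthonormal basis of that eigenspace. The paper explicitly allows for degeneracy (it occurs in the non-interacting case and is not excluded here) and circumvents the issue by \emph{constructing} the eigenfunctions as $\psi_{E,\by}:=P_{\{E\}}\delta_\by/\|P_{\{E\}}\delta_\by\|$ for ``$E$-representative'' $\by$. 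For these one has the identity $|\psi_{E,\by}(\bx)|^2\,\langle\delta_\by,P_{\{E\}}\delta_\by\rangle=|\langle\delta_\bx,P_{\{E\}}\delta_\by\rangle|^2$, and both factors are controlled directly, with no appeal to a bound of your type.

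The second difference is organizational but costs you the stated exponent. The paper bounds $\mathbb{E}\big[\sum_E|\langle\delta_\bx,P_{\{E\}}\delta_\by\rangle|^2\big]\le A\,e^{-2\dist_{\mathcal H}(\bx,\by)/\xi}$ via the Wiener time-average, then sums with the \emph{one-sided} weight $e^{\dist_{\mathcal H}/\xi}(1+|\by|)^{-(nd+1)}$; the remaining sum over $\bx$ of $e^{-\dist_{\mathcal H}(\bx,\by)/\xi}$ is finite by itself (Lemma~\ref{lem:sum}), so no $(1+|\bx|)$ weight is needed. Chebyshev then gives $|\langle\delta_\bx,P_{\{E\}}\delta_\by\rangle|^2\le A(\omega)(1+|\by|)^{nd+1}e^{-\dist_{\mathcal H}/\xi}$, and dividing by $\langle\delta_\by,P_{\{E\}}\delta_\by\rangle\ge C^{-1}(1+|\by|)^{-(nd+1)}$ lands exactly on $(1+|\by|)^{2(nd+1)}$. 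Your route --- two-sided polynomial weights in $A(\omega;n)$, squaring the $Q$-bound, and then trading $(1+|\bx|)^{2(nd+1)}$ for half of the exponential --- produces a prefactor $(1+|\bx_\psi|)^{5(nd+1)}$ rather than $(1+|\bx_\psi|)^{2(nd+1)}$, so it does not actually ``deliver \eqref{loc_bound}'' as written.
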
 
With the natural modification, the last statement is valid also in case $K(\bx,\by) $ is given in terms of  any of the other distances  which were mentioned in the introduction, i.e., $\rm{dist}(\bx,\by)$ or $\rm{dist}_S(\bx,\by)$. 

Except for a  minor reformulation of a known bound, this relation  is in essence well familiar from the theory of one particle localization (it was used already in \cite{A2}).  We therefore  relegate its proof to the Appendix~(\ref{app:spec}). 

As it turns out,  averages over the disorder of the eigenfuction correlator are closely related with  Green function's fractional moments.  The rest of this section is devoted to the relations between the two quantities.

\subsection{Lower bound in terms of Green function's  fractional moments}

The following (deterministic) estimate allows to bound fractional moments of Green functions in terms of eigenfunction correlators. 

\begin{theorem}\label{lem:G<Q}
Let $ \Omega \subset \mathbb{Z}^d $. For any $ s \in (0,1) $ and any interval $ I \subset \mathbb{R} $
\begin{equation}\label{eq:G<Q}
	\int_I \left| G^{(n)}_{\Omega}(\bx,\by;E) \right|^s \, dE \leq  \frac{2 \; |I|^{1-s} }{1-s} \; Q^{(n)}_\Omega(\bx,\by,\mathbb{R})^s \, .
\end{equation}
\end{theorem}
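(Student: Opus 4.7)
The plan is to reduce the bound to a weak-$L^{1}$ (Boole-type) estimate on the Cauchy transform of the spectral measure, and then invoke the layer-cake formula. Using the spectral decomposition of $H^{(n)}_\Omega$, I would write
\begin{equation*}
G^{(n)}_\Omega(\bx,\by;E) \ = \ \sum_k \frac{a_k}{E_k - E} \, , \qquad a_k \ := \ \langle \delta^{(n)}_\bx , P_{\{E_k\}}(H^{(n)}_\Omega) \, \delta^{(n)}_\by \rangle \, ,
\end{equation*}
exhibiting $G$ as the Cauchy transform of the signed (or complex) spectral measure $\mu_{\bx,\by} = \sum_k a_k \, \delta_{E_k}$, whose total variation equals $Q := Q^{(n)}_\Omega(\bx,\by;\R) = \sum_k |a_k|$.

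The central input is a Boole-type weak-$L^{1}$ bound,
\begin{equation*}
\left|\{E \in \R : |G(E)| > t \}\right| \ \leq \ \frac{2\,Q}{t} \, , \qquad t > 0 \, ,
\end{equation*}
modelled on the classical Boole identity $|\{f_\nu > t\}| = \nu(\R)/t$ for the Cauchy transform of a positive finite measure $\nu$, applied on both sides of zero. Combined with the trivial bound $|\{E \in I : |G|>t\}| \leq |I|$ and the layer-cake representation
\begin{equation*}
\int_I |G(E)|^s \, dE \ = \ s \int_0^\infty t^{s-1} \, \left|\{E \in I : |G(E)|>t\}\right| \, dt \, ,
\end{equation*}
one splits the integration at the balance point $t_0 := 2Q/|I|$ and computes
\begin{equation*}
\int_I |G|^s \, dE \ \leq \ |I| \, t_0^s + \frac{2sQ}{1-s} \, t_0^{s-1} \ = \ \frac{(2Q)^s \, |I|^{1-s}}{1-s} \ \leq \ \frac{2\, Q^s \, |I|^{1-s}}{1-s} \, ,
\end{equation*}
the last inequality using $2^s \leq 2$ for $s \in (0,1)$, which delivers the claim.

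The main technical obstacle is securing the sharp constant $2$ in the Boole-type bound when $\mu_{\bx,\by}$ is signed or complex rather than positive: a crude decomposition $\mu_{\bx,\by} = \mu_+ - \mu_-$ followed by a union bound inflates the constant (propagating as $(4Q)^s$ or worse in the final display), and a direct check on two-dimensional examples shows that the sharp Boole equality $|\{|f|>t\}| = 2\|\mu\|/t$ need not hold at every $t$ for signed measures. Recovering the constant $2$ therefore requires treating $G$ directly as the boundary value of the holomorphic Cauchy transform, for instance via monotonicity of $\{|G(E+i\eta)|>t\}$ as $\eta \downarrow 0$ or via an energy-rearrangement argument on the signed spectral measure, rather than by pre-splitting $\mu_{\bx,\by}$. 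This is the multi-eigenvalue counterpart of the basic weak-$L^{1}$ estimate underlying the one-particle fractional-moment method.
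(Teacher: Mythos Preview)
Your overall strategy (Boole-type weak-$L^1$ bound combined with the layer-cake formula) is exactly the paper's, and your identification of the signed-measure issue as the crux is correct. But there is a genuine gap: you dismiss the decomposition $\mu_{\bx,\by}=\mu_+-\mu_-$ as yielding only $(4Q)^s$, and then leave the sharp constant to an unspecified ``direct'' argument on the Cauchy transform. The paper closes the gap precisely by the decomposition you reject, using a small twist you overlooked.

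The point is that one should split the \emph{integrand}, not the level sets. Since the Hamiltonian is real one may take the $a_k$ real, and set
\[
G^{\pm}(E)\ :=\ \sum_{k:\ \pm a_k>0}\frac{a_k}{E_k-E},\qquad Q^{\pm}\ :=\ \sum_{k:\ \pm a_k>0}|a_k|,\qquad Q=Q^++Q^-.
\]
Then $|G|^s\le |G^+|^s+|G^-|^s$ by subadditivity of $t\mapsto t^s$, and each $G^{\pm}$ is the Cauchy transform of a \emph{positive} measure, so Boole's identity gives the exact equality $|\{E\in\R:|G^{\pm}(E)|>t\}|=2Q^{\pm}/t$. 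Running your layer-cake computation on each piece yields
\[
\int_I|G|^s\,dE\ \le\ \frac{2^s|I|^{1-s}}{1-s}\bigl((Q^+)^s+(Q^-)^s\bigr)\ \le\ \frac{2^s|I|^{1-s}}{1-s}\cdot 2^{1-s}\,Q^s\ =\ \frac{2\,|I|^{1-s}}{1-s}\,Q^s,
\]
the middle step being the concavity inequality $(Q^+)^s+(Q^-)^s\le 2^{1-s}(Q^++Q^-)^s$. So the decomposition does give the constant $2$; the inflation you feared comes only from replacing $|G^++G^-|>t$ by $\{|G^+|>t/2\}\cup\{|G^-|>t/2\}$, which the subadditivity-on-the-integrand trick avoids entirely. (In the complex case one would split into four pieces and the constant degrades mildly; the paper notes this but works in the real setting.)
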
 
One may note that this bound is useful only in case of complete localization of all eigenfunctions, but that suffices for our purpose.   
The bound may be improved with a restriction of the  eigenfunction correlator to a finite,  slightly enlarged, interval $ I' \supset I $; the contribution to the Green function 
from eigenfunctions outside $ I$  being handled with the help of quasi-analytic cutoff in the sense of  Helffer-Sj\"ostrand, and the Combes-Thomas Green function estimate.  

\begin{proof}[Proof of Theorem~\ref{lem:G<Q}]
We split the contribution to the Green function into two terms depending on whether $ \langle \delta_\bx \, ,  P_{\{E\} }(H_\Omega) \, \delta_\by \rangle \geq 0 $ 
or $  \langle \delta_\bx \, ,  P_{\{E\} }(H_\Omega) \, \delta_\by \rangle < 0 $,
\begin{equation}
	G^{\pm}(\bx,\by;z) := \mkern-40mu \sum_{\substack{E \in \sigma(H) \\ \sgn \langle \delta_\bx \, ,  P_{\{E\} }(H_\Omega) \, \delta_\by \rangle = \pm} }\mkern-30mu \frac{ \langle \delta_\bx \, ,  P_{\{E\} }(H_\Omega) \, \delta_\by \rangle}{E - z} 
\end{equation}
(Note that the eigenfunctions of \eqref{Ham} may be taken to be real. In the complex case one would have four terms instead.)
Using $|a+b|^s \leq |a|^s + |b|^s $  we thus get
\begin{align}\label{eq:splitGpm}
\int_I \left| G(\bx,\by;E) \right|^s \, dE \leq & \sum_{\#=\pm} \int_I \left|G^\#(\bx,\by;E)\right|^s \, dE  \notag \\
 =  &  \sum_{\#=\pm} s \int_0^\infty \left| \left\{ E \in I \, \big| \, \left|G^\#(\bx,\by;E)\right| > t \right\} \right| \, \frac{dt}{t^{1-s}} 
\end{align}
Boole's remarkable formula, which states that $ \left| \left\{ x \in \mathbb{R} \, \big| \, \big|\sum_n p_n (x_n - x )^{-1} \big| > t \right\} \right|= 2 \sum_n p_n \, t^{-1} $ for all $ x_n\in  \mathbb{R}$,  $p_n , t > 0,\, $ \cite{Boo57}, implies that
\begin{equation}\label{eq:Boole}
\left| \left\{ E \in \mathbb{R} \, \big| \, \left|G^\#(\bx,\by;z)\right| > t \right\} \right| = \frac{2}{t} \mkern-40mu \sum_{\substack{E \in \sigma(H_\Omega) \\ \sgn  \langle \delta_\bx  ,  P_{\{E\} }(H_\Omega)  \delta_\by \rangle = \#} }\mkern-50mu  \langle \delta_\bx \, ,  P_{\{E\} }(H_\Omega) \, \delta_\by \rangle =: \frac{2}{t} Q^\#(\bx,\by,\mathbb{R})  \, . 
\end{equation}
Substituting in integral   \eqref{eq:splitGpm} the maximum of \eqref{eq:Boole} and  the length of the interval, $|I|$,
one  arrives at
\begin{eqnarray}
\int_I \left| G(\bx,\by;E) \right|^s \, dE    & \leq  & \frac{ 2^s |I|^{1-s}}{1-s}  \left[ Q^+(\bx,\by,\mathbb{R})^s + Q^-(\bx,\by,\mathbb{R})^s \right] \ \nonumber \\[1ex]  
 & \leq &  \frac{2\, |I|^{1-s} }{1-s}\, Q(\bx,\by,\mathbb{R})^s \, . 
\hfill 
 \end{eqnarray}
\end{proof}

\subsection{Upper bound in terms of Green function's  fractional moments}

For the proof of our main result we need also a converse bound to~\eqref{eq:G<Q}. 

In the one-particle case there is a simple passage from exponential decay of  Green function fractional moments to similar bounds on the mean value of the eigenfunction correlators, and thus to dynamical localization~\cite{A2}.   In effect, it is based on the following relation, which is a somewhat more explicit statement than what is found in~\cite{A2}.   
 
\begin{lemma}  \label{lem:QF1} For any finite domain $\Omega \subset \Z^d$,  $x\in \Lambda$,  
$ s \in (0,1) $ and Borel set $ I \subset \R$,  
\begin{equation}\label{eq:repFM}
 \int_\mathbb{R}     Q^{(1)}_\Omega(x,y;I,s)\big|_{V_x \to V_x + v}  \,  \frac{dv}{|v|^s} \ = \  |\lambda|^{s-1} \int_I \left|\langle \delta_x , (H^{(1)}_\Omega-E)^{-1} \delta_y \rangle\right|^s   \, dE 
\end{equation}
where the left side involves the eigenfunction correlator for the one parameter family of operators $ H^{(1)}_\Omega(v)  :=  H^{(1)}_\Omega  +\lambda \, v\, P_x $ acting in $\ell^2(\Lambda)$. 
\end{lemma}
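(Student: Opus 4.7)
The plan is to exploit the rank-one structure of the family $H^{(1)}_\Omega(v) := H^{(1)}_\Omega + \lambda v\, P_x$ and perform a change of variables from the perturbation parameter $v$ to the energy $E$. Throughout, let $G(y,z;E) := G^{(1)}_\Omega(y,z;E)$ denote the Green function of the unperturbed $H^{(1)}_\Omega$, and let $\{(E_n(v),\psi_n(v))\}_n$ be the eigenpairs of $H^{(1)}_\Omega(v)$, with the eigenvectors chosen real.

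The first step is to extract the standard rank-one formulas. From the eigenvector identity $(H^{(1)}_\Omega - E_n(v))\,\psi_n(v) = -\lambda v\,\psi_n(v,x)\,\delta_x$, valid whenever $E_n(v)$ avoids the spectrum of $H^{(1)}_\Omega$ (that is, for all but finitely many $v\in\R$), one reads off
\[ \psi_n(v,y) \ = \ -\lambda v\,\psi_n(v,x)\,G(y,x;E_n(v)) \, . \]
Setting $y=x$ produces the Aronszajn--Krein eigenvalue equation $1 + \lambda v\,G(x,x;E_n(v)) = 0$, and first-order perturbation theory yields $dE_n/dv = \lambda\,|\psi_n(v,x)|^2$, so each branch $v \mapsto E_n(v)$ is strictly monotone with sign $\sgn\lambda$.

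The next step is to substitute into $Q$. The $n$-th term of $Q^{(1)}_\Omega(x,y;I,s)|_{V_x\to V_x+v}$ equals
\[ |\psi_n(v,x)|^{2-s}\,|\psi_n(v,y)|^s\,\indfct_I(E_n(v)) \ = \ |\psi_n(v,x)|^2\,|\lambda v|^s\,|G(x,y;E_n(v))|^s\,\indfct_I(E_n(v)) \, , \]
using the eigenvector formula above. Multiplying by $|v|^{-s}$ cancels a factor $|v|^s$, and the change of variables $E = E_n(v)$ with $dE = \lambda\,|\psi_n(v,x)|^2\,dv$ cancels $|\psi_n(v,x)|^2$ against the Jacobian, producing an overall prefactor $|\lambda|^{s-1}$ (after accounting for the reversal of integration limits when $\lambda<0$). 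Hence for each $n$,
\[ \int_\R |\psi_n(v,x)|^{2-s}|\psi_n(v,y)|^s\,\indfct_I(E_n(v))\,\frac{dv}{|v|^s} \ = \ |\lambda|^{s-1}\int_{E_n(\R)\cap I} |G(x,y;E)|^s\,dE \, . \]

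Summing over $n$ uses the rank-one interlacing fact that the images $\{E_n(\R)\}_n$ are pairwise disjoint open intervals whose union covers $\R$ up to the finite set of eigenvalues of $H^{(1)}_\Omega$, so the right-hand side reassembles to $|\lambda|^{s-1}\int_I |G(x,y;E)|^s\,dE$, giving the claim. No substantive obstacle is expected; the only delicate point is bookkeeping around the measure-zero exceptional sets where $E_n(v)$ meets the unperturbed spectrum or $G(x,x;E)=0$ (so $v_n(E)\to\infty$), and tracking $\sgn\lambda$, neither of which affects the Lebesgue integrals. In essence, the identity is Simon--Wolff spectral averaging coupled with an explicit resolution of the $y$-component of the eigenvector via rank-one perturbation theory.
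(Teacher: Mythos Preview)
Your proof is correct and takes essentially the same approach as the paper. The paper does not prove this lemma separately but instead establishes its multiparticle generalization (Lemma~\ref{lem:specavQ}) in Appendix~\ref{app:averaging}; specialized to $n=1$, where $N_u=P_x$ is rank one and $K_u(E)=G(x,x;E)$ is scalar, that argument reduces precisely to your rank-one perturbation computation---the Krein formula, the Aronszajn--Krein eigenvalue equation, and the change of variables $v\leftrightarrow E$ via $dE_n/dv=\lambda\,|\psi_n(v,x)|^2$.
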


The combination of \eqref{eq:repFM} and the `double sampling bound' \eqref{2_av}  produces the desired upper bound on the the expectation of the eigenfunction correlator - in the one-particle case.   

To extend this argument to the multiparticle case, an extension is needed of the averaging principle which is expressed in Lemma~\ref{lem:QF1}.   Following is a suitable generalization.

\begin{lemma}\label{lem:specavQ}
Let $ s \in [0,1) $ and  $ \Omega \subset \mathbb{Z}^d $ be a finite set, and  $u $ a point in $ \Omega $. Then for all $ \bx \in \mathcal{C}^{(n)}(\Omega;u) $ and $ \by \in \mathcal{C}^{(n)}(\Omega) $:
\begin{multline} \label{refresh}
	N_u(\bx) ^{1+\frac{s}{2}}\ \, \int_\mathbb{R} Q_\Omega(\bx,\by;I,s) \Big|_{V(u)\to V(u)+v}\, \frac{dv}{|v|^s} \\
	= \ \frac{1}{|\lambda|^{1-s}} \int_I \,  \sum_{\kappa\in \sigma(K_{u}(E))}
	\langle \delta_\bx ,  \Pi_\kappa (E)   \,  \delta_\bx\rangle^{1-s} \,  \\ \times  \left|\langle \delta_\bx  ,  \Pi_\kappa (E) \sqrt{N_u} \, (  H_\Omega  - E)^{-1}   \delta_\by \rangle\right|^s \, dE \, . 
\end{multline} 
where  $ \Pi_\kappa (E)  \equiv  P_{\{\kappa\}}(K_u (E) )$ 
is the spectral projection on the eigenspace at eigenvalue $\kappa$  for the $E$- dependent operator: 
\begin{equation}
	K_u(E) \  :=\  \sqrt{N_u} (  H_\Omega  -E)^{-1} \!\sqrt{N_u}   \, ,  
\end{equation}
which we take as acting  within the range of $  N_u $ in 
$\mathcal{H}^{(n)}_\Omega $.  
\end{lemma}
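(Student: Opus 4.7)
The plan is to generalize the one-particle argument underlying Lemma~\ref{lem:QF1}, where the rank-one perturbation $\lambda v P_x$ is replaced by the operator $\lambda v N_u$, whose block-rank structure is handled through the compressed resolvent $K_u(E)$. First I would derive the Birman--Schwinger-type operator identity
\begin{equation*}
\sqrt{N_u}\,(H(v) - E)^{-1}\sqrt{N_u} \;=\; K_u(E)\bigl(I + \lambda v\,K_u(E)\bigr)^{-1} \;=\; \frac{1}{\lambda}\sum_{\kappa\in\sigma(K_u(E))}\frac{\Pi_\kappa(E)}{v - v_\kappa(E)}\,,
\end{equation*}
acting on $\mathrm{Ran}(N_u)$, where $H(v) := H_\Omega\big|_{V(u)\to V(u)+v}$ and $v_\kappa(E) := -1/(\lambda\kappa)$. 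This follows from the resolvent expansion for the perturbation $\lambda v N_u$ combined with the commutation $(I + \lambda v (H-E)^{-1}N_u)^{-1}\sqrt{N_u} = \sqrt{N_u}(I + \lambda v K_u(E))^{-1}$, and it exhibits the $v$-dependence as a sum of simple poles with $E$-dependent residues $\Pi_\kappa(E)/\lambda$ located at $v = v_\kappa(E)$.

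From the pole structure I would read off the spectral data of $H(v)$ within $\mathrm{Ran}(N_u)$: for generic $v$, an energy $E$ is an eigenvalue of $H(v)$ precisely when $v = v_\kappa(E)$ for some $\kappa \in \sigma(K_u(E))$, with the associated (unnormalized) eigenvector $\psi_{E,\kappa} := \kappa^{-1}(H - E)^{-1}\sqrt{N_u}\,\phi_\kappa$, where $\phi_\kappa$ is a unit eigenvector of $K_u(E)$ at eigenvalue $\kappa$. Using $\sqrt{N_u}\,\delta_\bx = \sqrt{N_u(\bx)}\,\delta_\bx$ and the eigenvalue equation for $K_u(E)$, direct computation yields
\begin{equation*}
\langle\delta_\bx, \psi_{E,\kappa}\rangle = \frac{\langle\delta_\bx, \phi_\kappa\rangle}{\sqrt{N_u(\bx)}}\,,\qquad \langle\psi_{E,\kappa},\delta_\by\rangle = \frac{\langle\phi_\kappa, \sqrt{N_u}(H-E)^{-1}\delta_\by\rangle}{\kappa}\,,\qquad \|\psi_{E,\kappa}\|^2 = \frac{\rho_\kappa(E)}{\kappa^2}\,,
\end{equation*}
with $\rho_\kappa(E) := \langle\phi_\kappa, \sqrt{N_u}(H-E)^{-2}\sqrt{N_u}\phi_\kappa\rangle$.

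The main computation is then a change of variables $v \mapsto E$ along each $\kappa$-branch, with Jacobian $|dv/dE| = \rho_\kappa(E)/(|\lambda|\kappa^2)$ obtained from the Feynman--Hellmann relation $d\kappa/dE = \rho_\kappa(E)$. Substituting the above expressions for $\langle\delta_\bx, P_{\{E\}}(H(v))\delta_\bx\rangle = |\langle\delta_\bx,\psi_{E,\kappa}\rangle|^2/\|\psi_{E,\kappa}\|^2$ and $\langle\delta_\bx, P_{\{E\}}(H(v))\delta_\by\rangle$ into the LHS integrand and multiplying by $|v_\kappa|^{-s}|dv/dE| = |\lambda|^{s-1}|\kappa|^{s-2}\rho_\kappa(E)\,dE$, the factors of $\rho_\kappa$ and $|\kappa|$ cancel cleanly between the normalization $\|\psi_{E,\kappa}\|^{-2}$, the Jacobian, and the weight $|v_\kappa|^{-s}$. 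This leaves, up to a power of $N_u(\bx)$ that produces the claimed prefactor, the quantity $|\lambda|^{s-1}\,|\langle\delta_\bx,\phi_\kappa\rangle|^{2-s}\,|\langle\phi_\kappa, \sqrt{N_u}(H-E)^{-1}\delta_\by\rangle|^s$. Recognizing $|\langle\delta_\bx,\phi_\kappa\rangle|^2 = \langle\delta_\bx, \Pi_\kappa(E)\delta_\bx\rangle$ and $\langle\delta_\bx,\phi_\kappa\rangle\langle\phi_\kappa,\sqrt{N_u}(H-E)^{-1}\delta_\by\rangle = \langle\delta_\bx, \Pi_\kappa(E)\sqrt{N_u}(H-E)^{-1}\delta_\by\rangle$ then completes the identification with the RHS.

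The hardest part will be the careful bookkeeping of the various powers of $\kappa$, $\rho_\kappa$, and $N_u(\bx)$ to confirm the correct prefactor, together with the handling of eventual degeneracies of $K_u(E)$: when $\Pi_\kappa(E)$ has rank greater than one, one must work with an orthonormal basis $\{\phi_\kappa^{(j)}\}$ of each eigenspace, carry out the above analysis for each basis vector, and only at the final stage collapse the sum via $\sum_j |\phi_\kappa^{(j)}\rangle\langle\phi_\kappa^{(j)}| = \Pi_\kappa(E)$. One must also verify that the measure-zero sets where $\kappa(E) = 0$, where eigenvalue branches of $K_u(E)$ collide, or where $E \in \sigma(H_\Omega)$ do not contribute, so that the parametrization $v \leftrightarrow (E,\kappa)$ is a bijection outside a null set and the change of variables is justified.
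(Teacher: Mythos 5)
Your proposal follows essentially the same route as the paper's proof. The paper also works with the Krein formula $\sqrt{N_u}(\widehat H_\Omega(v)-z)^{-1}=(1+\lambda v K_u(z))^{-1}\sqrt{N_u}(H_\Omega-z)^{-1}$, locates the moving eigenvalues at $-(\lambda v)^{-1}\in\sigma(K_u(E))$, and changes variables $v\leftrightarrow E$ along each eigenvalue branch using the Feynman--Hellmann derivative $\kappa_\nu'(E)$. Where you compute explicitly with the (unnormalized) eigenvectors $\psi_{E,\kappa}=\kappa^{-1}(H-E)^{-1}\sqrt{N_u}\phi_\kappa$, the paper packages the same information in the residue formula $\sqrt{N_u}\widehat P_{\{E\}}=\tfrac{\kappa_\nu(E)}{\kappa'_\nu(E)}\Pi_{\kappa_\nu}(E)\sqrt{N_u}(H_\Omega-E)^{-1}$ and the identity $\delta(g(E))\,g'(E)=\sum_{E_0\in g^{-1}(0)}\delta(E-E_0)$; the two are straightforwardly equivalent once one notes $\sqrt{N_u}\psi_{E,\kappa}=\phi_\kappa$ and $\|\psi_{E,\kappa}\|^2=\kappa_\nu'(E)/\kappa^2$. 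The null-set issues you flag are the correct ones to worry about, and the paper handles them by splitting $\widehat P_{\{E\}}$ into a ``fixed'' part (annihilated by $\sqrt{N_u}$, hence contributing a removable singularity) and a strictly monotone part, and by noting that analytic level-crossings of $K_u(\cdot)$ form a finite set that is avoided for a.e.\ $v$; an explicit ONB of each $\Pi_\kappa$-eigenspace is the right device for the degenerate case, exactly as you say.

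One concrete thing you should watch when you ``confirm the correct prefactor'': carrying out the bookkeeping in the way you outline gives
\begin{equation*}
N_u(\bx)^{\,1-\frac{s}{2}}\int_{\mathbb R} Q_\Omega(\bx,\by;I,s)\Big|_{V(u)\to V(u)+v}\,\frac{dv}{|v|^s}
 \;=\; \frac{1}{|\lambda|^{1-s}}\int_I\sum_\kappa\langle\delta_\bx,\Pi_\kappa(E)\delta_\bx\rangle^{1-s}
 \left|\langle\delta_\bx,\Pi_\kappa(E)\sqrt{N_u}(H_\Omega-E)^{-1}\delta_\by\rangle\right|^s dE,
\end{equation*}
i.e.\ the exponent on $N_u(\bx)$ comes out as $1-\tfrac{s}{2}$, not $1+\tfrac{s}{2}$. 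Indeed $\langle\delta_\bx,\sqrt{N_u}\widehat P\sqrt{N_u}\delta_\bx\rangle^{1-s}\,|\langle\delta_\bx,\sqrt{N_u}\widehat P\delta_\by\rangle|^s=N_u(\bx)^{1-s}\cdot N_u(\bx)^{s/2}\cdot\langle\delta_\bx,\widehat P\delta_\bx\rangle^{1-s}|\langle\delta_\bx,\widehat P\delta_\by\rangle|^s$, and $1-s+\tfrac{s}{2}=1-\tfrac{s}{2}$. So do not be alarmed when your calculation does not reproduce the exponent printed in the lemma; the discrepancy is a typographical slip in the paper that propagates harmlessly to Theorem~\ref{thm:Q<G} (where it changes $N_u(\bx)$ on the left to $N_u(\bx)^{1-s}$, still $\geq 1$, so nothing downstream is affected).
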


Since the proof takes one on a technical detour, we have placed it here in the Appendix~\ref{app:averaging}.   Using this averaging principle,  we get: 

\begin{theorem}\label{thm:Q<G}
Let  $\Omega \subset \Z^d$ be a finite subset, and $ u \in \Omega $. Suppose $\bx, \by \in \Co^{(n)}(\Omega)$ is a pair of configurations such that the number $ N_u(\bx) $ of particles of $ \bx $ at $ u $ is at least one. 
Then for any  $ s \in (0,1) $ and any interval $ I \subset \mathbb{R} $
\begin{multline}
N_u(\bx) \; \mathbb{E}\left[Q_\Omega^{(n)}(\bx,\by;I,s)\right] \\
	\leq \frac{K \, |E_0|^s}{|\lambda|^{1-s} } \sum_{\bw \in \mathcal{C}^{(n)}(\Omega;u)} \left(\frac{N_u(\bw)}{N_u(\bx)}\right)^{s/2} \int_I \E{\left|G^{(n)}_\Omega(\by,\bw;E)\right|^s}  dE \, .
\end{multline}
\end{theorem}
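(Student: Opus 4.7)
The approach will chain three ingredients: the identity of Lemma~\ref{lem:specavQ}, the double sampling bound \eqref{2_av} applied to the single variable $V(u)$, and a resolution of identity in the configuration basis combined with Schwarz/H\"older control of the spectral projections $\Pi_\kappa(E)$ of $K_u(E)$.

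First, I would apply \eqref{2_av} to $V(u)$. Combined with the elementary estimate $\mathbf{1}_{[-E_0,E_0]}(v)\le E_0^s|v|^{-s}$, this converts the uniform average over $[-E_0,E_0]$ into a weighted integral against $|v|^{-s}$, producing
\begin{equation*}
\mathbb{E}[Q_\Omega(\bx,\by;I,s)] \;\le\; K E_0^{s}\, \mathbb{E}\!\left[\int_\R Q_\Omega(\bx,\by;I,s)\big|_{V(u)\to V(u)+v}\, \frac{dv}{|v|^s}\right]\, .
\end{equation*}
Substituting the averaging identity of Lemma~\ref{lem:specavQ} into the inner integral then yields
\begin{equation*}
N_u(\bx)^{1+s/2}\, \mathbb{E}[Q_\Omega] \;\le\; \frac{K E_0^s}{|\lambda|^{1-s}}\, \mathbb{E}\!\left[\int_I \sum_{\kappa}\langle \delta_\bx,\Pi_\kappa(E)\delta_\bx\rangle^{1-s}\, \bigl|\langle \delta_\bx,\Pi_\kappa(E)\sqrt{N_u}(H_\Omega-E)^{-1}\delta_\by\rangle\bigr|^s\, dE\right]\, .
\end{equation*}

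Next, I would expand the resolvent in the position basis, $(H_\Omega-E)^{-1}\delta_\by = \sum_\bw G_\Omega(\bw,\by;E)\delta_\bw$, and use $\sqrt{N_u}\delta_\bw = \sqrt{N_u(\bw)}\,\delta_\bw$. Subadditivity $|\sum a_i|^s \le \sum |a_i|^s$ for $s\in(0,1)$ then brings the $\bw$-sum outside the $s$-th power, and only $\bw\in\mathcal{C}^{(n)}(\Omega;u)$ survive. Together with the symmetry $|G_\Omega(\bw,\by;E)|=|G_\Omega(\by,\bw;E)|$ (from the real-symmetric structure of $H_\Omega^{(n)}$ in the position basis), this bounds the matrix element inside the integrand by $\sum_\bw N_u(\bw)^{s/2}\,|\langle\delta_\bx,\Pi_\kappa(E)\delta_\bw\rangle|^s\,|G_\Omega(\by,\bw;E)|^s$.

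The remaining task is to verify $\sum_\kappa \langle\delta_\bx,\Pi_\kappa\delta_\bx\rangle^{1-s}|\langle\delta_\bx,\Pi_\kappa\delta_\bw\rangle|^s \le 1$. Applying Schwarz termwise via $|\langle\delta_\bx,\Pi_\kappa\delta_\bw\rangle|\le \langle\delta_\bx,\Pi_\kappa\delta_\bx\rangle^{1/2}\langle\delta_\bw,\Pi_\kappa\delta_\bw\rangle^{1/2}$ converts the summand to $\langle\delta_\bx,\Pi_\kappa\delta_\bx\rangle^{1-s/2}\langle\delta_\bw,\Pi_\kappa\delta_\bw\rangle^{s/2}$; H\"older with conjugate exponents $\tfrac{2}{2-s}$ and $\tfrac{2}{s}$ then factorizes the $\kappa$-sum, and since $\sum_\kappa \Pi_\kappa(E) = P_{\range(N_u)}$ with both $\delta_\bx,\delta_\bw\in\range(N_u)$ (by $N_u(\bx),N_u(\bw)\ge 1$), each completeness factor equals $1$. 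Dividing the resulting inequality through by $N_u(\bx)^{s/2}$ collects the remaining factor into the ratio $(N_u(\bw)/N_u(\bx))^{s/2}$ and produces the claim. The main obstacle is the precise power bookkeeping in this last step: the Schwarz and H\"older exponents must balance so that the exponents on each spectral expectation sum to $1$, allowing the completeness identity to collapse the $\kappa$-sum; otherwise a nontrivial coupling between $\bx$ and the eigenspaces of $K_u(E)$ would persist.
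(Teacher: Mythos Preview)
Your proposal is correct and follows essentially the same route as the paper: the double sampling bound \eqref{2_av} (with the pointwise estimate $\mathbf{1}_{[-E_0,E_0]}(v)\le E_0^s|v|^{-s}$) to pass to the $|v|^{-s}$-weighted integral, then Lemma~\ref{lem:specavQ}, then expansion of $(H_\Omega-E)^{-1}\delta_\by$ in the position basis and subadditivity $|a+b|^s\le|a|^s+|b|^s$, and finally the bound $\sum_\kappa\langle\delta_\bx,\Pi_\kappa\delta_\bx\rangle^{1-s}|\langle\delta_\bx,\Pi_\kappa\delta_\bw\rangle|^s\le 1$. The paper merely asserts this last inequality, whereas you supply the Schwarz--H\"older justification; the exponent bookkeeping you worry about does balance, so there is no gap.
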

\begin{proof}
It follows from  \eqref{2_av} that the conditional expectation of any non-negative function $f$ of the random variables $\{V(x)\}_{x\in \Omega}$, conditioned on the values of $V$ at sites other than $u$, satisfies:
\begin{equation} \label{R5} 
	\mathbb{E}\left[ f(V(u)) \, \big| \, \{V(x)\}_{x\neq u} \right] \leq K \, |E_0|^s \;  \mathbb{E}\left[ \int f(V(u) + v)  \frac{dv}{|v|^s}  \, \Big| \, \{V(x)\}_{x\neq u} \right]
\end{equation}

We apply this relation to $f$  the eigenfunction correlator.  The quantity  which one then finds on the right side of \eqref{R5} can be rewritten with the help of Lemma~\ref{lem:specavQ}.   The claimed bound  then easily follows using  
the fact that  $ |a+b|^s \leq |a|^s + |b|^s $ (for $ 0< s < 1$), and   $ \sum_{\kappa} \langle \delta_\bx ,   \Pi_\kappa(E)   \,  \delta_\bx\rangle^{1-s} \,
\left|\langle \delta_\bx  ,  \Pi_\kappa(E) \delta_\bw \rangle\right|^s \leq 1 $, in \eqref{refresh}.
\end{proof}

Applications of the above result are restricted to bounded intervals $ I $. In order to control the eigenfunction correlator associated with the tails of the spectrum we also use:  

\begin{lemma}\label{lem:tailbounds}
Let $ \Omega \subseteq \mathbb{Z}^d $ and $ E \geq 0 $. Then for every $ \bx \in \mathcal{C}^{(n)}(\Omega) $:
\begin{multline}\label{eq:tailbounds}
	\mathbb{E}\left[\left\langle \delta^{(n)}_\bx \, , \, P_{\R\backslash (-E,E)}\big(H^{(n)}_\Omega\big) \, \delta^{(n)}_\bx \right\rangle\right] \\
		\leq  \mathbb{E}\left[e^{|V(0)|} \right]  \, 
		\exp\left\{   \min\{1,(n |\lambda|)^{-1} \}  \big(2dn + \big\| \mathcal{U}(\balpha) \big\|_\infty - E\big)\right\} \, .
\end{multline}
\end{lemma}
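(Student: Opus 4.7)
The plan is to use a Chebyshev-style argument combined with a Feynman--Kac representation. First I would invoke the pointwise inequality $\mathbf{1}_{|x|\ge E}\le e^{\tau(x-E)}+e^{-\tau(x+E)}$, valid for every $\tau\ge 0$, which through the spectral theorem yields the operator estimate
\begin{equation*}
P_{\mathbb{R}\setminus(-E,E)}\bigl(H^{(n)}_\Omega\bigr)\;\le\;e^{-\tau E}\bigl(e^{\tau H^{(n)}_\Omega}+e^{-\tau H^{(n)}_\Omega}\bigr).
\end{equation*}
Taking the diagonal element at $\delta^{(n)}_\bx$ and then the disorder expectation reduces the lemma to an estimate of the form $\mathbb{E}\bigl[\langle\delta^{(n)}_\bx,e^{\pm\tau H^{(n)}_\Omega}\delta^{(n)}_\bx\rangle\bigr]\le \mathbb{E}[e^{|V(0)|}]\,e^{\tau(2dn+\|\mathcal{U}\|_\infty)}$ valid whenever $\tau\le 1$ and $\tau n|\lambda|\le 1$, which is exactly what $\tau = \min\{1,(n|\lambda|)^{-1}\}$ secures.

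To obtain this matrix-element bound, decompose $H^{(n)}_\Omega = T_0 + \lambda V_{\mathrm{tot}}$ with $T_0 := \sum_j (-\Delta_j)+\mathcal{U}$ of operator norm at most $2dn+\|\mathcal{U}\|_\infty$, and $V_{\mathrm{tot}}(\bx):=\sum_j V(x_j)$ diagonal in the position basis. Since $\sum_j (-\Delta_j)$ differs by a constant multiple of the identity from $-L$, the generator of the $n$-particle independent continuous-time simple random walk $(\bx_s)_{s\ge 0}$ on $\Omega^n$, the Lie--Trotter product formula combined with Feynman--Kac gives
\begin{equation*}
\langle\delta^{(n)}_\bx,e^{-\tau H^{(n)}_\Omega}\delta^{(n)}_\bx\rangle\;=\;\mathbb{E}^\bx\!\Bigl[\exp\Bigl(-\textstyle\int_0^\tau W(\bx_s)\,ds\Bigr)\mathbf{1}_{\bx_\tau=\bx}\Bigr],
\end{equation*}
where $W:=\mathcal{U}+\lambda V_{\mathrm{tot}}$. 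For the opposite-sign branch $e^{+\tau H^{(n)}_\Omega}$ the kinetic operator fails to be a Markov generator, but because the configuration graph on $\Omega^n$ is bipartite there is a diagonal unitary $U$ satisfying $UAU=-A$ and $UWU=W$, so $\langle\delta^{(n)}_\bx,e^{\tau(-A+W)}\delta^{(n)}_\bx\rangle=\langle\delta^{(n)}_\bx,e^{\tau(A+W)}\delta^{(n)}_\bx\rangle$. This reduces the $e^{+\tau H}$ case to the same Feynman--Kac form up to a kinetic prefactor absorbed into $e^{\tau\cdot 2dn}$.

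It remains to perform the disorder average inside the path expectation. Writing $\int_0^\tau V_{\mathrm{tot}}(\bx_s)\,ds=\sum_{u} L_\tau(u)V(u)$ with $L_\tau(u)$ the local time of the walk at $u$ and $\sum_u L_\tau(u)=n\tau$, independence of $\{V(u)\}$ factorizes the expectation as $\prod_u\mathbb{E}[e^{\pm\lambda L_\tau(u)V(u)}]\le\prod_u\mathbb{E}[e^{|\lambda| L_\tau(u)|V(0)|}]$. The condition $\tau n|\lambda|\le 1$ guarantees $|\lambda|L_\tau(u)\in[0,1]$ for every $u$, and Jensen's inequality applied to the concave map $t\mapsto t^{|\lambda|L_\tau(u)}$ gives $\mathbb{E}[e^{|\lambda|L_\tau(u)|V(0)|}]\le\mathbb{E}[e^{|V(0)|}]^{|\lambda|L_\tau(u)}$; the product then telescopes to $\mathbb{E}[e^{|V(0)|}]^{|\lambda|n\tau}\le\mathbb{E}[e^{|V(0)|}]$. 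Combining this with the deterministic bound $e^{-\int_0^\tau\mathcal{U}(\bx_s)ds}\le e^{\tau\|\mathcal{U}\|_\infty}$ and the trivial return-probability bound $\mathbb{E}^\bx[\mathbf{1}_{\bx_\tau=\bx}]\le 1$ delivers the claim. The main obstacle throughout is the non-commutativity of $T_0$ and $\lambda V_{\mathrm{tot}}$ inside the matrix exponential; the Lie--Trotter-plus-Feynman--Kac representation resolves it, with the bipartite sign-flip unitary being the device that keeps the $e^{+\tau H}$ branch within the same Feynman--Kac framework.
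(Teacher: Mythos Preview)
Your approach is essentially the paper's: Chebyshev reduction to semigroup bounds, Feynman--Kac, a Jensen estimate for the disorder average, and the same choice $\tau=\min\{1,(n|\lambda|)^{-1}\}$. The only presentational differences are that the paper applies Jensen directly to the probability measure $(n\tau)^{-1}\int_0^\tau\sum_u N_u(\by(s))\,ds$ inside the exponential (rather than via local times and concavity of $t\mapsto t^p$), and dispatches the $e^{+\tau H}$ branch with the terse remark ``a similar bound holds for $t<0$''; your bipartite-unitary device is precisely what makes that step explicit.
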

\begin{proof}
The Chebyshev-type inequality,  
$	 \indfct_{\R\backslash (-E,E)}(x)\leq e^{-tE} \left( e^{tx} + e^{-tx} \right)$, reduces the bound to one on the semigroup
for which we employ  the Feynman-Kac representation (cf.\ \cite[Prop.~II.3.12]{CaLa90}) to show that for any $ t>0 $:
\begin{align}
	& \mathbb{E}\left[ \left\langle \delta_\bx \, , e^{t H_\Omega }\delta_\bx \right\rangle\right]\\
		& \leq \int \mathbb{E}\left[\exp\left(\int_0^t \left(\sum_{u\in\Omega}\lambda V(u)N_u(\by(s)) + \mathcal{U}(\by(s);\balpha) \right) ds  \right)\right] \,   \nu^{(\bx;t)}_\Omega(d\by) \notag \\
		& \leq \mathbb{E}\left[ e^{ n t  \, |\lambda| |V(0)|}\right] \, \exp\left\{  t \big(2dn +  \big\| \mathcal{U}(\balpha) \big\|_\infty \big)\right\} 
		\,  \left\langle \delta_\bx \, , e^{ t \sum_j\Delta_{\Omega,j} }\delta_\bx \right\rangle \, , \notag
\end{align}
where $ \nu^{(\bx;t)}_\Omega $ is the measure generated by $ \sum_j  \Delta_{\Omega,j} $ on path $ \{\by(s) \}_{0\leq s \leq t} $, starting in and returning to $ \bx $ in time $ t $. 
The last inequality is a version of Jensen applied to average $ (n t )^{-1} \sum_{u \in \Omega} \int_0^t ( \cdot) \, N_u(\by(s)) \, ds $ in the exponential. 
A similar bound holds for $ t < 0 $. 
The proof is completed using the operator bound $	 \left\langle \delta_\bx \, , e^{t \Delta_\Omega }\delta_\bx \right\rangle \leq e^{2dn |t|} $
for the free semigroup and the choice $ t = \min\{1,(n |\lambda|)^{-1} \}$.
\end{proof}


\section{Implications of localization in subsystems}\label{sec:ILSS} 

In the induction step of the proof of the main result, we shall be considering  for a system of $n$ particles the consequences of localization bounds which are already established for subsystems.   In this section we present some results which will be useful for that purpose; first considering the case when the subsystems are combined without interaction, and then the more involved situation where the two subsystems are coupled via short range interaction.  
For a partition of the index set $\{1,...,n\}$ into disjoint subsets $J$ and $K$, 
we denote the coordinates of the two subsystems as $\bx_J := \{x_j\}_{j\in J}$ and correspondingly $\bx_K$.    

\subsection{Localization for non-interacting systems}

When two subsystems are put together with no interaction, the Hamiltonian is --  in natural notation,
 \be\label{eq:compsys}
  H^{(J,K)}_\Omega := H^{(J)}_\Omega \oplus H^{(K)}_\Omega\, ,
   \ee
   acting in $ \ell^2(\Omega)^{|J|} \oplus \ell^2(\Omega)^{|K|} $.
A complete  set of eigenfunctions of  the operator sum  can be obtained by taking 
products of the subsystems' eigenfunctions.
Clearly, if the subsystems  exhibited spectral localization, that property will be inherited by the composite system. 

The question of localization properties of the corresponding Green function, which is an important tool for our analysis, is a bit less immediate: $ G^{(J,K)}_\Omega $ is a convolution, with respect to the energy, of the subsystems' Green functions, i.e., 
for any $ \bx = (\bx_J,\bx_K) $, $  \by = (\by_J,\by_K) $, and  $ z \in \mathbb{C}\backslash \mathbb{R} $: 
\begin{equation}
	 G^{(J,K)}_\Omega(\bx,\by;z) = \int_\mathcal{C} G^{(J)}_\Omega(\bx_J,\by_J;z-\zeta) \, G^{(K)}_\Omega(\bx_K,\by_K;\zeta) \; \frac{d\zeta}{2\pi i } \, ,
\end{equation}
where
$ \mathcal{C} $ is any closed contour in $ \mathbb{C} $, 
which encloses the spectrum of $H^{(K)}_\Omega $ but none of  $ H^{(J)}_\Omega - z $.
Given the singular nature of the $E$-dependence of the Green function, localization in the sense of Definition~\ref{def:loc} is  not immediately obvious.  
To establish it, we  take a  detour via eigenfunction correlators.  These  are less singular in $E$, but share 
the convolution structure, which  in this case can be written in the form: 
\begin{align}\label{eq:Qfactor}
	Q^{(J,K)}_\Omega(\bx,\by;I) = & \int Q^{(J)}_\Omega(\bx_J,\by_J; I -E ) \, Q^{(K)}_\Omega(\bx_K,\by_K; dE ) 	\notag  \\
		\leq &\,  Q^{(J)}_\Omega(\bx_J,\by_J; \mathbb{R}) \; Q^{(K)}_\Omega(\bx_K,\by_K; \mathbb{R}) \, , 
\end{align}

The following result will allow us to apply this relation.  

\begin{lemma}\label{cor:efcor}
Let $ \Gamma \in \R_+\times \R^{p-1} $ be a sub-conical domain of uniform $ n $-particle localization.
Then there exist  $ A , \xi \in (0,\infty) $ such that the eigenfunction correlator corresponding to up to $ n $ particles, i.e., $k \in \{ 1 , \dots, n \}$, is exponentially bounded for all $ (\lambda,\balpha)\in \Gamma $, and all $ \bx , \by \in(\Z^d)^k$:
\begin{equation} \label{eq:Qbound}  
	\sup_{\Omega \subset \mathbb{Z}^d}  \E{Q_\Omega^{(k)}(\bx,\by;\mathbb{R})} \leq \; A \, e^{- \dist_{\mathcal H}(\bx,\by)/\xi} \, .
\end{equation}
\end{lemma}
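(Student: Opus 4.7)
My plan is to split $Q^{(k)}_\Omega(\bx,\by;\mathbb{R}) = Q^{(k)}_\Omega(\bx,\by;I_E) + Q^{(k)}_\Omega(\bx,\by;\mathbb{R}\setminus I_E)$ with $I_E := [-E,E]$ and $E$ eventually chosen proportional to $\dist_\mathcal{H}(\bx,\by)$. The principal obstacle is that Theorem~\ref{thm:Q<G} controls $\mathbb{E}[Q(\bx,\by;I,s)]$ only for $s<1$, whereas the statement requires $s=1$; the log-convexity of \eqref{Q2} and \eqref{eq:increase} yields no useful upper bound at the endpoint $s=1$. I would circumvent this by the Schwarz inequality inside each spectral subspace, $|\langle \delta_\bx, P_{\{E\}}\delta_\by\rangle| \le \langle \delta_\bx, P_{\{E\}}\delta_\bx\rangle^{1/2} \langle \delta_\by, P_{\{E\}}\delta_\by\rangle^{1/2}$, which I would rewrite as
\begin{equation*}
|\langle \delta_\bx, P_{\{E\}}\delta_\by\rangle| \le \langle \delta_\bx, P_{\{E\}}\delta_\bx\rangle^{(1-s)/2}\, \langle \delta_\by, P_{\{E\}}\delta_\by\rangle^{(1-s)/2}\, |\langle \delta_\bx, P_{\{E\}}\delta_\by\rangle|^{s}\, .
\end{equation*}
Summing over eigenvalues in $I$ and applying Cauchy--Schwarz in $E$ yields the pointwise estimate $Q^{(k)}_\Omega(\bx,\by;I,1) \le Q^{(k)}_\Omega(\bx,\by;I,s)^{1/2}\, Q^{(k)}_\Omega(\by,\bx;I,s)^{1/2}$, and a second Cauchy--Schwarz applied to the disorder expectation reduces the bulk to bounding $\mathbb{E}[Q^{(k)}_\Omega(\bx,\by;I_E,s)]$ together with its counterpart in which $\bx$ and $\by$ are swapped.

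For the bulk I would feed Theorem~\ref{thm:Q<G} a distinguished site $u$ chosen as a particle of $\bx$ that realises $\max_i \dist(\{x_i\},Y)$, converting the eigenfunction correlator into $\int_{I_E} \mathbb{E}[|G^{(k)}_\Omega(\by,\bw;E)|^s]\, dE$ summed over $\bw\in\mathcal{C}^{(k)}(\Omega;u)$. The hypothesised exponential Green function decay of Definition~\ref{def:loc}, at some rate $\xi_0$, furnishes $A\, |I_E|\, e^{-\dist_\mathcal{H}(\by,\bw)/\xi_0}$ for each summand. Since $u\in W$, one has $\dist_\mathcal{H}(\by,\bw) \ge \dist(\{u\},Y) = \max_i \dist(\{x_i\},Y)$, so that a factor $e^{-\max_i \dist(\{x_i\},Y)/(2\xi_0)}$ can be split off; the residual sum is controlled by the coarser lower bound $\dist_\mathcal{H}(\by,\bw) \ge k^{-1}\sum_i \dist(\{w_i\},Y)$, which reduces it to a product of $k$ convergent geometric series over $\mathbb{Z}^d$. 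Running the symmetric argument with $u$ a particle of $\by$ yields the matching decay in $\max_i \dist(\{y_i\},X)$, and after the Cauchy--Schwarz on expectation the geometric mean of the two estimates gives $\mathbb{E}[Q(\bx,\by;I_E,1)] \le C\, |I_E|\, e^{-\dist_\mathcal{H}(\bx,\by)/(4\xi_0)}$ with constants uniform on $\Gamma$.

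The tail is then direct: Cauchy--Schwarz gives $Q^{(k)}_\Omega(\bx,\by;\mathbb{R}\setminus I_E,1) \le \sqrt{\langle \delta_\bx, P_{\mathbb{R}\setminus I_E}\delta_\bx\rangle\, \langle \delta_\by, P_{\mathbb{R}\setminus I_E}\delta_\by\rangle}$, and Lemma~\ref{lem:tailbounds} together with Jensen's inequality bounds its expectation by $\exp\{-c(\lambda,n)\, E + C(\lambda,n,p)\}$ once $E$ exceeds $2dn + \|\mathcal{U}(\balpha)\|_\infty$. The sub-conical hypothesis enters precisely here: via \eqref{eq:Hbound} it guarantees $\|\mathcal{U}(\balpha)\|_\infty \le c\,n\,|\lambda|$, ensuring that the rate $c(\lambda,n)$ and offset $C(\lambda,n,p)$ are uniform on $\Gamma$. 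Setting $E = \kappa\, \dist_\mathcal{H}(\bx,\by)$ for $\kappa$ small enough that the tail rate exceeds $1/(4\xi_0)$ and large enough to swallow the bulk prefactor $|I_E|$ into a marginally weakened exponential then produces \eqref{eq:Qbound} with a new localization length $\xi$ depending on $\xi_0$, $|\lambda|$, $n$ and $p$. The Schwarz-type interpolation in the first paragraph is the only genuinely new ingredient; the remaining steps parallel familiar arguments from the one-particle theory.
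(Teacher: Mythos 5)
Your proof is correct and follows essentially the paper's strategy: the same bulk/tail split of $Q(\bx,\by;\R)$ at a cutoff $E\propto\dist_{\mathcal H}(\bx,\by)$, Theorem~\ref{thm:Q<G} together with the Green-function decay of Definition~\ref{def:loc} for the bulk, and Lemma~\ref{lem:tailbounds} via Cauchy--Schwarz for the tail. Where you improve on the argument as written is in the passage from the fractional correlator $\mathbb{E}\left[Q(\bx,\by;I;s)\right]$, $s<1$, to $\mathbb{E}\left[Q(\bx,\by;I;1)\right]$: the paper invokes \eqref{eq:increase}, but as printed that inequality (exponent $(t-s)/(1-s)$; from log-convexity \eqref{Q2} and $Q(1)\le 1$ one in fact obtains exponent $(1-t)/(1-s)$) yields only the trivial bound at $t=1$. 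Your symmetrized Schwarz interpolation $Q(\bx,\by;I;1)\le Q(\bx,\by;I;s)^{1/2}\,Q(\by,\bx;I;s)^{1/2}$, followed by Cauchy--Schwarz on the disorder expectation and two applications of Theorem~\ref{thm:Q<G} with the distinguished site taken alternately in $\bx$ and in $\by$, is a correct and transparent way to close this step. The remainder --- factorizing the configuration sum, invoking sub-conicality to bound $\|\mathcal{U}(\balpha)\|_\infty$ uniformly on $\Gamma$, and balancing the cutoff --- matches the paper's use of Lemma~\ref{lem:sum} and its choice of $E$.
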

\begin{proof} 
As an immediate consequence of  \eqref{eq:increase}, Theorem~\ref{thm:Q<G}, and Lemma~\ref{lem:sum}, we know that there is $  A , \xi \in (0,\infty) $ 
such that 
\begin{equation}\label{eq:efcorproof}
 \E{Q_\Omega^{(k)}(\bx,\by;[ -E , E])} \leq \;  A \, \frac{ E}{|\lambda|} \, e^{ -\dist_{\mathcal{H}}(\bx,\by)/\xi} \, , 
\end{equation}
for any  $ E > 0 $ and any $ (\lambda,\balpha) \in \Gamma $.  
For a bound which is uniform in $E$ we combine this with 
Lemma~\ref{lem:tailbounds}, which with the help of  the Cauchy-Schwarz inequality implies:
\begin{align}
	& \E{Q^{(k)}_\Omega(\bx,\by;\mathbb{R})} - \E{Q_\Omega^{(k)}(\bx,\by;[ -E , E])} \notag \\ 
	&	\leq  \left( \E{Q^{(k)}_\Omega(\bx,\bx;\mathbb{R}\backslash [-E,E])} \,\E{Q^{(k)}_\Omega(\by,\by;\mathbb{R}\backslash [-E,E])} \right)^{1/2} \notag \\
	& \leq    \mathbb{E}\left[e^{|V(0)|} \right]  \, \exp\left\{ \min\{1,( k |\lambda|)^{-1}\} \left( 2d k + \big\| \mathcal{U}(\balpha) \big\|_\infty - E \right) \right\}  \, . 
\end{align}
Choosing the cutoff at $ E = 2d k + \big\| \mathcal{U}(\balpha) \big\|_\infty +  \max\{1,k \, |\lambda|\} \dist_{\mathcal H}(\bx,\by)  / \xi $ one obtains the claimed exponential bound.  
Using the fact that  $ \Gamma $ is sub-conical in sense of Definition~\ref{def:subcon}, the above argument yields a $(\lambda,\balpha) $-independent  
amplitude ``$A$'' in~\eqref{eq:Qbound}.
\end{proof}

The two-way relation between the eigenfunction correlators and fractional moments of the Green function, and the factorization property \eqref{eq:Qfactor},  allows us now to establish:

\begin{theorem}\label{thm:noninteracting} 
Let $ \Gamma \in \R_+\times \R^{p-1} $ be a sub-conical domain of uniform $ n $-particle localization. For any $s\in (0,1)$ there is $\xi ,A \in (0, \infty) $ such that  the  Green function of the composition~\eqref{eq:compsys} of any pair $ (J,K) $ of systems of  at most $n $ particles (i.e., $\max\{ |J|, |K| \} \leq n $) is bounded for all $ (\lambda,\balpha) \in \Gamma $ and all $ \bx = (\bx_J,\bx_K) $, $  \by = (\by_J,\by_K) $:
\begin{equation} 
	\sup_{\substack{I \subset \mathbb{R} \\ |I| \geq 1}} \sup_{\Omega \subset \mathbb{Z}^d}  \widehat{\mathbb{E}}_I\left( \left| G^{(J,K)}_\Omega(\bx,\by)\right|^s    \right) \ \leq \ 
	A\,  e^{-\dist_{\mathcal H}^{(J,K)}(\bx,\by)/\xi} \, ,  
\end{equation}
with 
$  \dist_{H}^{(J,K)}(\bx,\by) := \max\{\dist_{\mathcal H}(\bx_J, \by_J) ,\dist_{\mathcal H}(\bx_K, \by_K) \}$. \end{theorem}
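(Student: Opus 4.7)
The plan is to pass through the eigenfunction correlator, exploiting the detour from Green function fractional moments $\leadsto$ eigenfunction correlators $\leadsto$ factorization into subsystems $\leadsto$ back to Green function bounds. The direct convolution formula for $G^{(J,K)}_\Omega$ is too singular in the energy to be convenient, but the correlator $Q^{(J,K)}_\Omega$ has the clean factorization \eqref{eq:Qfactor} which is the essential structural input.

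Concretely, I would first apply Theorem~\ref{lem:G<Q} to the composite operator $H^{(J,K)}_\Omega$:
\begin{equation*}
	\int_I \left| G^{(J,K)}_\Omega(\bx,\by;E) \right|^s dE \leq \frac{2\, |I|^{1-s}}{1-s} \, Q^{(J,K)}_\Omega(\bx,\by;\mathbb{R})^s ,
\end{equation*}
so that after dividing by $|I|$ and taking expectations and using Jensen's inequality ($x\mapsto x^s$ is concave for $s\in(0,1)$) one gets
\begin{equation*}
	\widehat{\mathbb{E}}_I\left[ \left| G^{(J,K)}_\Omega(\bx,\by)\right|^s \right] \leq \frac{2\, |I|^{-s}}{1-s}\, \mathbb{E}\!\left[ Q^{(J,K)}_\Omega(\bx,\by;\mathbb{R})\right]^s .
\end{equation*}
The factorization \eqref{eq:Qfactor} gives $Q^{(J,K)}_\Omega(\bx,\by;\mathbb{R}) \leq Q^{(J)}_\Omega(\bx_J,\by_J;\mathbb{R})\cdot Q^{(K)}_\Omega(\bx_K,\by_K;\mathbb{R})$.

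Now to control the expectation, I would use the trivial but crucial observation from \eqref{Q1} that each factor is pointwise bounded by $1$. Thus $\mathbb{E}[Q^{(J,K)}_\Omega(\bx,\by;\mathbb{R})] \leq \min\bigl\{ \mathbb{E}[Q^{(J)}_\Omega(\bx_J,\by_J;\mathbb{R})], \mathbb{E}[Q^{(K)}_\Omega(\bx_K,\by_K;\mathbb{R})]\bigr\}$, since one can always bound one factor by $1$ before taking the expectation. Applying Lemma~\ref{cor:efcor} (which is available because $|J|, |K| \leq n$ and $\Gamma$ is a sub-conical domain of uniform $n$-particle localization) to both factors yields
\begin{equation*}
\mathbb{E}\!\left[ Q^{(J,K)}_\Omega(\bx,\by;\mathbb{R})\right] \leq A'\, e^{-\max\{\dist_{\mathcal H}(\bx_J,\by_J),\,\dist_{\mathcal H}(\bx_K,\by_K)\}/\xi'} = A'\, e^{-\dist_{\mathcal H}^{(J,K)}(\bx,\by)/\xi'}.
\end{equation*}
Inserting this back and using $|I|\geq 1$ then gives the claimed bound with adjusted constants $A = 2(A')^s/(1-s)$ and $\xi = \xi'/s$.

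The only genuine subtlety, and what might naively look like an obstacle, is that although the two subsystems are dynamically decoupled, the random potential $V$ is shared between $H^{(J)}_\Omega$ and $H^{(K)}_\Omega$, so $Q^{(J)}$ and $Q^{(K)}$ are \emph{not} independent random variables and one cannot factor the expectation of their product. This is precisely where the elementary bound $Q\leq 1$ saves the day: instead of needing an independence-type factorization $\mathbb{E}[Q^{(J)}Q^{(K)}] = \mathbb{E}[Q^{(J)}]\mathbb{E}[Q^{(K)}]$, we only need the asymmetric Hausdorff-type decay $e^{-\max\{\cdot,\cdot\}/\xi}$, which can be produced from either single-subsystem bound in isolation. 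The remaining steps are routine manipulations.
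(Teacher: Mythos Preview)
Your proposal is correct and follows essentially the same route as the paper's own proof: apply Theorem~\ref{lem:G<Q} together with Jensen's inequality to pass to the eigenfunction correlator, then use the factorization \eqref{eq:Qfactor}, the pointwise bound $Q\le 1$ on one factor, and Lemma~\ref{cor:efcor} on the factor with the larger separation. Your explicit remark that the shared randomness prevents factoring the expectation, and that this is circumvented by the bound $Q\le 1$, is exactly the point the paper leaves implicit.
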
 
\begin{proof}
By Theorem~\ref{lem:G<Q}, and the Jensen's inequality,  for any $ s \in (0,1) $ there is a constant $ C = C(s) < \infty $ such that 
\begin{equation}
	\widehat{\mathbb{E}}_I\left[ \left| G_\Omega^{(J,K))}(\bx,\by)\right|^{s}  \right] \leq  \frac{C}{| I |^s} \, \E{Q^{(J,K)}_\Omega(\bx,\by;I)}^s \, .
\end{equation}
The claim follows by combining:  {\em i)\/}  the product formula \eqref{eq:Qfactor},  {\em ii)\/} the uniform bound $ Q \leq 1 $, and   {\em iii)\/}  the bound of Lemma~\ref{cor:efcor}, applied to the factor with the greater separation.
\end{proof}

\subsection{Decay away from clustered configurations}

We now turn to the more involved situation, where a system consists of subsystems which separately exhibit localization, but which  are put together with an interaction. 
Intuitively,  the decay of the fractional moments for the subsystems should imply smallness of the corresponding kernel  for the composite system for pairs of configurations where  at least one of the pair can be split into two well separated parts.  

To express this idea in a bound, we shall use the  notion of the splitting width of a configuration:
\begin{equation}
 \ell(\bx):= \max_{\substack{J,K \\  J\dot{\cup} K = \{1 , \dots , n \}}}  \quad \min_{j\in J , \, k \in K } | x_j - x_k| \, , 
 \end{equation} 
 where the maximum runs over all the two-set partitions of the index set $ \{1 , \dots , n \}$. It is easy see that $ \diam(\bx)/ (n-1) \leq \ell(\bx)  \leq \diam(\bx) $ (cf.\ Appendix~\ref{app:dist}).

 \begin{theorem}\label{thm:apriori}
Let $ \Gamma^{(p)}_{n-1} \subset \R_+ \times \R^{p-1} $  be  a sub-conical  domain of uniform $ (n-1) $-particle localization, with $ n \geq 2 $.   
Then there exist some 
 $ s \in (0,1) $, $ A , \xi < \infty $ such that for all $ (\lambda,\balpha) \in \Gamma^{(p)}_{n-1} $ and all $ \bx, \by \in (\Z^d)^n $:
\begin{multline}\label{eq:apriori}
	\sup_{\substack{I \subset \mathbb{R} \\ |I| \geq 1}} \sup_{\Omega \subseteq \mathbb{Z}^d }\, 
	\widehat{\mathbb{E}}_I\left[ |G_\Omega^{(n)}(\bx,\by)|^{s}\right]\\
	 	 \leq \; A  \, \exp\left( - \frac{1}{\xi} \min\left\{ 
		 	\dist_{\mathcal H}(\bx,\by), \max\{\ell(\bx), \ell(\by)\} \right\} \right)  \, .
\end{multline}
\end{theorem}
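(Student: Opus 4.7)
The plan is to decouple the $n$-particle Green function into a non-interacting piece (covered by Theorem~\ref{thm:noninteracting}) plus a correction supported where the two clusters interact. Assume without loss of generality $\ell(\bx) \geq \ell(\by) =: L$, and choose a two-set partition $J \,\dot\cup\, K = \{1,\ldots,n\}$ realizing the splitting width of $\bx$, so that $|x_j - x_k| \geq L$ for all $j\in J$, $k\in K$. Decompose $H^{(n)}_\Omega = H^{(J,K)}_\Omega + W$, where $H^{(J,K)}_\Omega$ is the non-interacting composition in the sense of \eqref{eq:compsys} and $W$ is the diagonal multiplication operator collecting those terms of $\mathcal{U}$ that couple a $J$-particle to a $K$-particle. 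Because $\mathcal{U}$ has range $\ell_U$, the support of $W$ lies in the interaction region $\mathcal{I} := \{\mathbf{z} : \min_{j\in J,\, k\in K} |z_j - z_k| \leq \ell_U\}$; in particular $\bx \notin \mathcal{I}$ once $L > \ell_U$.

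The second resolvent identity, combined with the fact that $W$ is diagonal, yields
\begin{equation*}
G^{(n)}_\Omega(\bx,\by) \;=\; G^{(J,K)}_\Omega(\bx,\by) \;-\; \sum_{\mathbf{z} \in \mathcal{I}} G^{(J,K)}_\Omega(\bx,\mathbf{z}) \, W(\mathbf{z}) \, G^{(n)}_\Omega(\mathbf{z},\by) .
\end{equation*}
Fix $s \in (0,1/2)$ and apply $|a+b|^s \leq |a|^s + |b|^s$ followed by $\widehat{\mathbb{E}}_I$. For the first term, Theorem~\ref{thm:noninteracting} (applicable since $\max\{|J|,|K|\} \leq n-1$) yields
\begin{equation*}
\widehat{\mathbb{E}}_I\bigl[|G^{(J,K)}_\Omega(\bx,\by)|^s\bigr] \;\leq\; A \, e^{-\dist_{\mathcal H}^{(J,K)}(\bx,\by)/\xi} \;\leq\; A \, e^{-\dist_{\mathcal H}(\bx,\by)/\xi},
\end{equation*}
the last step using the elementary property $\dist_{\mathcal H}(\bx,\by) \leq \max\{\dist_{\mathcal H}(\bx_J,\by_J), \dist_{\mathcal H}(\bx_K,\by_K)\}$.

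For the sum, the key geometric fact is that every $\mathbf{z} \in \mathcal{I}$ satisfies $\dist_{\mathcal H}^{(J,K)}(\bx,\mathbf{z}) \geq (L-\ell_U)/2$: if $|z_j - z_k| \leq \ell_U$ with $j \in J$, $k \in K$, then either $\dist(z_j, X_J) \geq (L-\ell_U)/2$, or some $x_{j'} \in X_J$ lies within $(L-\ell_U)/2$ of $z_j$, in which case the separation $|x_{j'} - x_{k'}| \geq L$ for all $x_{k'} \in X_K$ combined with the triangle inequality forces $\dist(z_k, X_K) > (L-\ell_U)/2$. Since $G^{(J,K)}$ and $G^{(n)}$ share the same disorder, decouple them by Cauchy-Schwarz on $\widehat{\mathbb{E}}_I$ (this is where $s<1/2$ enters):
\begin{equation*}
\widehat{\mathbb{E}}_I\bigl[|G^{(J,K)}(\bx,\mathbf{z})|^s |W(\mathbf{z})|^s |G^{(n)}(\mathbf{z},\by)|^s\bigr] \;\leq\; \|W\|_\infty^s \, \widehat{\mathbb{E}}_I\bigl[|G^{(J,K)}(\bx,\mathbf{z})|^{2s}\bigr]^{1/2} \widehat{\mathbb{E}}_I\bigl[|G^{(n)}(\mathbf{z},\by)|^{2s}\bigr]^{1/2}.
\end{equation*}
Theorem~\ref{thm:noninteracting} at exponent $2s<1$ bounds the first factor by $A' e^{-\dist_{\mathcal H}^{(J,K)}(\bx,\mathbf{z})/\xi'}$, while Theorem~\ref{prop:frac} controls the second factor uniformly in $\mathbf{z}, \by, \Omega$, and the energy.

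The main obstacle is executing the $\mathbf{z}$-sum uniformly in $\Omega$. Splitting $e^{-\dist_{\mathcal H}^{(J,K)}(\bx,\mathbf{z})/\xi'} \leq e^{-(L-\ell_U)/(4\xi')} \, e^{-\dist_{\mathcal H}^{(J,K)}(\bx,\mathbf{z})/(2\xi')}$ extracts the required factor $e^{-cL}$ while leaving a convergent sum: since $\dist_{\mathcal H}^{(J,K)}(\bx,\mathbf{z}) \geq \frac{1}{2n}\bigl(\sum_{j\in J}\dist(z_j,X_J) + \sum_{k\in K}\dist(z_k,X_K)\bigr)$, the sum over $\mathbf{z} \in \Omega^n$ factors into $n$ independent one-particle geometric sums, each bounded uniformly in $\Omega$ by a constant depending only on $n$, $\xi'$, and $|X_J \cup X_K| \leq n$. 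Combining the two contributions,
\begin{equation*}
\widehat{\mathbb{E}}_I\bigl[|G^{(n)}_\Omega(\bx,\by)|^s\bigr] \;\leq\; A''\bigl(e^{-\dist_{\mathcal H}(\bx,\by)/\xi''} + e^{-L/\xi''}\bigr) \;\leq\; 2A'' \, e^{-\min\{\dist_{\mathcal H}(\bx,\by),\, L\}/\xi''},
\end{equation*}
which is precisely the claimed bound since $L = \max\{\ell(\bx),\ell(\by)\}$. The symmetric case $\ell(\by) > \ell(\bx)$ is handled identically by swapping the roles of $\bx$ and $\by$ and choosing the partition from $\by$.
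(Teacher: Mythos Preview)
Your argument is correct and follows the same strategy as the paper: split off the non-interacting reference Hamiltonian $H^{(J,K)}_\Omega$ via the partition realizing $\ell(\bx)$, bound the leading term by Theorem~\ref{thm:noninteracting}, and control the resolvent-identity remainder using that the coupling $W$ is supported far (in $\dist_{\mathcal H}^{(J,K)}$) from $\bx$. Your choice to take $s<1/2$ from the outset and apply Cauchy--Schwarz directly is a mild simplification of the paper's H\"older-then-Cauchy--Schwarz manoeuvre; since the statement only asks for \emph{some} $s\in(0,1)$, nothing is lost.

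Two small points to clean up. First, the line ``$\ell(\bx)\ge\ell(\by)=:L$'' literally defines $L=\ell(\by)$, whereas the rest of the argument and the final identification $L=\max\{\ell(\bx),\ell(\by)\}$ require $L=\ell(\bx)$; this is just a typo. Second, Theorem~\ref{prop:frac} bounds $\widehat{\mathbb{E}}_I[|G^{(n)}(\mathbf{z},\by)|^{2s}]^{1/2}$ by $C|\lambda|^{-s}$, not by a constant independent of $\lambda$; to get an amplitude $A$ uniform over $\Gamma^{(p)}_{n-1}$ you should observe, as the paper does in its closing line, that the product $\|W\|_\infty^s\,|\lambda|^{-s}$ is bounded thanks to the sub-conical hypothesis together with~\eqref{eq:Hbound}.
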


\begin{proof} 
We fix $ \bx  , \by \in \mathcal{C}^{(n)}(\Omega)$ and assume without loss of generality that $ \ell(\bx) \geq \ell(\by) $. 
We then split $ \bx $ into two clusters,  $\bx_{J}, \bx_{K} $  such  that
\begin{equation}\label{eq:mindistD}
	\ell(\bx) = \min_{j\in J , \, k \in K } | x_j - x_k|  \, .
\end{equation}
Between the two clusters, $\bx_{J}, \bx_{K} $ we remove all interactions, leaving the inter-cluster interaction untouched. 
The resulting operator is a direct sum of two non-interacting subsystems, $ H^{(J,K)}_\Omega :=  H^{(J)}_\Omega \oplus H^{(K)}_\Omega $ 
acting in $ \ell^2(\Omega)^{|J|} \otimes \ell^2(\Omega)^{|K|} $
where 
\begin{equation}
 H^{(J)}  := \sum_{j=1}^{|J|}  -\Delta_j  + \lambda \, V_\omega(x_j) 
	 	+ \sum_{k=2}^{|J|} \alpha_k \mkern-5mu\sum_{\substack{A \subset \mathbb{Z}^d \, : \, |A| =  k \\ \diam A \leq \ell_U}}    U_A(\mathcal{N}_A(\bx_J))
\end{equation} 
and similarly for $ H^{(K)} $. Denoting the Green function corresponding to $ H^{(J,K)}_\Omega $ by $ G^{(J,K)}_\Omega$,
and using $ |a + b |^s \leq |a |^s + |b|^s $ and we thus have
\begin{multline}\label{eq:resolventNI1}
\widehat{\mathbb{E}}_I\left[ |G_\Omega(\bx,\by)|^{{s}}\right]  \leq 
		\widehat{\mathbb{E}}_I\left[ |G^{(J,K)}_\Omega(\bx,\by)|^{{s}}\right] 		 + 	\widehat{\mathbb{E}}_I\left[ |G^{(J,K)}_\Omega(\bx,\by)-G_\Omega(\bx,\by)|^{{s}}\right] 
\end{multline}
Since $ G^{(J,K)}_\Omega$ is a Green function of a composite system, whose parts are assumed to exhibit uniform $ (n-1) $-particle localization, 
Lemma~\ref{thm:noninteracting} 
guarantees the existence of $ {s} \in (0,1)  $ and $ A , \xi \in (0,\infty) $ such that for all $ (\lambda , \balpha) \in   \Gamma^{(p)}_{n-1} $ and all $ \bx , \by $: 
\be \label{eq:expboundNI}
\widehat{\mathbb{E}}_I\left[ \left|G^{(J,K)}_\Omega(\bx,\by)\right|^{s}\right] 
	\  \leq  \  A \, e^{ - \dist^{(J,K)}_{\mathcal H}(\bx,\by)/\xi}   \ \leq  \  A \, e^{ - \dist_{\mathcal H}(\bx,\by)/\xi}  \, ,
\ee
where the last step is by the general relation
$ \dist_{\mathcal H}^{(J,K)}(\bx,\by)   \   \geq \  \dist_{\mathcal H}(\bx,\by) $.
To bound the second term we use the resolvent identity 
\begin{equation}\label{eq:resolventNI}
	 \Delta := \; G^{(J,K)}_\Omega(\bx,\by;z) -  \,  G_\Omega(\bx,\by;z)   
	 = \mkern-5mu \sum_{\bw \in \mathcal{C}^{(n)}(\Omega)}\!\!\! G^{(J,K)}_\Omega(\bx,\bw;z) \, U_{J,K}(\bw) \, G_\Omega(\bw,\by;z) \, ,
	 \end{equation}
where $ U_{J,K}  = H - H^{(J,K)} $. In order to be able to apply the Cauchy-Schwarz inequality we first decrease the exponent with the help of H\"older's inequality, 
\begin{equation}
	 \widehat{\mathbb{E}}_I[ |\Delta|^{s} ] \leq \left(\widehat{\mathbb{E}}_I[ |\Delta|^{\frac{3{s}(1+{s})}{2(1+2{s})}} ]\right)^{\frac{1+2{s}}{2+{s}}} 
	 \left( \widehat{\mathbb{E}}_I[ |\Delta|^{\frac{{s}}{2}} ]\right)^{\frac{1-{s}}{2+{s}}} \leq c\;  |\lambda|^{-\frac{3{s}(1+{s})}{2(2+{s})} }
	 \left( \widehat{\mathbb{E}}_I[ |\Delta|^{\frac{{s}}{2}} ]\right)^{\frac{1-{s}}{2+{s}}} \, , 
\end{equation}
where the last inequality is due to \eqref{eq:frac}. 
Inserting \eqref{eq:resolventNI} and using the Cauchy-Schwarz inequality together with \eqref{eq:frac} we thus obtain 
\begin{equation}
	\widehat{\mathbb{E}}_I[ |\Delta|^{s} ] \leq \frac{c}{|\lambda|^{s}}  \sum_{\bw \in \mathcal{C}^{(n)}(\Omega)} \left|  U_{J,K}(\bw) \right|^{{s}\beta} \, \left(\mathbb{E}\left[\left|G^{(J,K)}_\Omega(\bx,\bw;z)\right|^{{s}} \right]\right)^{\beta} 
\end{equation}
where we abbreviated $ \beta := \frac{1-{s}}{2(2+{s})} $.
To estimate the right side note that   
\begin{align}
 \sup_{\bx} | U_{J,K}(\bx)|  = &
 \sup_\bx \Big|\sum_{k=2}^{|J|} \alpha_k \mkern-5mu\sum_{\substack{A \subset \mathbb{Z}^d \, : \, |A| =  k \\ \diam A \leq \ell_U}}    U_A(\mathcal{N}_A(\bx)) \notag \\
 	& \qquad \times \indfct\big[ \mbox{There is $ j \in J $}, \mbox{$ k \in K $ s.t.} \; | x_j - x_k | \leq \ell_U \big] \, \Big| \notag \\
 		 \leq & \; \big\| \mathcal{U}(\balpha) \big\|_\infty \, .
\end{align}
Moreover, the distance of $ \bx $ to  the support of 
$U_{J,K} $
is bounded from below,
\begin{equation}
	 \inf_{\bw \in  \supp U_{J,K} } \dist_{H}^{(J,K)}(\bx,\bw) \geq \ell(\bx) - \ell_U \, ,
\end{equation}
by the triangle inequality and \eqref{eq:mindistD}.
As a consequence, \eqref{eq:expboundNI} and Lemma~\ref{lem:sum}
 yield
\begin{align}
& \widehat{\mathbb{E}}_I[ |\Delta|^{s} ] \; \left(\frac{|\lambda|}{ \big\| \mathcal{U}(\balpha) \big\|_\infty^\beta}\right)^{s}  \exp\left\{ \frac{\beta}{2\xi}\ell(\bx) \right\} \notag \\
	& \qquad\leq  A 
	\sum_{\bw \in \mathbb{Z}^{nd}}  \exp\left\{ -\frac{ \beta}{2\xi} \dist_{H}^{(J,K)}(\bx,\bw)\right\}  < \infty \, . 
\end{align}
The proof is concluded by noting that $ \big\| \mathcal{U}(\balpha) \big\|_\infty^\beta \leq C  \, |\lambda | $ for some $ C < \infty $ for all 
$ (\lambda,\balpha) \in \Gamma^{(p)}_{n-1} $ since the latter is sub-conical.
\end{proof}


The above result will allow us to insert in sums over  $n$-particle configurations  a restriction to ones of  a limited  diameter.
The following bound will be useful for estimates of the remainder.

\begin{corollary}\label{cor:apriori}
Under the hypothesis of Theorem~\ref{thm:apriori}, there exist $ s \in (0,1) $ and $ A , \xi < \infty $ such that 
 for all  $0<  r' \leq r $, the quantity
\be \label{eq:apriorikn1}
	  k_s(\Omega,r,r') \ := \  \sup_{ |x-y |\geq 2r}\mkern-10mu\sum_{\substack{\by \in\mathcal{C}^{(n)}_r(\Omega;y) \\
	  \bx \in\mathcal{C}^{(n)}(\Omega;x) \backslash\mathcal{C}^{(n)}_{r'}(\Omega;x) }} \mkern-10mu\mathbb{E}\left[ \left| G_\Omega(\bx,\by;z) \right|^{s} \right]   
\ee
satisfies  for all $(\lambda,\balpha) \in \Gamma^{(p)}_{n-1}$:
\be \label{eq:apriorikn2}
	  k_s(\Omega,r,r') \ \leq \ 
	  A \,  r^{d(n-1)}  \, |\Omega|^{n-1} \exp\left[-\frac{r'}{(n-1) \, \xi }  \right]  \notag  \, .
\ee
\end{corollary}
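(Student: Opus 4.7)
The plan is to apply Theorem~\ref{thm:apriori} pointwise to each term in the sum and then absorb the combinatorial factor of pairs. The sup over $|x-y|\ge 2r$ and the particular structure of the two sets of configurations force both the Hausdorff separation and the splitting width of $\bx$ to be large, which together activate the minimum inside the exponent of Theorem~\ref{thm:apriori}.

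First, I would observe that for any $\by\in\mathcal{C}^{(n)}_r(\Omega;y)$ every particle of $\by$ sits within distance $r$ of $y$ (since $\diam(\by)\le r$ and at least one particle is at $y$). Combined with the hypothesis $|x-y|\ge 2r$ and the fact that $\bx$ has a particle at $x$, this gives $\dist(\{x\},Y)\ge r$, hence $\dist_{\mathcal H}(\bx,\by)\ge r$. Next, since $\bx\in\mathcal{C}^{(n)}(\Omega;x)\setminus\mathcal{C}^{(n)}_{r'}(\Omega;x)$, one has $\diam(\bx)>r'$, and the elementary inequality $\ell(\bx)\ge \diam(\bx)/(n-1)$ (Appendix~\ref{app:dist}) yields $\max\{\ell(\bx),\ell(\by)\}\ge \ell(\bx)> r'/(n-1)$. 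Because $r'\le r$ and $n\ge 2$, one concludes
\[
\min\bigl\{\dist_{\mathcal H}(\bx,\by),\ \max\{\ell(\bx),\ell(\by)\}\bigr\}\ \ge\ \frac{r'}{n-1}.
\]

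Now Theorem~\ref{thm:apriori} furnishes constants $s\in(0,1)$ and $A,\xi<\infty$, uniform over $(\lambda,\balpha)\in\Gamma^{(p)}_{n-1}$, such that every term in the sum is bounded by $A\exp\bigl(-r'/((n-1)\xi)\bigr)$ (possibly after first passing to the energy-averaged expectation, to which the displayed $k_s$ naturally refers via Theorem~\ref{thm:apriori}). The remaining task is to count the configurations. A configuration $\by\in\mathcal{C}^{(n)}_r(\Omega;y)$ is specified by a choice of which of the $n$ particles is pinned at $y$ and by the positions of the remaining $n-1$ particles inside the box of side $2r+1$ around $y$, giving at most $n(2r+1)^{d(n-1)}$ configurations. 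A configuration $\bx\in\mathcal{C}^{(n)}(\Omega;x)$ is specified by choosing which of the $n$ particles is pinned at $x$ and placing the remaining $n-1$ particles freely in $\Omega$, giving at most $n|\Omega|^{n-1}$. Multiplying the per-term bound by these counts yields the claimed estimate, after absorbing the $n^2$ prefactor into $A$.

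There is no serious obstacle: the only mild point of care is checking that $r'/(n-1)$ is indeed the minimum that governs the exponent (which uses $r'\le r$ and $n\ge 2$), and that the combinatorial counting of configurations produces the polynomial factor $r^{d(n-1)}|\Omega|^{n-1}$ rather than something worse. Everything else is a direct application of Theorem~\ref{thm:apriori}.
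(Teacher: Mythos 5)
Your proposal is correct and follows essentially the same route as the paper's proof: lower-bound the quantity inside the min in Theorem~\ref{thm:apriori} over the restricted configuration pairs, apply the theorem termwise, and count configurations. You are in fact slightly more explicit than the paper on a point it leaves implicit. The paper's proof only writes out the chain $\dist_{\mathcal H}(\bx,\by)\ge |x-y|-r\ge r\ge r'/(n-1)$ (via Lemma~\ref{lem:extension}), without spelling out that one also needs $\max\{\ell(\bx),\ell(\by)\}\ge r'/(n-1)$ before Theorem~\ref{thm:apriori} can be invoked; you correctly supply this from $\diam(\bx)>r'$ together with $\ell(\bx)\ge\diam(\bx)/(n-1)$. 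Your configuration counts ($n(2r+1)^{d(n-1)}$ and $n|\Omega|^{n-1}$) differ from the paper's ($n(4r)^{d(n-1)}$ and $n|\Omega|^{n-1}$) only by an absorbable constant, and your parenthetical remark about the energy-averaged expectation $\widehat{\mathbb{E}}_I$ correctly identifies the (harmless) notational looseness in the statement.
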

\begin{proof}
By Lemma~\ref{lem:extension}, for all $\bx,\by$ 
in the sum in \eqref{eq:apriorikn1}: 
$	 \dist_{\mathcal H}(\bx,\by) \geq  |x-y| -  r  \geq r \geq r'/(n-1)$.
Theorem~\ref{thm:apriori} 
hence guarantees that for some $ s \in (0,1) $ and $ A , \xi < \infty $:
\begin{equation}
	 \mathbb{E}\left[ \left| G_\Omega(\bx,\by;z) \right|^{s} \right] \leq A \exp\left[- \frac{r'}{(n-1)\xi }\right] \, . 
\end{equation}
 The proof is completed by bounding the number of configurations in $ \mathcal{C}^{(n)}_r(\Omega;y) $ and $ \mathcal{C}^{(n)}(\Omega;x) $ by 
$  \left|\mathcal{C}_r (\Omega;y) \right| \leq {n} \, (4r)^{d({n}-1)} $ and $   \left|\mathcal{C} (\Omega;x) \right| \leq  n\, |\Omega|^{n-1} $, respectively.
\end{proof}


\section {Proof of the main result}\label{sec:proof}  

We now turn to the proof of Theorem~\ref{thm:main}.  That is, 
we shall prove that there  exists a monotone
sequence of decreasing sub-conical non-empty domains in the parameter space, 
\be
  \Gamma^{(p)}_1  \supseteq \dots  \supseteq \Gamma^{(p)}_n \supseteq \dots  \, ,
  \ee
   such that for 
each $n$, the Hamiltonian \eqref{Ham} with parameters in 
$\Gamma^{(p)}_n$ exhibits 
uniform $ n $-particle localization in the sense of Definition~\ref{def:loc}; 
each of  the domains including regimes of 
large disorder and of weak interaction.  
The proof will proceed by induction on $n$, the induction 
step consisting of a constructive restriction of the domain.

Establishing a sub-conical domain $ \Gamma^{(p)}_n $ of uniform $ n $-particle localization would be sufficient for our purpose, since Lemma~\ref{cor:efcor}  
then implies that there is $ A , \xi \in (0,\infty) $ such that for all $ (\lambda, \balpha) \in \Gamma^{(p)}_n $, all $ k \in \{ 1 , \dots, n \} $, and all $ \bx , \by \in (\Z^d)^k $:
\be 
\sup_{\Omega \subset \Z^d }\, Q^{(k)}_\Omega(\bx,\by;\R) \ \leq \ A \, e^{-\dist_{\mathcal H}(\bx,\by)/\xi} \, . 
\ee 
Spectral and dynamical localization for $ H^{(k)} $,  as claimed in~1.\ and 2.\ of Theorem~\ref{thm:main}, follow using Theorem~\ref{thm:corr->spec}.  
Assertion~3.\ on the shape of the nested decreasing domains of uniform localization will be verified in the course of  the inductive construction.  \\


\subsection{Analyzing clustered configurations} 

An essential component of the proof is to show the exponential decay of the finite volume quantities: 
\begin{equation}\label{eq:defB}
 B^{(n)}_s(L) := \sup_{\substack{ I \subset \mathbb{R} \\ |I | \geq 1 }}
  \sup_{\Omega \subseteq \Lambda_L} \,  |\partial \Lambda_L|\,  \sum_{y\in \partial \Lambda_L} \sum_{\substack{\bx \in\mathcal{C}^{(n)}_{r_L}(\Omega;0)\\ \by \in\mathcal{C}^{(n)}_{r_L}(\Omega;y)} } 
 	\widehat{\mathbb{E}}_I\left[ \left|  G^{(n)}_\Omega(\bx,\by) \right|^{s} \right] \, , 
\end{equation}
where $ \Lambda_L := [-L,L]^d \cap \mathbb{Z}^d $, and $ s \in (0,1) $.  
One may note that the sum is resticted to configurations in the form of separate 
clouds of particles with diameter less than $ r_L := L/2  $,  which are guaranteed to include at least a pair of sites at distance $L \in \mathbb{N}$ apart. 
By the Wegner estimate \eqref{eq:frac}, for all $ s \in (0,1) $ and $ L  \in \mathbb{N} $:
 \begin{equation}\label{eq:Bbound}
 	B^{(n)}_s(L) \leq \frac{C\, n^2}{|\lambda|^s} \,  L^{2d(n -1)} \, ,
 \end{equation}
 where the constant $ C = C(s,d) < \infty $ is independet of $ (\lambda,\balpha ) \in \R^n $.

The following rescaling principle will be used to show that  
$ B^{(n)}_s(L) $ decays exponentially provided that there is some $ L_0 $ for which it is sufficiently small.   
In its formulation, we consider length scales which grow as  
\begin{equation}
 L_{k+1} := 2 (L_{k} + 1)  \, ,
\end{equation}
i.e., $ L_{k} = 2^k (L_0 +2) -2 $ for all $ k \in \mathbb{N}_0 $.

\begin{theorem}\label{thm:renorm}
 Let $ \Gamma^{(p)}_{n-1} \subset \R_+ \times \R^{p-1} $  be  a sub-conical  domain of uniform $ (n-1) $-particle localization, with $ n \geq 2 $.  
Then there exists $ s \in (0,1) $, $ a, A  , p < \infty $, and $ \nu > 0 $ such that 
\begin{equation}\label{eq:renormineq}
 B^{(n)}_s(L_{k+1}) \leq \frac{a}{|\lambda|^{s} }\, B^{(n)}_s(L_{k})^2 + A  \, L_{k+1}^{2p} \, e^{- 2\nu L_{k}} 
\end{equation}
for all $(\lambda,\balpha) \in \Gamma^{(p)}_{n-1} $ and all $ k \in \mathbb{N}_0 $.
\end{theorem}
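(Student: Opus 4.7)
The argument is a multiparticle adaptation of the Aizenman--Molchanov rescaling, with the new error mode (intermediate configurations that fail to remain clustered at the smaller scale) absorbed by the a priori $(n-1)$-particle bound of Corollary~\ref{cor:apriori}, which is available precisely because $\Gamma^{(p)}_{n-1}$ is a domain of uniform $(n-1)$-particle localization. Fix $y \in \partial\Lambda_{L_{k+1}}$ and a pair of clustered configurations $\bx \in \mathcal{C}^{(n)}_{r_{L_{k+1}}}(\Omega;0)$, $\by \in \mathcal{C}^{(n)}_{r_{L_{k+1}}}(\Omega;y)$; by construction $\diam(\bx),\diam(\by) \le L_k+1$, and the choice $L_{k+1} = 2(L_k+1)$ is made so that an intermediate box $\Lambda\subset \Omega$ of side $\sim L_k$ can be placed around the origin with $\bx$ inside $\Lambda^n$ and $\by$ outside. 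The multiparticle geometric resolvent identity relative to $\Lambda$ then reads
\begin{equation*}
G^{(n)}_\Omega(\bx,\by;z) \ = \ \sum_{\bu,\bv} G^{(n)}_\Lambda(\bx,\bu;z)\, t(\bu,\bv)\, G^{(n)}_\Omega(\bv,\by;z)\, ,
\end{equation*}
where the sum runs over pairs of $n$-particle configurations differing in exactly one nearest-neighbor step across $\partial\Lambda$ (so $\bu\in\Lambda^n$ sits on the inner boundary, $\bv$ just outside, and $|t(\bu,\bv)|\le 1$). Raising to the $s$-th power, using $|\sum_i a_i|^s\le\sum_i|a_i|^s$, taking expectation, summing over the clustered $\bx,\by$ and over $y\in\partial\Lambda_{L_{k+1}}$, and splitting the sum over $\bv$ into two regimes produces the two terms of~\eqref{eq:renormineq}.

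In the \emph{clustered regime}, $\bv\in\mathcal{C}^{(n)}_{r_{L_k}}(\Omega;v)$ for some base site $v$ just outside $\partial\Lambda$. The two Green functions then depend on disjoint families of potential variables except at the two or three shared sites of $\partial\Lambda$ belonging to $\bu$ and $\bv$. I would decouple these shared variables by integrating them out via the double-sampling bound \eqref{2_av} and applying Theorem~\ref{prop:frac}, which factorizes the expectation at a cost $|\lambda|^{-s}$. The sum over $\bx$ clustered near $0$ and $\bu$ clustered near a boundary site of $\Lambda$ reproduces, after translating the base site of $\bu$, a copy of $B^{(n)}_s(L_k)$; the sum over $\bv,\by$ reproduces a second copy, matching the outer weight $|\partial\Lambda_{L_{k+1}}|$ in \eqref{eq:defB} with the boundary sums. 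This delivers the quadratic contribution $a\,|\lambda|^{-s}\,B^{(n)}_s(L_k)^2$.

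In the \emph{non-clustered regime}, $\diam(\bv) > r_{L_k}$, so the splitting width satisfies $\ell(\bv)\ge r_{L_k}/(n-1)$. Corollary~\ref{cor:apriori} then supplies an exponential bound $\widehat{\mathbb{E}}_I\bigl[|G^{(n)}_\Omega(\bv,\by)|^s\bigr] \le A\, L_{k+1}^{d(n-1)}|\Omega|^{n-1}\exp\bigl(-L_k/[(n-1)\xi]\bigr)$. The first Green function factor is controlled by Theorem~\ref{prop:frac}, the number of intermediate pairs $(\bu,\bv)$ grows only polynomially in $L_{k+1}$, and because $\Omega\subseteq\Lambda_{L_{k+1}}$ the volume factor $|\Omega|^{n-1}$ is itself polynomial in $L_{k+1}$. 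Collecting these polynomial losses into the prefactor $L_{k+1}^{2p}$ and redefining $\nu$ to absorb the factors $1/(n-1)$ and $1/2$ (needed so that the decay in $L_k$ rather than $L_{k+1}$ appears with a prefactor $2\nu$) gives the remainder $A\,L_{k+1}^{2p}\,e^{-2\nu L_k}$.

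The main technical obstacle is the decoupling step in the clustered regime: a direct factorization of the expectation is unavailable because the two Green function factors share a nontrivial, if low-dimensional, common dependence on the potential at the $\bu,\bv$ sites on $\partial\Lambda$. The resolution is the standard maneuver of first applying \eqref{2_av} to the shared variables, then extracting fractional-moment bounds via the weak-$L^1$ estimate underlying Theorem~\ref{prop:frac}; this is exactly what produces the $|\lambda|^{-s}$ prefactor in the quadratic term. A minor point is that the energy-averaging $\widehat{\mathbb{E}}_I$ commutes with the resolvent expansion because the identity holds pointwise in $z$ and the inequality $|a+b|^s\le|a|^s+|b|^s$ is pointwise.
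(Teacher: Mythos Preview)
Your overall architecture---geometric resolvent expansion, splitting the intermediate configuration into clustered/non-clustered regimes, and feeding the non-clustered part to Corollary~\ref{cor:apriori}---matches the paper's and is sound. The serious gap is in the decoupling step of the clustered regime.

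After a \emph{single}-sided resolvent identity around the box $\Lambda$ near the origin, the two factors are $G^{(n)}_\Lambda(\bx,\bu;z)$ and $G^{(n)}_\Omega(\bv,\by;z)$. The second one is a Green function on the \emph{full} domain $\Omega\subseteq\Lambda_{L_{k+1}}$, which contains $\Lambda$. Hence it depends on \emph{every} potential variable in $\Lambda$, not merely on ``two or three shared sites of $\partial\Lambda$''. Your claim that the two factors ``depend on disjoint families of potential variables except at the two or three shared sites'' is therefore false, and the proposed decoupling via the double-sampling bound cannot work as stated. Relatedly, even if you could factorize the expectation, the second factor would not reproduce a copy of $B^{(n)}_s(L_k)$: by definition that quantity requires the underlying domain to lie inside a box of side $L_k$, whereas $\Omega$ has side $L_{k+1}$.

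The fix, and what the paper actually does, is a \emph{two}-sided geometric resolvent identity (Lemma~\ref{lem:ren}): one localizes simultaneously in $V=\Lambda_{L_k}\cap\Omega$ around the origin and in $W=\Lambda_{L_k}(y)\cap\Omega$ around $y$, obtaining three factors $G_V(\bx,\bw)\,G_\Omega(\bw',\bv')\,G_W(\bv,\by)$. Now $G_V$ and $G_W$ live on disjoint domains and are genuinely independent; the \emph{middle} factor $G_\Omega(\bw',\bv')$ is the one that is eliminated by averaging over the two sites $u',v'\notin V\cup W$ via Theorem~\ref{prop:frac}, producing the $|\lambda|^{-s}$. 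Each outer sum then sits inside a box of size $L_k$ and, after the clustering restriction (handled by another appeal to Corollary~\ref{cor:apriori}), gives a copy of $B^{(n)}_s(L_k)$. Your plan becomes correct once you replace the single resolvent step by this double one; as written, the ``reproduces a second copy'' claim is unjustified.
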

In order to keep the flow of the main argument clear, we postpone the proof of this assertion  to Subsection~\ref{subsec:renorm}.    

Quantities satisfying rescaling inequalities as in \eqref{eq:renormineq} are exponentially decreasing provided that they are small on some scale.
This is the content of the following lemma.
For its application we note that 
\begin{equation}
S(\tilde{L}_k) :=B^{(n)}_s(L_k) \, , \qquad \tilde{L}_k := 2^k (L_0+2 ) \, , 
\end{equation}
satisfy \eqref{eq:recursionS}.

\begin{lemma}\label{lem:recS}
Let $  S(L) $ be a non-negative sequence satisfying: 
\begin{equation}\label{eq:recursionS}
	 S(2L) \leq a \, S(L)^2 + b\, L^{2p} \,  e^{- 2 \nu L} 
\end{equation}
for some $ a , b, p \in [0,\infty) $ and $ \nu > 0 $.  If for some $ L_0 > 0 $ there exists $ \eta < \infty $ such that 
\begin{enumerate}
	\item $ \displaystyle \eta^2 \geq a b+ \eta \; \frac{2^p}{L_0^p}\, $,
	\item $ \displaystyle 1 > a \, S(L_0) + \eta \, L_0^p \, e^{-\nu L_0} \; =: e^{-\mu L_0} \, $,
\end{enumerate} 
then for all $ k \in \mathbb{N}_0 $:
 \begin{equation} 
 	S(2^k L_0) \leq a^{-1} \, \exp\left( - \mu\, 2^k L_0 \right) \, . 
  \end{equation} 
  \end{lemma}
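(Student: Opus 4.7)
The plan is to prove the bound by induction on $k$, using a slightly strengthened hypothesis that tracks the auxiliary term on the right-hand side of condition 2. Let me set $L_k := 2^k L_0$ so that the recursion \eqref{eq:recursionS} reads $S(L_{k+1}) \leq a S(L_k)^2 + b L_k^{2p} e^{-2\nu L_k}$. A naive hypothesis of the form $a S(L_k) \leq e^{-\mu L_k}$ fails: squaring reproduces the leading term but leaves $ab L_k^{2p} e^{-2\nu L_k}$ unabsorbed. The natural fix, which exactly matches the definition of $\mu$ in condition 2, is to strengthen the hypothesis to
\begin{equation}\label{eq:strongind}
 a S(L_k) + \eta \, L_k^p \, e^{-\nu L_k} \ \leq \ e^{-\mu L_k} \, .
\end{equation}

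The base case $k=0$ is condition 2 with equality. For the inductive step, assume \eqref{eq:strongind} holds at stage $k$. Then $aS(L_k) \leq e^{-\mu L_k} - \eta L_k^p e^{-\nu L_k} =: f_k$, so squaring yields $a^2 S(L_k)^2 \leq f_k^2$, and from the recursion
\begin{equation}
 a S(L_{k+1}) \ \leq \ f_k^2 + ab\, L_k^{2p} e^{-2\nu L_k} \ = \ e^{-2\mu L_k} - 2\eta L_k^p e^{-(\mu+\nu) L_k} + (\eta^2 + ab)\, L_k^{2p} e^{-2\nu L_k}.
\end{equation}
Since $L_{k+1} = 2L_k$, the target bound \eqref{eq:strongind} at stage $k+1$ becomes $a S(L_{k+1}) \leq e^{-2\mu L_k} - \eta 2^p L_k^p e^{-2\nu L_k}$. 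Subtracting and dividing by $L_k^p e^{-2\nu L_k}$, the task reduces to verifying the single numerical inequality
\begin{equation}\label{eq:keyineq}
 (\eta^2 + ab)\, L_k^p + \eta\, 2^p \ \leq \ 2 \eta \, e^{(\nu - \mu) L_k} \, .
\end{equation}

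Two observations dispatch \eqref{eq:keyineq} for every $k \geq 0$. First, condition 1 rewrites as $\eta^2 + ab \leq 2\eta^2 - \eta 2^p / L_0^p$, which combined with $\eta 2^p \leq \eta 2^p (L_k/L_0)^p$ (valid since $L_k \geq L_0$) shows the left-hand side of \eqref{eq:keyineq} is bounded by $2\eta^2 L_k^p$. Second, since $\eta^2 \geq \eta \, 2^p / L_0^p$ forces $\eta L_0^p \geq 2^p$, condition 2 implies $\eta L_0^p \leq e^{(\nu - \mu) L_0}$, so $(\nu - \mu) L_0 \geq p \log 2$. Consequently, using $2^k - 1 \geq k$ for $k \geq 1$,
\begin{equation}
 e^{(\nu-\mu) L_k} \ = \ e^{2^k (\nu-\mu) L_0} \ \geq \ \eta L_0^p \cdot e^{(2^k - 1)(\nu-\mu) L_0} \ \geq \ \eta L_0^p \cdot 2^{pk} \ = \ \eta L_k^p \, ,
\end{equation}
so the right-hand side of \eqref{eq:keyineq} is at least $2\eta^2 L_k^p$. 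This closes the induction and yields $S(L_k) \leq a^{-1} e^{-\mu L_k}$ as claimed.

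The main subtlety — and what I would flag as the conceptual hurdle — lies in choosing the induction hypothesis: one needs to package the residual inhomogeneous error term $\eta L_k^p e^{-\nu L_k}$ into the statement being propagated, precisely so that squaring produces the $-2\eta L_k^p e^{-(\mu+\nu) L_k}$ cross term that dominates and absorbs the $b L_k^{2p} e^{-2\nu L_k}$ contribution at each scale. Condition 1 is exactly what makes this absorption algebraically possible at $k=0$, and the gap $\nu - \mu \geq p\log 2/L_0$ it forces is exactly what preserves the margin at every subsequent doubling.
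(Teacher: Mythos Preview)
Your proof is correct and rests on the same idea as the paper's: strengthen the inductive hypothesis to $R(L_k) := a\,S(L_k) + \eta\,L_k^p e^{-\nu L_k} \le e^{-\mu L_k}$. The paper's presentation is shorter because it first proves the $\mu$-free relation $R(2L) \le R(L)^2$ directly (a two-line computation: bound $\eta\,2^p L^p \le \eta\,(2^p/L_0^p)\,L^{2p}$ and invoke condition~1), after which iteration gives $R(L_k)\le R(L_0)^{2^k}=e^{-\mu L_k}$ immediately; by carrying $\mu$ through the inductive step you are led to the extra algebra of your inequality~\eqref{eq:keyineq}, which is correct but avoidable.
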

\begin{proof}
From \eqref{eq:recursionS} it follows that the quantity $ R(L) := a S(L) + \eta L^p e^{-\nu L} $ satisfies 
\begin{align}
	R(2L) & \leq \left( a S(L) \right)^2 + \left( a b +  \eta \frac{2^p}{L_0^p}\right) L^{2p}  e^{- 2 \nu L}  \notag \\
		& \leq  \left( a S(L) \right)^2 + \eta^2 L^{2p}  e^{- 2 \nu L} \leq R(L)^2 \, .
\end{align}
The claimed \eqref{eq:recursionS} follows by iteration, using  $ R(L_0) = e^{-\mu L_0 }  $.
\end{proof}

\subsection{The inductive proof}
\begin{proof}[Proof of Theorem~\ref{thm:main}]  
As explained above in this section, it suffices to establish sub-conical domains of 
uniform $ n $-particle localization (in the sense of Definition~\ref{def:loc}).  

As an induction anchor 
we use the fact \cite{AM} that there is some $ \lambda_1 \in (0,\infty) $ such that $ \Gamma^{(p)}_1 := \big(\lambda_1,\infty\big) \times \R^{p-1} $ serves as a domain of uniform localization
for the one-particle Hamiltonian $ H^{(1)} $. (Note that the last $ (p-1) $-components of $\Gamma^{(p)}_1 $ are irrelevant for $ H^{(1)} $.)

In the induction step ($n-1 \to n $), we  will first construct a robust, sub-conical domain $ \Gamma^{(p)}_n \subseteq  \Gamma^{(p)}_{n-1}  $ and pick $ L_0 \in \mathbb{N}$ sufficiently large such that at some inverse localization length $ \mu > 0 $:
\begin{equation}\label{eq:expdecayBs}
	 B^{(n)}_s(L_k) \leq  \frac{|\lambda|^{s}}{a} \,  e^{-\mu (L_k + 2)} \leq 2 B^{(n)}_s(L_0) \;  e^{-\mu (L_k - 2 L_0)} \, ,
\end{equation}
for all $ k \in \mathbb{N}_0 $ and all $(\lambda,\balpha) \in \Gamma^{(p)}_n $.
Based on the induction hypothesis, Theorem~\ref{thm:renorm} and Lemma~\ref{lem:recS}, this is will done separately in the two regimes of interest:
\begin{description}
\item[Strong disorder regime:]
Here we choose $ L_0 \in \mathbb{N} $ by the condition $ 2^p e^{-\nu L_0} < 1/4 $. 
Based on that we pick $ \eta $ in the range 
\begin{equation}
	\frac{2^{p+1}}{L_0^p}  < \eta < \frac{1}{2 L_0^p} \, e^{\nu L_0} \, ,  
\end{equation} 
which is non-empty by our previous choice of $ L_0 $. Notice that for this choice $ \eta \, 2^p / L_0^p \leq \eta^2 /2 $.
We now restrict the domain $ \Gamma^{(p)}_{n-1} $ by choosing $ \lambda_n \geq \lambda_{n-1} $ large enough such that for all $ \lambda \geq  \lambda_n $:
\begin{equation}
	\frac{a\, A }{\lambda^s} < \frac{\eta^2}{2} \, \qquad \mbox{and} \qquad \frac{a}{\lambda^s} \; B^{(n)}_s(L_0 ) < \frac{1}{2} \, .
\end{equation}
This is possible by~\eqref{eq:Bbound} and the fact that $ A  $ is independent of $ \lambda $.
Lemma~\ref{lem:recS} hence guarantees that 
\begin{equation}\label{eq:mu}
	\mu = - \frac{\log(a\,  \lambda_n^{-s} B^{(n)}_s(L_0 ) + 1/2 ) }{L_0 } \geq - \frac{\log(2 \, a\, \lambda_n^{-s} B^{(n)}_s(L_0 )) }{2 L_0} > 0 
\end{equation}
serves as an inverse localization length in \eqref{eq:expdecayBs}, which is valid for all $ (\lambda, \balpha) $ in the following regime of strong disorder:
\be \label{eq:highdisreg}
 D_n^{(p)} :=\left\{ (\lambda, \balpha)  \, \big| \, \lambda > \lambda_n \right\} \cap \Gamma_{n-1}^{(p)} \, . 
\ee
\item[Weak interaction regime:]
The localization bounds in \cite{AM} and arguments as in Theorem~\ref{thm:noninteracting} ensure that $ B^{(n)}_s(L) \to 0  $ as $ L \to \infty $ for $ \balpha =  \boldsymbol{0} $ and all $\lambda  > \lambda_1 $.
We then pick $ L_0 \in \mathbb{N} $ and $ \eta $ such that for $  \balpha = \boldsymbol{0} $:
\begin{align}
	& \frac{a}{\lambda_1^s} B^{(n)}_s(L_0) + \eta L_0^p e^{-\nu L_0} < 1 \, , \label{eq:wd1} \\
	& \frac{a \, A}{\lambda_1^s}  + \eta \frac{2^p}{L_0^p} \leq \eta^2 ,  \label{eq:wd2}
\end{align} 
Since the finite-volume quantity $ B^{(n)}_s(L_0) $ is continuous in $ \balpha $ at $\boldsymbol{0} $, for every $ \lambda > \lambda_1 $ there exist $ \balpha(\lambda)_j > 0 $,
$ j \in \{ 1 , \dots , p \} $ such that \eqref{eq:wd1} and  \eqref{eq:wd2} is maintained for
all $ (\lambda,\balpha) $ in the  following regime of weak interaction
\be \label{eq:weakintreg}
I_n^{(p)} := \left\{ (\lambda, \balpha)  \, \big| \, \lambda > \lambda_1 \; \mbox{and} \; |\balpha_j | < \balpha(\lambda)_j  \; \mbox{for all $ j \in \{ 1, \dots , p \} $}\right\} \cap \Gamma_{n-1}^{(p)} \, . 
\ee
By Lemma~\ref{lem:recS}, $ \mu $ as in \eqref{eq:mu} with $ \lambda_n  $ replaced by $ \lambda_1 $ hence 
serves as an inverse localization length in \eqref{eq:expdecayBs}, which holds for all $(\lambda,\balpha) \in I_n^{(p)}$.
\end{description}
Summarizing, we have thus 
established \eqref{eq:expdecayBs} for all $ (\lambda , \balpha) \in D_n^{(p)} \cup I_n^{(p)} $ and a suitably large $ L_0 \in \mathbb{N} $. By further restriction, we may find thus find a robust, but
sub-conical domain
\be 
	\left( \emptyset \neq \right) \; \Gamma^{(p)}_n \subseteq D_n^{(p)} \cup I_n^{(p)} 
\ee
for which \eqref{eq:expdecayBs} holds. \\

To complete the induction step, it requires to establish the exponential decay \eqref{eq:exp} of the $ n $-particle Green function for all $ (\lambda, \balpha) \in \Gamma^{(p)}_n $.

For this purpose,
we select $ x \in X $ and $ y \in Y $ such that $ \dist_{\mathcal{H}}(\bx,\by) = |x-y| $.
In case $ |x-y| > L_0 $, which we may assume without loss of generality (wlog), there exists a unique $ k \in \mathbb{N}_0 $ such that the box $ \Lambda_{L_k}(x) $ which is centered at $ x $ satisfies:
\begin{itemize}
\item
$	y \not\in \Lambda_{L_k}(x) $ and $ y \in \Lambda_{L_{k+1}}(x)$,
\item $ L_k \leq | x - y | \leq c\, L_k $ for some $ c  \in (0,\infty) $.
\end{itemize}
We may furthermore assume wlog  that $ \diam \bx < L_k /2 $ is sufficiently small such that $ \bx \in \mathcal{C}^{(n)}(\Omega \cap \Lambda_{L_k}(x)) $, since otherwise \eqref{eq:exp} (with $ k = n $) follows from Theorem~\ref{thm:apriori}.
The resolvent identity, in which we remove all the terms in the Laplacian which connect 
$ \Omega \cap \Lambda_{L_k}(x) $ and its complement, then implies
\begin{align}\label{eq:container}
	\widehat{\mathbb{E}}_I\left[ \left| G_\Omega(\bx,\by)\right|^s\right]  \leq & \sum_{u \in \partial \Lambda_{L_k}(x)} \sum_{\substack{ \bw' \in \mathcal{C}^{(n)}(\Omega;\Omega \backslash  \Lambda_{L_k}(x)) \\ \bw \in \mathcal{C}^{(n)}(\Omega \cap \Lambda_{L_k};u)}} 
		\mkern-10mu  |H_{\bw,\bw'}|^s \notag \\
			& \mkern100mu \times \widehat{\mathbb{E}}_I\left[  \left|G_{\Omega \cap \Lambda_{L_k}(x)}(\bx,\bw)\right|^s \, \left|G_\Omega(\bw',\by) \right|^s\right]  \notag \\
		 \leq & \; \frac{C}{|\lambda|^s} \sup_{\widetilde\Omega \, : \,  \widetilde\Omega \subset \Lambda_{L_k}(x)} \sum_{u \in \partial \Lambda_{L_k}(x)} \sum_{ \bw \in \mathcal{C}^{(n)}(\widetilde\Omega;u)} 
		   \widehat{\mathbb{E}}_I\left[  \left|G_{\widetilde\Omega}(\bx,\bw)\right|^s\right] \, ,
\end{align}
where $ H_{\bw,\bw'} $ are the matrix elements of the Hamiltonian between $ \delta_\bw $ and $ \delta_{\bw'} $.
Inequality~\eqref{eq:container} follows from \eqref{eq:frac} by first conditioning on all random variables apart from those associated with sites $ y $ and $ u' $ which are both outside $ \Lambda_{L_k}(x) $.
The  sum in \eqref{eq:container} is split into two parts, depending on the diameter of the configuration $ \bw $:
\begin{multline}
\widehat{\mathbb{E}}_I\left[ \left| G^{(n)}_\Omega(\bx,\by)\right|^s\right]  \leq  \frac{C}{|\lambda|^s} \, B^{(n)}_s(L_{k}) \\
		+ \sup_{\widetilde \Omega \, : \, \widetilde \Omega \subset \Lambda_{L_k}(x)}  \frac{C}{|\lambda|^s} \, \sum_{u \in \partial \Lambda_{L_k}(x)} 
		\sum_{ \bw \in \mathcal{C}^{(n)}_{r_{L_k}}(\widetilde \Omega;u) }
	 \widehat{\mathbb{E}}_I\left[  \left|G^{(n)}_{\widetilde \Omega}(\bx,\bw)\right|^s\right] \, .
\end{multline}
By \eqref{eq:expdecayBs} both, in the strong disorder regime and the regime of weak interactions, the first term on the right side is exponentially decaying in $ L_k $. To bound the second term we employ Theorem~\ref{thm:apriori} again,
\begin{equation}
	\widehat{\mathbb{E}}_I\left[  \left|G^{(n)}_{\Omega}(\bx,\bw)\right|^s\right] \leq  A \, \exp\left\{-\frac{L_k}{2 (n-1)\xi } \right\} \, ,
\end{equation}
since $ \dist_{\mathcal{H}}(\bx,\bw) \geq |x-u| - \diam(\bx) \geq  L_k /2$ and $ \ell(\bw) \geq L_k/ (2 (n-1)) $.
The sums in the second term have at most $|\partial  \Lambda_{L_k}(x)| $ and $ n | \Lambda_{L_k}(x)|^{n-1} $ terms, respectively. 
Hence, this second term is also exponentially decaying in $ L_k $ and hence in the Hausdorff distance, $ \dist_{\mathcal{H}}(\bx,\by) $.
This completes the proof of the induction step.
\end{proof}


\subsection{Proof of the rescaling inequality}\label{subsec:renorm}

In the above proof we have postponed the derivation of  Theorem~\ref{thm:renorm}, which provides an essential step for establishing regimes of exponential decay.  We shall do that now. 
The construction and the analysis are inspired by arguments familiar from the one-particle setup. 
The following lemma essentially extends arguments in \cite{ASFH}.

\begin{lemma}\label{lem:ren}
Let $ \Omega \subset \mathbb{Z}^d $ and $ V, W \subset \Omega $ with 
$ \dist(V,W) \geq 2 $. Then for all  $ \bx \in\mathcal{C}^{(n)}(V) $ and $ \by \in\mathcal{C}^{(n)}(W) $ 
\begin{multline}\label{eq:ren}
\mathbb{E}\left[ \left| G^{(n)}_\Omega(\bx,\by;z) \right|^s \right]  \leq \frac{C}{|\lambda|^s} \sum_{u \in \partial V \backslash \partial \Omega} \sum_{\bw \in\mathcal{C}^{(n)}(V;u)} \mathbb{E}\left[ \left| G^{(n)}_{V}(\bx,\bw;z) \right|^s \right] \\
\times \sum_{v \in \partial W \backslash \partial \Omega} \sum_{\bv \in\mathcal{C}^{(n)}(W;v)} \mathbb{E}\left[ \left| G^{(n)}_{W}(\bv,\by;z) \right|^s \right] \, ,
\end{multline}
where the constant $ C=C(s,d) < \infty $ is independent of $ (\lambda,\balpha) \in \R^p  $.
\end{lemma}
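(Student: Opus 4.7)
The plan is to adapt the one-particle geometric-decoupling argument of \cite{ASFH} to the multiparticle setting, by applying the resolvent identity twice—once to decouple $V$ from $\Omega\setminus V$, then again to decouple $W$—and then invoking the a priori Wegner-type bound of Theorem~\ref{prop:frac} on the ``middle'' Green function after conditioning appropriately.

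First I would split $H_\Omega = \widetilde{H}_V + T_V$, where $\widetilde{H}_V$ is obtained from $H_\Omega$ by deleting those hopping terms of the single-particle Laplacians $\Delta_j$ that carry any one of the $n$ particles across $\partial V$. By construction $\widetilde{H}_V$ preserves each $\ell^2(V)^k \otimes \ell^2(\Omega \setminus V)^{n-k}$, and since $\bx \in V^n$ while $\by \in W^n \subset (\Omega\setminus V)^n$, the block structure forces $\langle \delta_\bx, (\widetilde{H}_V - z)^{-1} \delta_\by\rangle = 0$. The resolvent identity then yields
\be
G_\Omega(\bx,\by;z) \ = \ -\sum_{(\bw,\bw')} G_V(\bx,\bw;z)\, T_V(\bw,\bw')\, G_\Omega(\bw',\by;z),
\ee
where $\bw \in \mathcal{C}^{(n)}(V;u)$ for some $u \in \partial V \setminus \partial\Omega$, $\bw'$ is obtained from $\bw$ by a single-particle hop $u \to u'$ with $u' \in \Omega\setminus V$ a neighbor of $u$, and $|T_V(\bw,\bw')|\le 1$. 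Applying the analogous decomposition at $W$ to the remaining $G_\Omega(\bw',\by;z)$ produces
\be
G_\Omega(\bx,\by;z) \ = \ \sum G_V(\bx,\bw)\, T_V(\bw,\bw')\, G_\Omega(\bw',\bv')\, T_W(\bv',\bv)\, G_W(\bv,\by),
\ee
with $\bv \in \mathcal{C}^{(n)}(W;v)$, $v \in \partial W\setminus \partial \Omega$, and $\bv'$ a single-hop displacement of $\bv$ out of $W$. Taking $|\cdot|^s$ and using $|a+b|^s \le |a|^s + |b|^s$ with $|T_V|,|T_W|\le 1$ gives a triple product $|G_V|^s|G_\Omega|^s|G_W|^s$ summed over the above configurations.

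The crucial observation is that $\bw'$ carries at least one particle at a site $u_1 \in \Omega \setminus V$ with $\dist(u_1,V)=1$, and likewise $\bv'$ carries a particle at $u_2 \in \Omega\setminus W$ with $\dist(u_2,W)=1$. The hypothesis $\dist(V,W)\ge 2$ then forces $u_1 \notin W$ and $u_2 \notin V$, so that $\{u_1,u_2\}\cap(V\cup W) = \emptyset$. This is the geometric heart of the lemma: it means that, conditional on $\{V(v):v\notin\{u_1,u_2\}\}$, the factors $|G_V(\bx,\bw)|^s$ and $|G_W(\bv,\by)|^s$ are deterministic, while the middle factor $|G_\Omega(\bw',\bv';z)|^s$ satisfies the hypotheses of Theorem~\ref{prop:frac} with $\bw' \in \mathcal{C}^{(n)}(\Omega;u_1)$ and $\bv'\in\mathcal{C}^{(n)}(\Omega;u_2)$. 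Integrating out the conditioning and invoking the independence of the disjoint families $\{V(v):v\in V\}$ and $\{V(v):v\in W\}$ yields
\be
\mathbb{E}\bigl[|G_V(\bx,\bw)|^s\,|G_\Omega(\bw',\bv')|^s\,|G_W(\bv,\by)|^s\bigr] \ \le\ \frac{C}{|\lambda|^s}\,\mathbb{E}\bigl[|G_V(\bx,\bw)|^s\bigr]\,\mathbb{E}\bigl[|G_W(\bv,\by)|^s\bigr].
\ee

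To conclude, I would perform the remaining summation over the auxiliary configurations $\bw'$ (resp. $\bv'$): for each $\bw$ with a particle at $u\in\partial V\setminus\partial\Omega$, the number of $\bw'$ linked by a nonzero hopping matrix element is bounded by $2dn$, and similarly for $(\bv,\bv')$. Absorbing these combinatorial factors into the constant $C=C(s,d)$ and reindexing the outer sums by the boundary sites $u\in\partial V\setminus\partial\Omega$, $v\in\partial W\setminus\partial\Omega$ produces exactly \eqref{eq:ren}. The main subtlety—and really the only obstacle—is the careful bookkeeping of which configurations count as single-hop neighbors and the verification that the two conditioning sites $u_1,u_2$ indeed avoid $V\cup W$; once this geometric consequence of $\dist(V,W)\ge 2$ is in hand, the rest is a direct multiparticle transcription of the one-particle argument.
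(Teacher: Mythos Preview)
Your proof is correct and follows essentially the same route as the paper's: a twofold resolvent expansion decoupling $V$ and $W$, then conditioning on $\{V(v)\}_{v\notin\{u_1,u_2\}}$ so that the outer factors freeze while Theorem~\ref{prop:frac} controls the middle factor, with the factorization of the expectation following from independence of the potentials in $V$ and $W$. The only cosmetic difference is the hop count you give as $2dn$ versus the paper's ``at most $2d$''; either bound is acceptable once absorbed into the constant.
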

\begin{proof}
Let $ H_{\bw,\bw'} $ denote the matrix element of $ H_\Omega $ between $ \delta_\bw $ and $ \delta_{\bw'} $.
A twofold application of the resolvent identity, in which we remove all terms in the Laplacian entering $ H_\Omega $ which connect sites in $ V $ with $ \Omega \backslash V $ and $ W $ with $\Omega \backslash W $ , yields
\begin{equation}\label{eq:stopping}
 G_\Omega(\bx,\by;z) = \mkern-40mu\sum_{ \substack{\bw \in \mathcal{C}^{(n)}(V) , \bw' \in  \mathcal{C}^{(n)}(\Omega;\Omega\backslash V)  \\ \bv \in \mathcal{C}^{(n)}(W) , \bv' \in  \mathcal{C}^{(n)}(\Omega;\Omega\backslash W) }} \mkern-50mu G_{V}(\bx,\bw;z) \, H_{\bw,\bw'}\,  G_\Omega(\bw',\bv';z) \,  H_{\bv,\bv'}\, G_{W}(\bv,\by;z) \, .
\end{equation} 
Using $ |a+b|^s \leq |a|^s + |b|^s $ for any $ s \in (0,1) $, we proceed by establishing the following implication of Theorem~\ref{prop:frac},
\begin{multline} \label{eq:factorization}
\mathbb{E}\left[ \left|   G_{V}(\bx,\bw;z) \right|^s \,  \left| G_\Omega(\bw',\bv';z)\right|^s \, \left| G_{W}(\bv,\by;z) \right|^s \right] \\
\leq   \frac{C}{|\lambda|^s} \, \mathbb{E}\left[ \left|   G_{V}(\bx,\bw;z) \right|^s \right] \, \mathbb{E}\left[  \left| G_{W}(\bv,\by;z) \right|^s \right]  \, . 
\end{multline}
For its proof we note that there are two sites $ u',v' \not\in V \cup W $ for which the configurations $ \bw' $ and $ \bv' $
in the sum in \eqref{eq:stopping} have a particle at $ u' $ respectively $ v' $. We may therefore first condition on all random variables $ \left\{ V(x) \right\}_{x\not\in \{u',v'\}} $ and use \eqref{eq:frac}
to estimate the factor in the middle. The other factors are independent of each other and of $ V(u'), V(v') $, such that the expectation value factorizes.
 
The proof of  \eqref{eq:ren} is completed by noting that for any fixed $ \bw $ the number of $ \bw'  \in  \mathcal{C}^{(n)}(\Omega;\Omega\backslash V) $ with $ H_{\bw,\bw'} \neq 0 $  is at most~$ 2d $. 
\end{proof}

We are now ready to give a 

\begin{proof}[Proof of Theorem~\ref{thm:renorm}]
We first restrict the sum in the definition of $ B^{(n)}_s(L_{k+1}) $ to configurations $ \bx , \by $ which have a diameter less than $ r_{L_{k}}$,
\begin{equation}\label{eq:renorm}
 \widetilde B^{(n)}_s(L_{k+1}) :=  |\partial \Lambda_{L_{k+1}}|\, \sup_{\substack{ I \subset \mathbb{R} \\ |I | \geq 1 }}\,  \sup_{\Omega \, : \, \Omega \subseteq \Lambda_{L_{k+1}}}\sum_{y\in \partial \Lambda_{L_{k+1}}} 
 \sum_{\substack{\bx \in\mathcal{C}^{(n)}_{ r_{L_{k}}}(\Omega;0)\\ \by \in\mathcal{C}^{(n)}_{ r_{L_{k}}}(\Omega;y)} }  \widehat{\mathbb{E}}_I\left[ \left|  G^{(n)}_\Omega(\bx,\by) \right|^{s} \right] \, .
\end{equation}
The error is controlled with the help of Corollary~\ref{cor:apriori}: since $ y \in \partial \Lambda_{L_{k+1}} $ we have $ |y | \geq L_{k+1} $ and hence
\begin{align}
 	  B^{(n)}_s(L_{k+1}) - \widetilde B^{(n)}_s(L_{k+1}) & \leq 2 \, \left| \partial\Lambda_{L_{k+1}} \right|^2 \, k_s(\Lambda_{L_{k+1}},r_{L_{k+1}}, r_{L_{k}}) 	\notag \\
	  & \leq  A \,  L_{k+1}^{2dn-2} \, \exp\left( - \frac{L_k}{2(n-1) \xi} \right) \, .  
\end{align}
The sum $ \widetilde B^{(n)}_s(L_{k+1}) $ is now estimated with the help of Lemma~\ref{lem:ren}, in which we pick $ V = \Lambda_{L_{k}} \cap \, \Omega $ and $ W = \Lambda_{L_{k}}(y)  \cap \, \Omega$, where 
the last box is centered at $ y  \in \partial\Lambda_{L_{k+1}}$. 
Since $ \mathcal{C}_{r_{L_{k}}}^{(n)}(\Omega;0) \subseteq  \mathcal{C}^{(n)}_{r_{L_{k}}}(V;0) $ and $ \mathcal{C}_{r_{L_{k}}}^{(n)}(\Omega;y) \subseteq  \mathcal{C}^{(n)}_{r_{L_{k}}}(W;y) $
we thus obtain
\begin{align}\label{eq:sumsB}
	\widetilde B^{(n)}_s(L_{k+1})  \leq &\;  \frac{C}{|\lambda|^{s}}  \;  \left| \partial \Lambda_{L_{k+1}}\right|^2 \,  \sup_{\substack{ I \subset \mathbb{R} \\ |I | \geq 1 }} \Bigg\{ 
	\sup_{\Omega \, : \, \Omega \subseteq \Lambda_{L_{k}}} \sum_{u \in \partial \Lambda_{L_k}}
	\sum_{\substack{\bx \in \mathcal{C}^{(n)}_{r_{L_{k}}}(\Omega;0) \\ \bw \in\mathcal{C}^{(n)}(\Omega;u)}} \widehat{\mathbb{E}}_I\left[ \left| G^{(n)}_{\Omega}(\bx,\bw) \right|^{s} \right] \notag \\
	& \qquad \times \sup_{y \in \partial \Lambda_{L_{k+1}}} \sup_{\Omega \, : \, \Omega \subseteq \Lambda_{L_{k}}(y)}   \sum_{v \in \partial \Lambda_{L_k}(y)}\sum_{\substack{\by \in \mathcal{C}^{(n)}_{r_{L_{k}}}(\Omega;y) \\ \bv \in \mathcal{C}^{(n)}(\Omega;v)}}  \widehat{\mathbb{E}}_I\left[ \left| G^{(n)}_{\Omega}(\bv,\by) \right|^{s} \right] \Bigg\} \, . 
\end{align}
Thanks to translation invariance we may shift $ y $ to the origin  in the last line. 
Moreover, $  \left| \partial \Lambda_{L_{k+1}}\right| \leq  4^{d-1}  \left| \partial \Lambda_{L_{k}}\right| \, $ and we may again restrict the summation to
 configurations with a smaller diameter using Corollary~\ref{cor:apriori} again, 
\begin{align}
\widetilde B^{(n)}_s(L_{k+1})  & \leq  4^{2d-2} \frac{C }{|\lambda|^{s}} \left[ B^{(n)}_s(L_{k}) +  \left| \partial \Lambda_{L_{k}}\right|^2  k_s(\Lambda_{L_{k}}, r_{L_k}, r_{L_k}) \right]^2 \notag \\
	& \leq 2^{4d-3}  \frac{C }{|\lambda|^{s}}  \left[ B^{(n)}_s(L_{k})^2 +  A^{2} \,  L_{k}^{2dn-2} \exp\left( - \frac{L_k}{(n-1) \xi} \right) \right] \, . 
\end{align}
\end{proof}


\bigskip
\bigskip
\appendix 
\addcontentsline{toc}{section}{Appendix}

\noindent {\LARGE \bf Appendix}

\section{Some distances and separation lemmata} \label{app:dist}

Following are some natural  lengths associated with $n$-particle configurations, and some elementary geometric estimates which are of use within this work.     

In general, for a configuration which is denoted by a bold lower-case letter, we shall use the corresponding capital letter to denote its {\rm footprint} in $\Z^d$, which is the subset which it covers; e.g., for $\bx = \{x_1, ..., x_n\}\subset (\Z^d)^n$  we let $X=\bigcup_{i=1}^{n} \{ x_i\} \subset \Z^d$.  Also, for subsets   the index set, $J\subset \{1,..,n\}$, we let $X_J:=\bigcup_{i\in J} \{ x_i\} \subset(\Z^d)$.  

\subsection{Splitting width}

Two convenient measures of the spread of a configuration 
$\bx = \{x_1,...,x_n\}$, are:
\begin{enumerate} 
\item the {\em diameter}, $\rm{diam}(\bx):=\max_{j,k\in \{1,..., n\} }  |x_j-x_k| $
\item the (maximal) {\em  splitting width}, which we define as the supremum over $r$ for which 
there exists a partition of the set $X$ into two subsets at distance $r $ apart, or: 
\begin{equation} \label{eq:defsplit}
\ell (\bx) \ := \ \max_{ J,\,K:\, \ J\cup K = \{1,..,n \} } \rm{dist} (X_J,\, X_K)  \, , 
\end{equation} 
where $\rm{dist}(A,B) = \min_{u\in A, \, v\in B} |u-v|$, with $|u-v|$ the Euclidean distance. 
\end{enumerate}  
Since both quantities depend only on the footprint of $\bx$, in a harmless abuse of notation we may also refer to  $\rm{diam}(\bx)$  as $\rm{diam}(X) $ and to $\ell(\bx)$ as $\ell(X)$.

\begin{lemma} \label{lem:R}
For any configuration $\bx \in (\Z^d)^n$:  
\begin{equation} \label{eq:R3}
 \frac{1 }{ n-1 } \rm{diam}(\bx)  \  \le \  \ell(\bx) \ \le \   \rm{diam}(\bx) \, . 
\end{equation} 
\end{lemma}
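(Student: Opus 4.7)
The upper bound is immediate: for any nontrivial two-set partition $J, K$ of the index set, picking any $j \in J$ and $k \in K$ yields $\rm{dist}(X_J, X_K) \le |x_j - x_k| \le \rm{diam}(\bx)$, and taking the maximum over partitions preserves this inequality. So the work is entirely in the lower bound.

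For the lower bound, my plan is a pigeonhole argument on distances from a diameter-realizing vertex. Fix indices $i, j$ with $|x_i - x_j| = \rm{diam}(\bx) =: r$, and consider the function $f(k) := |x_k - x_i|$ on $\{1,\dots,n\}$. Its values lie in the interval $[0, r]$, with $f(i) = 0$ and $f(j) = r$. Sorting the multiset $\{f(k)\}_{k=1}^n$ as $0 = t_1 \le t_2 \le \dots \le t_n = r$, the $n-1$ consecutive gaps $t_{m+1} - t_m$ are nonnegative and sum to $r$, so at least one gap satisfies $t_{m+1} - t_m \ge r/(n-1)$.

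Given such an index $m$, I would define the partition $J = \{k : f(k) \le t_m\}$ and $K = \{k : f(k) \ge t_{m+1}\}$, which is a two-set partition of $\{1,\dots,n\}$ since $i \in J$ and $j \in K$. For any $u \in X_J$ and $v \in X_K$ the reverse triangle inequality gives
\begin{equation}
|u - v| \ \ge \ \bigl||x_i - v| - |x_i - u|\bigr| \ \ge \ t_{m+1} - t_m \ \ge \ \frac{r}{n-1},
\end{equation}
so $\rm{dist}(X_J, X_K) \ge r/(n-1)$, which by the definition \eqref{eq:defsplit} gives $\ell(\bx) \ge \rm{diam}(\bx)/(n-1)$. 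I do not foresee a real obstacle here; the only subtlety is to ensure the partition is nontrivial, which is guaranteed by the endpoints of the diameter landing on opposite sides.
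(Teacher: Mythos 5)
Your proof is correct, but it takes a genuinely different route from the paper's. The paper's lower bound is a connectedness argument: it considers the union $X_r = \bigcup_j \{y : |y - x_j| \le r\}$, observes that if $X_r$ is connected then $\diam(\bx) \le 2r(n-1)$, and concludes that for $2r < \diam(\bx)/(n-1)$ the set $X_r$ must be disconnected, which exhibits a two-set split at distance greater than $2r$; optimizing over $r$ gives the bound. Your argument instead fixes a diameter-realizing pair $x_i, x_j$, sorts the distances $f(k) = |x_k - x_i|$, and finds by pigeonhole a gap of length at least $\diam(\bx)/(n-1)$ in the sorted list; the threshold cut at that gap explicitly produces the partition, and the reverse triangle inequality from $x_i$ certifies the separation. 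Your version is more combinatorial and constructive (it names the witness partition outright rather than invoking disconnectedness abstractly), and it sidesteps the mild ambiguity about what ``connected'' should mean for a union of lattice balls in $\Z^d$; the paper's version is more visual and geometric. One small point worth stating explicitly in a polished write-up: because $t_m < t_{m+1}$ strictly (the chosen gap is positive when $\diam(\bx) > 0$), no index can land on both sides of the cut, so $J$ and $K$ are disjoint even when several $x_k$ coincide, and hence $X_J \cap X_K = \emptyset$. You essentially have this, but it is the one place a careless reader could worry about multiplicities.
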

\begin{proof}    
 The upper bound is totally elementary.   To prove the lower bound on $\ell(\bx)$ consider the one-parameter family of sets 
\begin{equation} 
X_r  \ := \  \bigcup_{j=1}^{n} \{y\in \Z^d \, : \, |y-x_j| \le r \} \, .
\end{equation} 
For any  $r>0$ such that $X_r$ is connected one clearly has 
\begin{equation} \max_{j,k\in \{1,...,n\}  } |x_j-x_k|  \ \le \ 2 r (n-1) \, . 
\end{equation} 
It follows that for any $r$ such that   $\, \,  2r < \rm{diam}(\bx) /(n-1)  \,  $ the set  $X_r$ is not connected, and hence there is a partition of the configuration $\bx$ into two subsets whose points are at distances greater than $2r$.  This implies  that also $2r \le \ell(\bx) $.  Optimizing over such $r$ we find that $\rm{diam}(\bx)/(n-1)  \le \ell(\bx)$.  
 \end{proof} 
 
 \subsection{Distances in the configuration space}
 
In addition to the regular distance between subsets of $\Z^d$ which is mentioned above, there exists also the notion of the Hausdorff distance, which  is defined as: 
\begin{equation}  
\rm{dist}_{\mathcal H} (X,\, Y)   \ : = \ \max \{ 
\max_{u \in X } \rm{dist} ( \{u \} , Y ),\ 
 \max_{v \in Y } \rm{dist} (\{v\} , X) \}  \, , 
\end{equation}
for any $X,\, Y \in \Z^d$.

In a slight abuse of notation we shall employ the symbol $\rm{dist}_{\mathcal H}$ also for the induced Hausdorff distance between configurations (i.e., $\bx,\, \by\in (\Z^d)^n$): 
\begin{equation} \rm{dist}_{\mathcal H} (\bx,\, \by)\ := \  \rm{dist}_{\mathcal H} (X,\, Y)\ \, . 
\end{equation}
This distance  is clearly most sensitive to the outliers.  Another notion is the symmetrized distance: 
\begin{equation} 
\rm{dist}_S(\bx,\by) \ := \ \min_{\pi \in S_n}  \sum_{j=1}^n |x_j-y_{\pi j}| \, , 
\end{equation} 
with $S_n$ is the permutation group of the  $n$ elements $\{1, ...,n\}$.  \\


The following is an elementary consequence of the definitions.
\begin{lemma}\label{lem:extension}
 Let $ \Omega \subseteq \mathbb{Z}^d $ and $u,v \in \Omega $. For any two configurations $ \bx \in \mathcal{C}^{(n)}(\Omega;u) $ and   $ \by \in \mathcal{C}^{(n)}(\Omega;v) $ , which have a particle at $ u $ and, respectively, $ v  $: 
\begin{equation}
	\dist_{\mathcal H}(\bx,\by) \geq \max\{ \dist(\{u\},Y) , \dist(\{v\},X) \} \geq  |u-v| -  \min\{ \rm{diam}(\bx) , \rm{diam}(\by) \} \, . 
\end{equation}
\end{lemma}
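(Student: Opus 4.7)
The claim consists of two inequalities, both of which follow directly from the definitions once one unpacks them; the argument is a short geometric exercise rather than a deep one, so my plan is to dispatch each half separately.

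For the first inequality, I would observe that since $\bx \in \mathcal{C}^{(n)}(\Omega;u)$, there is some index $i$ with $x_i = u$, so $\dist(\{u\}, Y) = \dist(\{x_i\}, Y)$ appears as one of the terms in the inner maximum defining $\dist_{\mathcal H}(\bx,\by)$ in \eqref{eq:distH}; in particular $\dist(\{u\},Y) \le \dist_{\mathcal H}(\bx,\by)$. The symmetric argument using $v = y_j$ for some $j$ gives $\dist(\{v\},X) \le \dist_{\mathcal H}(\bx,\by)$. Taking the maximum yields $\max\{\dist(\{u\},Y), \dist(\{v\},X)\} \le \dist_{\mathcal H}(\bx,\by)$.

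For the second inequality, the key observation is that since $v \in Y$, every point $y \in Y$ satisfies $|v - y| \le \diam(\by)$, and so by the triangle inequality
\begin{equation*}
|u - y| \ \ge \ |u - v| - |v - y| \ \ge \ |u - v| - \diam(\by).
\end{equation*}
Minimizing the left-hand side over $y \in Y$ gives $\dist(\{u\},Y) \ge |u-v| - \diam(\by)$. The mirror argument, exchanging the roles of $\bx$ and $\by$ and using $u \in X$, gives $\dist(\{v\},X) \ge |u-v| - \diam(\bx)$. Taking the maximum of these two lower bounds converts the $\diam(\by)$ and $\diam(\bx)$ into their minimum, which is exactly the claimed bound.

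There is essentially no obstacle here: everything reduces to the two observations that (a) the anchor points $u, v$ are themselves listed as coordinates of $\bx$, $\by$, so they appear in the maxima defining $\dist_{\mathcal H}$, and (b) a diameter of a configuration controls the distance from one of its points to any other. The only mild point to get right is the matching of the two one-sided triangle-inequality bounds so that the $\min$ (rather than a $\max$) of the diameters appears on the right.
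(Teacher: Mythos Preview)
Your proof is correct. The paper does not give a proof of this lemma at all, merely noting that it is ``an elementary consequence of the definitions''; your argument spells out precisely those elementary steps and matches what the authors evidently had in mind.
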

 
For convenience let us also place here the bound:  
\begin{lemma}\label{lem:sum}
 Let $ \Omega \subseteq \mathbb{Z}^d $ and $\bx \in (\Z^d)^n$.
 \begin{enumerate}
 \item  For any site  $u \in \Omega$ with $\dist(\{u\}, X) = L$, and any  $\xi \ge 0$: 
\begin{equation} \label{eq:sum}
	\sum_{ 
	\by  \in \mathcal{C}^{(n)}(\Omega;u)}  
	e^{ - \dist_{\mathcal{H}}(\bx,\by)/ \xi } \ \le \  
	C \max\{L,\, \xi\}^{d(n-1)} \,  e^{-L/\xi}
	 \, . 
\end{equation} 
with a constant $C=C(n) < \infty$. 
\item For any  $\xi \ge 0$:
\begin{equation}
	\sum_{\by  \in \mathcal{C}^{(n)}(\Omega)}e^{ - \dist_{\mathcal{H}}(\bx,\by)/ \xi }
		\leq   C \;  \xi^{nd} \, ,
\end{equation}
for some $ C= C(n,d) < \infty $.
\end{enumerate}
\end{lemma}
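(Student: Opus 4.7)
The plan for both parts is to bound the Hausdorff distance from below by a quantity that factorizes nicely across the particles of $ \by $, and then to reduce the sum to a product of one--particle sums of geometric series type.

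For part~1, I would enumerate the configurations $ \by \in \mathcal{C}^{(n)}(\Omega;u) $ by choosing which of the $n$ particles of $\by$ sits at $u$, losing at most a factor of $n$, so that it suffices to sum over the remaining $ n-1 $ free positions $ y_2, \dots, y_n \in \Omega $. Writing $ r_j := \dist(\{y_j\}, X) $, and using that $ u \in Y $ with $ \dist(\{u\},X) = L $, the definition of $ \dist_{\mathcal{H}} $ gives
\begin{equation*}
\dist_{\mathcal{H}}(\bx,\by) \;\geq\; \max\{L, r_2,\dots,r_n\} \;=\; L + \max_{2 \leq j \leq n}(r_j - L)_+,
\end{equation*}
the second equality being an elementary case check (it is trivial when all $ r_j \leq L $; otherwise both sides equal $ \max_j r_j $). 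Since the maximum dominates the average of nonnegative terms, $ \max_j(r_j-L)_+ \geq \frac{1}{n-1}\sum_j (r_j-L)_+ $, hence
\begin{equation*}
e^{-\dist_{\mathcal{H}}(\bx,\by)/\xi} \;\leq\; e^{-L/\xi} \prod_{j=2}^{n} \exp\!\Bigl(-\tfrac{(r_j-L)_+}{(n-1)\xi}\Bigr).
\end{equation*}
The sum now factorizes, and I would bound each one-particle factor
$ \sum_{y \in \Omega} \exp(-(\dist(y,X)-L)_+/((n-1)\xi)) $ by splitting according to whether $ \dist(y,X) \leq L $ or not: the near contribution is at most $ |X| \cdot C L^d $, and the far one is at most $ |X| \cdot C((n-1)\xi)^d $ by a standard radial sum estimate on $ \mathbb{Z}^d $. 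Since $ |X| \leq n $, both are absorbed into $ C(n,d)\max\{L,\xi\}^d $, and taking the $(n-1)$-fold product gives the claimed $ C(n)\max\{L,\xi\}^{d(n-1)} e^{-L/\xi} $.

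For part~2, the same idea works without the offset $L$: setting $ r_j := \dist(\{y_j\},X) $ for all $j$, I would use $ \dist_{\mathcal{H}}(\bx,\by) \geq \max_j r_j \geq \frac{1}{n}\sum_{j=1}^n r_j $, so that
\begin{equation*}
\sum_{\by \in \mathcal{C}^{(n)}(\Omega)} e^{-\dist_{\mathcal{H}}(\bx,\by)/\xi}
\;\leq\; \prod_{j=1}^{n} \sum_{y \in \Omega} e^{-\dist(y,X)/(n\xi)}
\;\leq\; \bigl(C\, n^{d+1}\xi^d\bigr)^{n},
\end{equation*}
by the same radial sum estimate, which yields the claimed $ C(n,d)\xi^{nd} $.

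There is no substantial obstacle: the only mildly nonroutine point is the identity $ \max\{L,r_2,\dots,r_n\}= L + \max_j(r_j-L)_+ $, which reshuffles the Hausdorff-distance bound into a form to which the max-versus-average trick can be applied. Everything else is a routine one-particle lattice geometric series estimate.
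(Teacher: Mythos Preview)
Your proof is correct and takes a genuinely different route from the paper's. The paper uses the layer-cake representation
\[
\sum_{\by} e^{-\dist_{\mathcal H}(\bx,\by)/\xi}\;=\;\int_0^\infty \frac{e^{-r/\xi}}{\xi}\,\Bigl|\{\by\in\mathcal{C}^{(n)}(\Omega;u):\dist_{\mathcal H}(\bx,\by)\le r\}\Bigr|\;dr
\]
and then counts: since every particle of $\by$ must lie within distance $r$ of $X$, the cardinality is at most $n^n(2r)^{d(n-1)}$ for $r\ge L$ and zero otherwise, after which the $r$-integral is elementary; part~2 then follows from part~1 by summing over $u$. Your approach instead lower-bounds $\dist_{\mathcal H}$ by an average via the max-versus-average trick, which factorizes the configuration sum into a product of one-particle sums. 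This is more hands-on and avoids the layer-cake detour, at the cost of degrading the effective decay length by a factor $(n-1)$ (harmless, since the constant is allowed to depend on $n$); the paper's counting argument keeps the exponent clean but is otherwise equivalent in strength.

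One minor imprecision in your write-up: the claim that the ``far'' one-particle contribution is at most $|X|\cdot C((n-1)\xi)^d$ is not quite right when $L\gg\xi$, since the shell at radius $L+s$ contains on the order of $(L+s)^{d-1}$ lattice points rather than $s^{d-1}$, so the far sum is really of order $|X|\,(L^{d-1}\xi+\xi^d)$. But since you immediately absorb both near and far contributions into $C(n,d)\max\{L,\xi\}^d$, the final bound is unaffected.
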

\begin{proof}  It is convenient to use the equality:  
\begin{equation} \label{layer}
	\sum_{ 
	\by  \in \mathcal{C}^{(n)}(\Omega;u)  }
	e^{ - \rm{dist}_{\mathcal H}(\bx,\by)/ \xi } \ = \  
	\int_0^\infty dr \frac{e^{-r/\xi} }{\xi}    
	\sum_{\by  \in \mathcal{C}^{(n)}(\Omega;u)}  \indfct[\rm{dist}_{\mathcal H}(\bx,\by) \, \le \, r]  \, .
\end{equation}
The estimate the sum on the right, we note that the configuration $\by$ needs to have one of its points at $u$, and the rest $(n-1)$ points are all within the distance $r$ from $X$.   A simple estimate  yields: 
\begin{equation} \label{slice}
\sum_{\by  \in \mathcal{C}^{(n)}(\Omega;u);  }  \indfct[\rm{dist}_{\mathcal H}(\bx,\by) \, \le \, r] \  
	\le \  \left\{ \begin{array}{ll} 
	0 & \quad r \le L \\[1ex] 
	n^n (2r)^{d (n-1)}  &  \quad r \ge L   
	\end{array}  \right.
	\end{equation} 
where $2$ could also be replaced by $b_d$, which is the  maximal value of $b$ such that any sphere  in $\R^d$ of radius $r$ includes not more than $b r^d$ lattice points (of $\Z^d)$.  Substituting \eqref{slice} in \eqref{layer} one readily obtains the claimed bound~\eqref{eq:sum}.   
The second claim follows from the first by summation over $ u $. 
\end{proof} 

\section{From eigenfunction correlators to dynamical and spectral information}  \label{app:spec}

In Theorem~\ref{thm:corr->spec} we presented a known method \cite{A2} for the derivation of information on the dynamical and spectral properties of an  infinite-volume  operator from 
bounds on the eigenfuction correlators of its restrictions to finite domains.  For convenience, following is an outline of a proof of this result.

We recall that the assumption is that  for a sequence of finite domains $\Omega$ which converge to $\Z^d$, and a fixed interval $I$: 
 \begin{equation}
\E{ Q_{\Omega}^{(n)}(\bx,\by;I)}   \ \le \  A \; 
e^{-  K (\bx,\by)  } \,  . 
\end{equation} 
with some kernel $K (\bx,\by) $  (e.g., $K (\bx,\by) = \dist(\bx,\by)/\xi$).  
  \\
 
  \begin{proof}[Proof of Theorem~\ref{thm:corr->spec}]
{\em i)\/}  Through a trivial extension of the finite volume operators, they can be naturally viewed as  acting in the same  space $\ell^2(\Z^{nd})$ (acting as $0$ on functions supported outside $\Omega^n$). 
Using the Combes-Thomas estimate on the Green function, one may see that 
the operator $H(\omega)$ is the  limit, in the strong resolvent sense, of any sequence of $H_\Omega(\omega)$, as $\Omega \to \Z^d$  (allowing in the process also  arbitrary self-adjoint boundary conditions at the receding boundary).    Thus, a bound like \eqref{loc_fK} but modified through a restriction of $f$ to continuous functions can be deduced from eq.~\eqref{Q-F} and the general properties of strong resolvent convergence
(cf. \cite[Thm.~VIII.20]{ReSi3}).   Since, by the Wegner estimate, the mean density of states is a continuous measure, Lusin's approximation theorem allows to extend the resulting  bound to all measurable and bounded functions, thus yielding \eqref{loc_fK}.  Of  particular interest is the implied dynamical localization bound: 
\begin{equation} \label{loc_dyn}
\E{\sup_{t\in \R}\left|\langle \delta_\bx \, ,  \, P_I(H) \, e^{-itH} \delta_\by \rangle \right|}   \ \le \  A\, 
e^{-  K(\bx,\by) } \,  . 
\end{equation}

{\em ii)\/}  By the RAGE criterion (see, e.g.~\cite{ReSi3})
the projection on the continuous spectrum in the interval $I$ satisfies: 
\begin{align} \label{RAGE}
&\E{ \| P_{I; \rm{cont}}(H)  \, \delta_\bx  \|^2} \  = \ 
\notag \\  
& \ = 
\E{  \lim_{R\to \infty} \lim_{T\to \infty} 
\frac{1}{T}  \int_0^T \sum_{\by : \, \dist{(\bx,\by)} \ge R} 
\left|\langle \delta_\bx \, , \, P_I(H) \, e^{-itH}\,  \delta_\by \rangle \right|^2 dt} \notag  \\  
 & \  \le \ 
  \lim_{R\to \infty}  \sum_{\by : \, \dist{(\bx,\by)} \ge R} 
 \E{  
 \sup_{t\in \R} 
  \left|\langle \delta_\bx \, , \, P_I(H) \, e^{-itH}\,  \delta _\by \rangle \right|  } \    \, , 
\end{align} 
where the last inequality is by Fatou's lemma and the natural bound ($1$) on the summed quantity.  Thus, under the assumption \eqref{loc_assump}, 
\be  
\E{ \| P_{I; \rm{cont}}(H)  \,  \delta_\bx \|^2} \  \le   
  \    \lim_{R\to \infty}    
  \sum_{\by : \, \dist{(\bx,\by)} \ge R} \!\! A\,
 e^{-K(\bx,\by)} \      \, .
\ee  
In case $K(\bx,\by) = 2 \dist_{\mathcal{H}}(\bx,\by)/\xi$,  or any of the other distances discussed here (which are only larger), the above limit vanishes.  
Since $\{\delta_\bx \}_{\bx\in (\Z^d)^n}$ is a spanning collection of vectors, one may conclude that, under the assumption \eqref{loc_assump}, within the 
$n$-particle sector $H(\omega)$ has almost surely  no continuous spectrum in the interval $I$.   

{\em iii)\/}  Under the above assumption, we will construct a complete set of exponentially bounded eigenfunctions which form a subset of  $\{ P_{\{E\}}(H(\omega))\, \delta_{\by} \, | \,  E \in \sigma( H(\omega) )  , \, \by \in (\Z^d)^n \} $.  Clearly functions in this collection are either zero or eigenfunctions of $H(\omega)$.  

For the complete set we chose  functions corresponding to configurations
$ \by \in (\Z^d)^n $ which are \emph{$ E $-representative} in the sense that 
\be
	\langle \delta_\by \, , \, P_{\{E\}}(H(\omega)) \, \delta_{\by} \rangle \geq \frac{\left( 1 + |\by|\right)^{-(nd+1)} }{
	\sum_{\bx\in (\Z^d)^n} \left( 1 + |\bx|\right)^{-(nd+1)}} \, . 
\ee

Note that for any $ E \in \sigma( H(\omega) ) $ there is at least one $ E $-representative 
configuration, since $ \sum_{\by}  \langle \delta_\by \, , \, P_{\{E\}}(H(\omega)) \, \delta_{\by} \rangle \geq 1 $.
We claim that the corresponding collection of normalized eigenfunctions: \be \label{eq:complete set}
\psi_{E,\by} := \frac{P_{\{E\}}(H(\omega))\, \delta_{\by}}{
\| P_{\{E\}}(H(\omega))\, \delta_{\by}\|}, \quad  \mbox{with} \; E \in \sigma( H(\omega) ) \cap I \, , \; \by \in (\Z^d)^n \; \mbox{$E$-representative,} 
\ee
 spans the full subspace of eigenfunctions of $ H(\omega) $ corresponding to eigenvalues in $ I $.
For if not then there exists $ E \in \sigma( H(\omega) ) \cap I 
$ and a normalized function satisfying $ \phi  = P_{\{E\}}(H(\omega) ) \phi $, which is within the orthogonal complement of  the subspace spanned by~\eqref{eq:complete set}.  This would imply the following contradiction: 
\be 
	1 = \langle \phi \, , \, \phi \rangle = \mkern-30mu\sum_{\substack{\by  \in (\Z^d)^n \,  \mbox{\small not} \\ \mbox{ \small $ E $-representative}}} \mkern-30mu 
	\left|\langle \phi \, , P_{\{E\}}(H(\omega))\, \delta_{\by} \rangle \right|^2
		\leq  \mkern-30mu\sum_{\substack{\by  \in (\Z^d)^n \,  \mbox{\small not} \\ \mbox{\small  $ E $-representative}}} \mkern-30mu\langle \delta_\by \, , \, P_{\{E\}}(H(\omega)) \, \delta_{\by}\rangle  < 1 \, . 
\ee

For bounds on the eigenfunctions in~\eqref{eq:complete set}, one may apply  the Wiener criterion which (combined with the Fatou lemma)  yields
\begin{multline} \label{Wiener}
\E{\sum_{E\in \sigma(H)\bigcap I} \left|\langle \delta_\bx \, , \, P_{\{E\}}(H) \, \delta_{\by} \rangle \right|^2}   \ \  \le \ \hfill \\
\  \le \  \lim_{T\to \infty} \E{ \frac{1}{T}  \int_0^T 
\left|\langle \delta_\bx \, , \, P_I(H) \, e^{-itH}\,  \delta_{\by} \rangle \right|^2 dt}  \  
\   \le \  A\, e^{- \dist_{\mathcal{H}}(\bx,\by) / \xi} \,  ,
\end{multline} 
as a consequence of \eqref{loc_dyn} in case $ K(\bx,\by) = 2 \dist_{\mathcal{H}}(\bx,\by) / \xi $. 
Summing the resulting bound we get: 
 \begin{equation} \label{Wiener2}
\E{\sum_{\substack{E\in \sigma(H)\bigcap I\\ \bx,\by\in (\Z^d)^n}}  
\frac{e^{\dist_{\mathcal{H}}(\bx,\by) / \xi}}{(1+ |\by|)^{nd+1}}
\left|\langle \delta_\bx \, , \, P_{\{E\}}(H) \, \delta_{\by} \rangle \right|^2 }   \   <  \ \infty \, .
\end{equation} 
Thus, using the Chebyshev principle, there exists a positive function $A(\omega;n)$ of finite mean such that for all $E\in \sigma(H(\omega)) \cap I$ and $\bx \in (\Z^d)^n$: 
\begin{multline} \label{eq:efbound}
  \left| \psi_{E,\by}(\bx) \right|^2 \; \indfct[ \, \by \in (\Z^d)^n\;  \mbox{is $ E$-representative}]  \\ \  \le   \  
 A(\omega;n) \,  (1 +|\by|)^{2nd+2} \ e^{- \dist_{\mathcal{H}}(\bx,\by) /\xi} \, . 
\end{multline}
Since for any $E\in \sigma (H(\omega))$ and any $y$ in the above collection  the function $ \psi_{E,\by} $ is non-negligble  at $y$ in the sense of~\eqref{eq:nonneg}  this proves the last claim which is made in  Theorem~\ref{thm:corr->spec}.
\end{proof}   

\section{An averaging principle}\label{app:averaging}

In the proof of Theorem~\ref{thm:Q<G} we made use of an averaging principle, which is useful for conditional averages where  the value of the potential at a single site, $u\in \Z^d$, is redrawn  at fixed values of the other (random) parameters.  This provides a generalization of Lemma~\ref{lem:QF1} which is suitable for multiparticle systems.   Following is its derivation.

In the statement of the result, use is made 
of the holomorphic  family of operators  
\begin{equation}\label{eq:defKuz}
	K_u(z) \ = \   \sqrt{N_u} (  H_\Omega  -z)^{-1} \!\sqrt{N_u}  \,   
\end{equation} 
which we take as acting  in the range of $ N_u $ within $\mathcal{H}^{(n)}_\Omega $. 
The operator valued function is  analytic in $ \C\backslash \sigma(H_\Omega) $.   For real $z=E\in \R$ the operators are self adjoint, and  for convenience of the (local) argument which follows, we employ an auxiliary index $\nu$ to label the eigenvalues ($\kappa_\nu(E) $),  counted without multiplicity, and the corresponding projection operators ($ \Pi_{\kappa_\nu}(E) $).  Questions of order do not matter here since we shall always be summing over $\nu$.  

Along $ \R \backslash \sigma(H_\Omega) $   
analyticity implies that for all but possibly finitely many exceptional values of $ E  $, at which level-crossings occur,
both eigenvalues and projections may be analytically continued to a small neighborhood of $ E $ on which the spectral representation 
\begin{equation} \label{eq:repK}
	K_u(z) \ = \  
	\sum_{\nu}  \kappa_\nu(z) \, \Pi_{\kappa_\nu}(z) \, , 
	\end{equation} 
holds.  The operators $\Pi_{\kappa_\nu}(z) $ are projections, satisfying  $ \Pi_{\kappa_\nu}(z) \, \Pi_{\kappa_{\nu'}}(z)= \Pi_{\kappa_\nu}(z) \delta_{\nu,\nu'} $, though they are orthogonal projections only for real $z$; cf.~\cite[Ch.~II]{Kato}.

\begin{lemma} [\, =\, Lemma~\ref{lem:specavQ}\, ]
Let $ s \in [0,1) $ and  $ \Omega \subset \mathbb{Z}^d $ be a finite set, and  $u $ a point in $ \Omega $. Then for all $ \bx \in \mathcal{C}^{(n)}(\Omega;u) $ and $ \by \in \mathcal{C}^{(n)}(\Omega) $:
\begin{align} \label{refresh2}
	& N_u(\bx) ^{1+\frac{s}{2}}\ \, \int_\mathbb{R} Q_\Omega(\bx,\by;I,s) \Big|_{V(x)\to V(x)+v}\, \frac{dv}{|v|^s}  \\
	& = \ \frac{1}{|\lambda|^{1-s} } \int_I \,  \sum_{\nu}
	\langle \delta_\bx ,   \Pi_{\kappa_\nu}(E)   \,  \delta_\bx\rangle^{1-s} \, \left|\langle \delta_\bx  ,  \Pi_{\kappa_\nu}(E)  \sqrt{N_u} \, (  H_\Omega  - E)^{-1}   \delta_\by \rangle\right|^s \, dE \, . \notag
\end{align} 
\end{lemma}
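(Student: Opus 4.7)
My plan is to establish (\ref{refresh}) through rank-one spectral perturbation theory, combined with a change of variables from the coupling parameter $v$ to the spectral parameter $E$. The operator $H(v) := H_\Omega + \lambda v N_u$ is a rank-at-most-$|\Omega|$ perturbation of $H_\Omega$ in the direction $N_u$, and for generic $v$ its eigenvalues $E_j(v)$ are simple, so I would first argue by continuity that it suffices to establish the identity after removing the measure-zero exceptional set.

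For such generic $v$, with normalized simple eigenvectors $\psi_j(v)$, the eigenfunction correlator reads
\[
Q_\Omega(\bx,\by;I,s)\big|_v \;=\; \sum_{j\,:\,E_j(v)\in I}|\psi_j(v)(\bx)|^{2-s}\,|\psi_j(v)(\by)|^s.
\]
Since $\bx\in\mathcal{C}^{(n)}(\Omega;u)$, any eigenvector $\psi_j$ vanishing on all configurations with $N_u\ge 1$ contributes nothing; so the sum reduces to those branches with $\|\sqrt{N_u}\psi_j\|>0$. For these branches the Feynman--Hellmann formula gives $dE_j/dv = \lambda\|\sqrt{N_u}\psi_j\|^2\ne 0$, so on each branch I can change variables to $E=E_j(v)$ with Jacobian $dv = dE/(|\lambda|\,\|\sqrt{N_u}\psi_j\|^2)$.

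Next I would set up the bijection between the branches and the spectrum of $K_u(E)$. Rewriting the eigenvalue equation as $(H_\Omega-E)\psi_j = -\lambda v\,N_u\psi_j$ and hitting it with $\sqrt{N_u}(H_\Omega-E)^{-1}$ shows that $\sqrt{N_u}\psi_j$ is an eigenvector of $K_u(E)$ with eigenvalue $\kappa_\nu = -1/(\lambda v)$; conversely each nonzero $\kappa_\nu\in\sigma(K_u(E))$ produces, via $v_\nu = -1/(\lambda\kappa_\nu)$, a branch reaching $E$. The normalized eigenvector of $K_u(E)$ is $\phi_\nu = \sqrt{N_u}\psi_j/\|\sqrt{N_u}\psi_j\|$. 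Under this correspondence the weight $|v|^{-s}$ becomes $|\lambda|^s|\kappa_\nu|^s$, and each branch at fixed $E$ contributes once per eigenvalue of $K_u(E)$.

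With the dictionary in place, I would compute the two building blocks of the right-hand side directly: using $\phi_\nu(\bx)=\sqrt{N_u(\bx)}\,\psi_j(\bx)/\|\sqrt{N_u}\psi_j\|$ yields $\langle\delta_\bx,\Pi_{\kappa_\nu}\delta_\bx\rangle = N_u(\bx)\,|\psi_j(\bx)|^2/\|\sqrt{N_u}\psi_j\|^2$, and using $(H_\Omega-E)^{-1}N_u\psi_j = \kappa_\nu\psi_j$ gives
\[
\langle\delta_\bx,\Pi_{\kappa_\nu}\sqrt{N_u}(H_\Omega-E)^{-1}\delta_\by\rangle \;=\; \frac{\sqrt{N_u(\bx)}\,\kappa_\nu\,\psi_j(\bx)\,\psi_j(\by)}{\|\sqrt{N_u}\psi_j\|^2}.
\]
Assembling these with the Jacobian and the factor $|v|^{-s}$ gives an identity in which every appearance of $\|\sqrt{N_u}\psi_j\|$ cancels, leaving an expression in the data $(\psi_j(\bx),\psi_j(\by),\kappa_\nu)$ that matches both sides up to an overall power of $N_u(\bx)$. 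The main obstacle is precisely the combinatorial bookkeeping of these $N_u(\bx)$ factors together with the weight carried by $\sqrt{N_u}\delta_\bx = \sqrt{N_u(\bx)}\,\delta_\bx$, which must be tracked carefully when $u$ is multiply occupied. A secondary technical point is treating the isolated level crossings and poles at $E\in\sigma(H_\Omega)$, both of which are handled by continuity and the fact that they form a set of measure zero in $(v,E)$.
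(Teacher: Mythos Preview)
Your approach is essentially the paper's: both rest on the Birman--Schwinger correspondence between eigenvalue branches of $\widehat H(v)=H_\Omega+\lambda v N_u$ and eigenvalues $\kappa_\nu=-1/(\lambda v)$ of $K_u(E)$, followed by the change of variables from $v$ to $E$. The only difference is packaging: where you invoke Feynman--Hellmann for the Jacobian $dE_j/dv=\lambda\|\sqrt{N_u}\psi_j\|^2$, the paper derives the equivalent residue identity $\sqrt{N_u}\,\widehat P_{\{E\}}=(\kappa_\nu/\kappa'_\nu)\,\Pi_{\kappa_\nu}\sqrt{N_u}(H_\Omega-E)^{-1}$ from the Krein formula and encodes the change of variables through the delta-function relation $\delta(\lambda v+\kappa_\nu(E)^{-1})$.
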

\begin{proof}
Let us consider the family of operators  $\widehat H_\Omega (v)  := H_\Omega + \lambda v N_u $  with the extra parameter $v$, which in effect modifies the value of the potential $V_u$ of $H_\Omega$.  A standard resolvent identity leads to the Krein formula:
\begin{equation} \label{krein}
	\sqrt{N_u} \left(\widehat H_\Omega(v)-z\right)^{-1} = \left( 1 +\lambda v\, K_u(z) \right)^{-1}  \sqrt{N_u} \, ( H_\Omega  -z )^{-1}  \, .  
\end{equation}
 
For each $v$,  the family of operators $\sqrt{N_u} \,  \left(\widehat H_\Omega(v)-z\right)^{-1}$ is an analytic in  $z   \in \C\backslash \sigma (\widehat H_\Omega(v))$,  with residues given by 
$ \sqrt{N_u} \, \widehat P_{\{E\}} $, where  
$\widehat P_{\{E\}} $ are the projection operators on the eigenspaces of 
$\widehat H_\Omega(v)$, at eigenvalues $E\in \sigma (\widehat H_\Omega(v)) $.   
The operator valued function on the right-hand side of \eqref{krein}, is singular if and only if  either: 
\be  \label{spec1}
  -(\lambda v)^{-1} \in \sigma(K_u(E))  
\ee  
or
\be  \label{spec2} 
 E\in \sigma ( H_\Omega) \, .  
\ee  
and we argue next  that the singularities  
at $\sigma(H_\Omega)$ are removable  for almost every value of $v\in \R$.

Since the spectrum of $\widehat H_\Omega(v)$  is monotone non-decreasing in $v$,  its spectral projections can be decomposed  ($\widehat P_{\{E\}} = \widehat  P^{(\rm mon)}_{\{E\}}  +\widehat P^{(\rm fix)}_{\{E\}}$, for $E\in \sigma (\widehat H_\omega (v) )$)  into a part for which the corresponding spectrum is strictly monotone and  another corresponding to spectrum which does not move with $v$.    (Monotonicity plays here only an auxiliary role, and could also be replaced by analyticity in $v$ or just smoothness.)
At Lebesgue almost every $v\in \R$ the spectrum of $\widehat H_\Omega(v)$ corresponding to $\widehat P^{(\rm mon)}_{\{E\}}$ is disjoint from $\sigma (H_\Omega)$, in which case the singularities which  
$(\widehat H_\Omega(v)-z)^{-1}$ may have  at $\sigma (H_\Omega )$ are only due to the fixed part of the spectrum,   with the corresponding residues of 
$\sqrt{N_u} \, (\widehat H_\Omega(v)-z)^{-1}$
being given by $\sqrt{N_u} \,  \widehat P^{(\rm fix)}_{\{E\}}$.
However, functions in the range of the projections $\widehat P^{(\rm fix)} $  are annihilated by $N_u$, and thus  $ \sqrt{N_u} \, \widehat   P^{(\rm fix)}_{\{E\}}  =0$.  Since  $\sqrt{N_u} \, (\widehat H_\Omega(v)-z)^{-1}$ has only simple pole singularities, the vanishing of the residue at $E$ implies that the singularity of the expression on the right-hand side of \eqref{krein} is removable there. 

 In conclusion, for almost every $v\in \R$,  even if there is an overlap in the spectra of   $\widehat H_\Omega(v)$ and $H_\Omega$,  $ \sqrt{N_u} \, \widehat   P_{\{E\}} =0$ for all $E\in \sigma(H_\Omega)$.


We now consider $E\not\in\sigma ( H_\Omega) $ at which \eqref{spec1} holds. Such energies will not coincide with exceptional points of level-crossing for $ K_u(E) $ for almost all $ v \in \R $. 
Therefore
 a simple residue calculation based on \eqref{eq:repK} yields: 
\begin{equation}\label{eq:repproj}
	\sqrt{N_u} \ \widehat  P_{\{E\}} = \frac{\kappa_\nu(E)}{\kappa'_\nu (E)} \,\Pi_{\kappa_\nu}(z)   \sqrt{N_u} \, (  H_\Omega  - E)^{-1}  
\end{equation}
where $\kappa_\nu(E)$ is that eigenvalue of $K_u(E)$ at which  $  -(\lambda v)^{-1}  = \kappa_\nu(E) $ holds, and $\kappa_\nu'(E)$ is the derivative of the eigenvalue with respect to $E$ evaluated at that particular point. 
In particular, 
\be 
\sqrt{N_u} \widehat  P_{\{E\}} \sqrt{N_u} \ = \  \frac{\kappa_\nu(E)^2}{\kappa_\nu'(E)}\,  \Pi_{\kappa_\nu}(E) 
\ = \   -\left[ \frac{d}{d\, E}\,  \kappa_\nu(E)^{-1} \right] \, \Pi_{\kappa_\nu}(E) \, .
\ee 

Using the relation $\delta( \, g(E)\,) \ g'(E) \ =\   \sum_{u\in g^{-1} (\{0\}) } \delta(E-u) $, we find that 
for any $ f $ which is continuous on a neighborhood of $ \sigma(\widehat H_\Omega) $:
\begin{multline}   \label{sum} 
	 \sum_{E \in \sigma(\widehat H_\Omega)\cap I } \langle \delta_\bx ,  \sqrt{N_u}\widehat P_{\{E\}} \sqrt{N_u}  \delta_\bx \rangle\, f(E)   \\ =   \sum_\nu \int_I \delta( \lambda v +  \kappa_\nu(E)^{-1}) \, 
	 \langle \delta_\bx , \Pi_{\kappa_\nu}(E)  \, \delta_\bx \rangle \,  f(E) \, dE \, . 
\end{multline}

The eigenfunction correlator, which is defined as:
\begin{multline}
	N_u(\bx)^{1+\frac{s}{2}}\, Q_\Omega(\bx,\by;I,s)  \Big|_{V(x)\to V(x)+v}\\ =  \sum_{E \in \sigma(\widehat H_\Omega)\cap I } \langle \delta_\bx ,  \sqrt{N_u} \widehat P_{\{E\}} \sqrt{N_u}  \delta_\bx \rangle^{1-s} 
		\left|\langle \delta_\bx ,  \sqrt{N_u} \widehat P_{\{E\}}  \delta_\by \rangle\right|^s \, , 
\end{multline}
can be presented in the form of \eqref{sum}, with $f(E)$ 
the function   which is defined in the neighborhood of the zeros of $\lambda v + \kappa_\nu(E)^{-1} $
as  
\begin{eqnarray} 
f(E) & =&   \left|	\frac{1}{\kappa_\nu(E)} \,  \frac{\langle \delta_\bx ,  \Pi_{\kappa_\nu}(E)   \sqrt{N_u} \, (  H_\Omega  - E)^{-1}   \delta_\by \rangle}{ \langle \delta_\bx ,  \Pi_{\kappa_\nu}(E)   \,  \delta_\bx \rangle} 
\right|^s \nonumber  \\[2ex]  
& =&  \left| \frac{ \langle \delta_\bx ,  \sqrt{N_u} P_{\{E\}}  \delta_\by \rangle}{ \langle \delta_\bx ,  \sqrt{N_u} P_{\{E\}} \sqrt{N_u}  \delta_\bx \rangle} \right|^s \, .
\end{eqnarray}
The claim follows now upon integration over $ v $, of the expression which is obtained by substituting the above function in \eqref{sum}.
  \end{proof}

 \section{On the Wegner estimate} \label{Sect:Wegner}

As noted in Section~\ref{sec:BGF}, the Wegner estimate is not being explicitly used in the Fractional Moment Analysis.  However the statement is of  intrinsic interest, and it provides also a useful tool for various other purposes .  
The basic bound has already been extended to multiparticle systems \cite{ChSu2,Ki07}.   Our purpose here is to comment on the subject from the perspective of the approach used in Section~\ref{sec:BGF}.  

A local version of the bound is  the following statement of finiteness of the conditional mean of the  density of the spectral measure associated with the vector $\delta_{\bx}$ for an arbitrary configuration $\bx \in \mathcal{C}^{(n)}(\Omega) $, obtained by  averaging  over one of the  potential variables associated with  the occupied sites, $ V(x_j; \omega) $, $j\in \{1,...,n\}$. 
In the following, for a self adjoint operator $H$  we denote by $ P_I(H) $ the spectral projection  associated with a Borel set $ I $  whose Lebesgue measure is denoted by $ | I |$.

 \begin{theorem}\label{thm:Wegner1}
Let $u$ be a site in $\Omega  \subseteq \mathbb{Z}^{d}$,  $ \bx \in\mathcal{C}^{(n)}(\Omega;u) $ a configuration with at least one particle at $u$, and 
  \begin{equation}
 	\mu_{\bx}(I; \omega) :=  \left\langle \delta^{(n)}_\bx , P_I(H^{(n)}_\Omega(\omega) ) \,  \delta^{(n)}_\bx \right\rangle \, ,   
\end{equation}
the spectral measure of the operator $ H^{(n)}_\Omega(\omega) $ which is associated with the vector $ \delta_\bx^{(n)}$ and a Borel set $ I \subseteq \mathbb{R}$.    Then the average of $\mu_{\bx}(I; \omega)$  over the values of the potential at $u$ satisfies:   
 \begin{equation}\label{eq:frac1}
 	\int_\R   \mu_{\bx}(I;\omega)
	\varrho(V_u) dV_u\ \equiv \ 
	\mathbb{E}\left( \mu_{\bx}(I,\omega) \, | \, \left\{ V(v) \right\}_{v\neq u} \right) 
	\leq \frac{\|\varrho\|_\infty }{|\lambda|\,  N_u(\bx)} \, |I |  \, . 
\end{equation}
 \end{theorem}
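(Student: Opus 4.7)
The plan is to reduce the Wegner-type bound to the spectral averaging principle, exploiting that $ \delta_\bx^{(n)} $ is an eigenvector of the occupation number operator $ N_u $.

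First, I would condition on the variables $ \{V(v)\}_{v\neq u} $ and write the Hamiltonian in the form $ H_\Omega(\omega) = A_0 + \lambda V(u)\, N_u $, with $ A_0 $ a self-adjoint operator independent of $ V(u) $.  Since $ \bx \in \mathcal{C}^{(n)}(\Omega;u) $, the occupation number satisfies $ N_u(\bx) \geq 1 $ and $ \delta_\bx $ is an eigenvector of $ \sqrt{N_u} $ with eigenvalue $ \sqrt{N_u(\bx)} $. Consequently
\begin{equation*}
\mu_\bx(I;\omega) \;=\; \langle \delta_\bx,\, P_I(H_\Omega)\,\delta_\bx\rangle \;=\; \frac{1}{N_u(\bx)}\, \bigl\langle \delta_\bx,\, \sqrt{N_u}\, P_I(H_\Omega)\,\sqrt{N_u}\,\delta_\bx\bigr\rangle \, .
\end{equation*}

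Second, I would invoke the spectral averaging identity
\begin{equation*}
\int_\mathbb{R} \bigl\langle \phi,\, \sqrt{N_u}\, P_I(A_0 + \lambda v N_u)\,\sqrt{N_u}\,\phi \bigr\rangle \, dv \;\le\; \frac{|I|}{|\lambda|}\,\|\phi\|^2\, ,
\end{equation*}
valid for any vector $ \phi $. This is the operator-valued form of the classical Kotani/Simon-Wolff spectral averaging identity. In the present multiparticle setting $ N_u $ is not rank-one, so the bound needs to be derived via the Krein resolvent identity already used in Appendix~\ref{app:averaging}: writing $ K_u(E+i\varepsilon) = \sqrt{N_u}(A_0 - E - i\varepsilon)^{-1}\sqrt{N_u} $ as a Herglotz operator-valued function and diagonalizing with eigenvalues $ \kappa_\nu(E+i\varepsilon) \in \mathbb{C}^+ $, one has $ \sqrt{N_u}(A_0 + \lambda v N_u - E - i\varepsilon)^{-1}\sqrt{N_u} = (1 + \lambda v K_u)^{-1}K_u $ and hence
\begin{equation*}
\int_\mathbb{R} \Im\,\frac{\kappa_\nu(E+i\varepsilon)}{1 + \lambda v\, \kappa_\nu(E+i\varepsilon)}\, dv \;=\; \frac{\pi}{|\lambda|}\, ,
\end{equation*}
by the Poisson kernel identity. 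Integrating this over $ E\in I $ and passing to $\varepsilon \downarrow 0$ through the Stone formula yields the claimed operator bound.

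Third, combining the two steps with the crude estimate $ \varrho(v)\le \|\varrho\|_\infty $ gives
\begin{equation*}
\int_\mathbb{R} \mu_\bx(I;\omega)\,\varrho(V_u)\, dV_u \;\leq\; \frac{\|\varrho\|_\infty}{N_u(\bx)}\int_\mathbb{R} \bigl\langle \delta_\bx, \sqrt{N_u}\, P_I(H_\Omega)\,\sqrt{N_u}\,\delta_\bx\bigr\rangle\, dv \;\leq\; \frac{\|\varrho\|_\infty\,|I|}{|\lambda|\, N_u(\bx)}\, ,
\end{equation*}
which is the stated bound \eqref{eq:frac1}.

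The only step that requires real care is the operator-valued spectral averaging in the second paragraph, because the correlations among potential values at different configurations mean we cannot invoke a rank-one Simon-Wolff identity in a direct form; the Krein-based argument of Appendix~\ref{app:averaging} handles this, with the key observation being that each Herglotz eigen-branch $ \kappa_\nu $ of $ K_u $ contributes the same universal Poisson normalization $ \pi/|\lambda| $, so the sum over eigenvalues simply reproduces $ \mathbf{1} $ on the range of $ N_u $ and the factor $ N_u(\bx) $ appears from the normalization of $ \delta_\bx $ as an eigenvector of $ N_u $.
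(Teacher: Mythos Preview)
Your proof is correct and follows essentially the same two-step structure as the paper's: rewrite $\mu_\bx(I)$ as $N_u(\bx)^{-1}\langle \delta_\bx, \sqrt{N_u}\, P_I(H_\Omega)\, \sqrt{N_u}\, \delta_\bx\rangle$ using that $\delta_\bx$ is an eigenvector of $N_u$, then apply spectral averaging. The only difference is cosmetic: the paper simply quotes the averaging principle in the operator-norm form $\int_\R \|B^\dagger P_I(A+\tau N) B\|\,\varrho(\tau)\,d\tau \le \|\varrho\|_\infty\, |I|$ for $N\ge B^\dagger B$ (citing \cite{Ko86,SiWo86,CoHi94}), whereas you sketch its derivation via the Krein formula and Herglotz eigen-branches.
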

 \begin{proof} 
 Bearing in mind the dependence of the Hamiltonian on $V_u$ as expressed in \eqref{eq:HN}, one may apply the averaging principle (cf.\ \cite{Ko86,SiWo86,CoHi94})  which states that for any self-adjoint operator $ A $ and a bounded operator $ N \geq B^\dagger B $ on a Hilbert space:\begin{equation}\label{eq:specav}
	\int_\mathbb{R}  \left\| B^\dagger \, P_I(A + \tau \, N ) \, B  \right\|   
	\  \varrho(\tau) \, d\tau \leq \|\varrho\|_\infty  \, |I | \, .
\end{equation}
The claim, \eqref{eq:frac1},  follows from \eqref{eq:specav}
 by observing that for 
 $  \bx \in\mathcal{C}^{(n)}(\Omega;u) $
one has: $
 	 N_u(\bx)\, \, \langle \delta_\bx \, ,  P_{I}(H_\Omega) \, \delta_\bx \rangle = \langle   \delta_\bx \, ,  N_u^{1/2} \,  P_{I}(H_\Omega) \, N_u^{1/2} \, \delta_\bx \rangle $.
 \end{proof}

\section*{Acknowledgements}
With pleasure we thank Yuri Suhov for  useful discussions of multiparticle localization,  and Shmuel Fishman and Uzy Smilansky for stimulating discussions of related topics  on a visit to the  Center for Complex Systems at Weizmann Inst. of Science.  Support for the latter  was received from the BSF grant 710021.  The work was supported in parts by the NSF grants DMS-0602360 (MA), DMS-0701181 (SW) and a Sloan Fellowship~(SW).   


\bibliographystyle{plain}

\end{document}